\newtheorem{theorem}{Theorem}
\newtheorem{lemma}{Lemma}
\newtheorem{definition}{Definition}
\newtheorem{observation}{Observation}
\newcolumntype{Y}{>{\raggedright\arraybackslash}X}
\newcolumntype{Z}{>{\centering\arraybackslash}X}
\newcommand{\EDF}{\ensuremath{\mathsf{EDF}}\xspace}
\newcommand{\LLF}{\ensuremath{\mathsf{LLF}}\xspace}
\newcommand{\OPT}{\ensuremath{\mathsf{OPT}}\xspace}
\newcommand{\Earlyfit}{\ensuremath{\mathsf{EarlyFit}}\xspace}
\newcommand{\Mediumfit}{\ensuremath{\mathsf{MediumFit}}\xspace}
\newcommand{\A}{\ensuremath{\mathsf{A}}\xspace}
\newcommand{\double}{{\sf Double}\xspace}
\newcommand{\OO}{\ensuremath{\mathcal{O}}}
\newcommand{\II}{\ensuremath{\mathcal{I}}}
\newcommand{\N}{\ensuremath{\mathbb{N}}}
\newcommand{\eps}{\varepsilon}
\DeclareMathOperator{\ex}{ex}
\title{%New Results on Minimizing the Number of Machines in Online
  %Deadline Scheduling
  New Results on Online Resource Minimization\thanks{This research was supported by the
    German Science Foundation (DFG) under contract  ME 3825/1. The
    third author was supported by the DFG within the research
    training group `Methods for Discrete Structures' (GRK 1408).}}
\author{
Lin Chen\thanks{Technische Universit\"at M\"unchen, Zentrum
   f\"ur Mathematik, Garching, Germany. Email:
   \texttt{nmegow@ma.tum.de}, \texttt{chenlin198662@gmail.com}.}
 \and Nicole Megow\footnotemark[2]  \and Kevin Schewior\thanks{Technische Universit\"at Berlin, Institut f\"ur
  Mathematik, Berlin, Germany. Email:
  \texttt{schewior@math.tu-berlin.de}.}}
\date{\today}
\begin{document}

\maketitle
\thispagestyle{empty}

\begin{abstract}
We consider the online resource minimization problem in which jobs
with hard deadlines arrive online over time at their release
dates. The task is to determine a feasible schedule on a minimum
number of machines. 

We rigorously study this problem and derive various algorithms with small constant competitive ratios for interesting restricted problem variants. %as well as a general $\OO(\log n)$-competitive algorithm for the preemptive general problem.
As the most important special case, we consider scheduling jobs with
agreeable deadlines. We provide the first constant-ratio competitive
algorithm for the non-preemptive setting, which is of particular
interest with regard to the known strong lower bound of~$n$ for the
general problem. For the preemptive setting, we show that the natural
algorithm \LLF achieves a constant ratio for agreeable jobs, while in general it has a lower bound of $\Omega(n^{1/3})$.

We also give an $\OO(\log n)$-competitive algorithm for the general preemptive problem, which improves upon a known $\OO(\log (p_{max}/p_{min}))$-competitive algorithm. Our algorithm maintains a dynamic partition of the job set into loose and tight jobs and schedules each (temporal) subset individually on separate sets of machines. The key is a characterization of how the decrease in the relative laxity of jobs influences the optimum number of machines. To achieve this we derive  a compact expression of the optimum value, which might be of independent interest. We complement the general algorithmic result by showing lower bounds that rule out that other known algorithms may yield a similar performance~guarantee.

%We also present a rigorous study of interesting restricted problem variants and show small constant competitive ratios. Here, we include the non-preemptive problem for which in general strong lower bounds show that no online algorithm admits a competitive ratio sublinear in~$n$. We investigate instances in which the feasible time intervals for jobs form a proper interval graph and instances in which either the processing times or the deadlines of all jobs are equal. Furthermore, we quantify a tradeoff between the minimum relative laxity of jobs and an achievable competitive~ratio.
\end{abstract}

\clearpage
\setcounter{page}{1}

\section{Introduction} Minimizing the resource usage is a key  to the
achievement of economic, environmental or societal goals.
%
%workspace procurement, electricity consumption, where
%the controllable portion of total costs varies with the maximum amount
%of resource procured. 
%
We consider the fundamental problem of minimizing the number of
machines that is necessary for feasibly scheduling jobs with
release dates and hard deadlines. We consider the online variant of this problem in which every job
becomes known to the online algorithm only at its release date. We
denote this problem as {\em online machine minimization problem}. We also
consider a {\em semi-online} variant in which we give the
online algorithm slightly more information by giving the optimal
number of machines as part of the input.

We develop several algorithms for preemptive and non-preemptive problem variants, and evaluate
their performance by  the competitive ratio, a widely-used measure that
compares the solution value of an online algorithm with an optimal
offline value. We derive the first constant-competitive 
algorithms for certain variants of (semi-)online machine minimization
and improve several others. As a major special case, we consider online deadline
scheduling with {\em agreeable deadlines}, i.e., when the order of deadlines for all jobs coincides with the order of
their release dates.  We give the first constant-competitive algorithm for scheduling agreeable jobs non-preemptively,
which contrasts to the strong lower bound of $n$ for the general
problem~\cite{Saha13}. We also prove that for the preemptive scheduling problem, one of the most natural algorithms called {\em Least Laxity First} (\LLF) achieves a constant ratio for agreeable jobs, in contrast with its lower bound of $\Omega(n^{1/3})$ on the general problem.
Our most general result is a $\OO(\log
n)$-competitive algorithm for the preemptive scheduling problem.

\paragraph{Previous results.} The preemptive semi-online machine minimization problem, in which the
optimal number of machines is known in advance, has been
investigated extensively by Phillips et al.~\cite{phillipsSTW02}, and there have
hardly been any improvements since then. Phillips et al.\ show a
general lower bound of $\frac{5}{4}$ and leave a huge gap to the
upper bound $\OO(\log \frac{p_{\max}}{p_{\min}})$ on the competitive
ratio for the so-called {\em Least Laxity First} 
(\LLF) Algorithm. Not so surprisingly, they also rule out that the {\em Earliest Deadline
  First} (\EDF) Algorithm could improve on the performance of \LLF;
indeed they show a lower bound of
$\Omega(\frac{p_{\max}}{p_{\min}})$. It is a wide open question if
preemptive (semi-)online machine minimization admits a constant-factor
online algorithm.

The non-preemptive problem is considerably harder than the preemptive
problem. If the set of jobs arrives online over time, then no algorithm
can achieve a constant or even sublinear competitive ratio~\cite{Saha13}. %A lower
%bound of~$\log_3 \frac{p_{\max}}{p_{\min}}$, with $p_{\max}$ and~$p_{\min}$ being the maximum and minimum
 % processing times, respectively, was shown by Saha~\cite{Saha13}. As
  %observed by Devanur et al.~\cite{DevanurMPY14}, the same construction gives also a lower
  %bound of~$n$. 
However, relevant special cases admit
online algorithms with small constant worst-case guarantees. The
problem with unit processing times was studied in a series of
papers~\cite{KleywegtNST99,ShiY08,KaoCRW12,Saha13,DevanurMPY14} and implicitly in
the context of energy-minimization in~\cite{BansalKP07}. It has been shown that an optimal online
algorithm has the exact competitive ratio~$e\approx
2.72$~\cite{BansalKP07,DevanurMPY14}. For non-preemptive scheduling of
jobs with equal deadlines, an upper bound of~$16$ is given
in~\cite{DevanurMPY14}. We are not aware of any previous work on
online machine minimization restricted to instances with agreeable
deadlines. However, in other contexts, e.g., online buffer management~\cite{JezLSS12} and scheduling
with power management~\cite{AlbersMS14,AngelBC14}, it has been studied
as an important and relevant class of instances.

% A relaxed variant of our online machine minimization problem has been
% studied %by Kleywegt et al.
% in~\cite{KleywegtNST99}. In our terminology,
% their model allows to process jobs simultaneously on several
% machines. They give algorithms with small constant competitive ratios
% and describe an optimal semi-online algorithm.

Online deadline scheduling with extra speedup is another related
problem variant. We are given the same input as in the semi-online
machine minimization problem. Instead of equipping an online algorithm
with additional machines, the given~$m$ machines may run at an increased
speed~$s\geq 1$. The goal is to find an online scheduling algorithm that requires
minimum speed~$s$.  This problem seems much better understood and
(nearly) tight speedup factors below~$2$ are
known~(see~\cite{phillipsSTW02,lamT99,anandGM11}). However, the power
of speed is much stronger than additional machines since it allows
parallel processing of jobs to some extent. Algorithms that perform
well for the speed-problem, such as, \EDF, \LLF and a deadline-ordered
algorithm, do not admit constant performance guarantees for the
machine minimization problem.

We also mention that the offline problem, in
which all jobs are known in advance, can be solved
optimally in polynomial time if job preemption is allowed~\cite{horn74}. %The solution ofthe natural linear programming formulation can be rounded in a straightforward way by assigning fractionally assigned workload in a round robin fashion over all machines within a time unit. 
Again, the problem complexity increases drastically if
preemption is not allowed. In fact, the problem of  deciding whether
one machine suffices to schedule all the jobs non-preemptively is strongly
NP-complete~\cite{GareyJ77}. It is even open if a constant-factor
approximation exists; a lower bound of~$2-\eps$ was given
in~\cite{CieliebakEHWW04}. The currently best known non-preemptive approximation algorithm
is by Chuzhoy et al.~\cite{ChuzhoyGKN04} who propose a sophisticated
rounding procedure for a linear
programming relaxation that yields an approximation factor of $\OO(\sqrt{\frac{\log n}{\log\log
    n}})$. Small constant factors were obtained for special
cases~\cite{CieliebakEHWW04,YuZ09}: in particular, Yu and Zhang~\cite{YuZ09}
give a~$2$-approximation when all release dates are equal and
a~$6$-approximation when all processing times are equal.

\paragraph{Our contribution.} We develop several algorithms and
bounding techniques for the (semi-) online machine minimization
problem. We give an improved algorithm for the general problem and present a rigorous study of interesting restricted
variants for which we show small constant competitive ratios. A summary of our new results can be found in Table~\ref{tab:results}.

\begin{table}
  \begin{small}
  \definecolor{darkred}{rgb}{0.7,0,0}
  \def\arraystretch{1.15}
  \begin{tabularx}{\linewidth}{@{}cZZZZZZZZ@{}}
    \toprule
    & \multicolumn{4}{ c }{preemptive} & \multicolumn{4}{ c }{non-preemptive} \\
    & \multicolumn{2}{ c }{{\footnotesize \OPT} known} & \multicolumn{2}{ c }{{\footnotesize \OPT}
      unknown} & \multicolumn{2}{ c }{{\footnotesize \OPT} known} & \multicolumn{2}{ c
    }{{\footnotesize \OPT} unknown}\\ 
   & LB & UB & LB & UB & LB & UB & LB & UB\\ 
    \midrule
    general & $\frac{5}{4}$~\cite{phillipsSTW02}
    &\textcolor{darkred}{$\mathcal{O}\left(\log n\right)$} &
    $e$~\cite{BansalKP07,DevanurMPY14}
    &\textcolor{darkred}{$\mathcal{O}\left(\log n\right)^\star$} &
    $n$~\cite{Saha13}  & $-$ & $n$~\cite{Saha13}  & $-$ \\
   
    $p_j \leq \alpha(d_j-r_j)$ & $-$ & \textcolor{darkred}{$\frac{1}{\left(1-\alpha\right)^2}$} & $-$ &
    \textcolor{darkred}{$\frac{4}{\left(1-\alpha\right)^2}^\star$}
    & $n$~\cite{Saha13} & $-$ & $n$~\cite{Saha13} & $-$\\

    agreeable deadlines & $-$ & \textcolor{darkred}{$18$} &
    $e$~\cite{BansalKP07,DevanurMPY14} &\textcolor{darkred}{$72^\star$} & $-$
    & \textcolor{darkred}{$ 9$} & $e$~\cite{BansalKP07,DevanurMPY14} & \textcolor{darkred}{$16$} \\

    $d_j\equiv d$ & $-$ &\textcolor{darkred}{$1$} &
    $e$~\cite{BansalKP07,DevanurMPY14}& \textcolor{darkred}{$4^\star$} & $-$
    & \textcolor{darkred}{$5\frac{1}{4}$} & $e$~\cite{BansalKP07,DevanurMPY14}&\textcolor{darkred}{$11\frac{1}{9}$}\\
  
    $p_j\equiv p$ & \textcolor{darkred}{$\frac{8}{7}$} &
    \textcolor{darkred}{$3$} & $e$~\cite{BansalKP07,DevanurMPY14} &
    \textcolor{darkred}{$9.38$}  & \textcolor{darkred}{$\frac{8}{7}$} & \textcolor{darkred}{$4$}
    &$e$~\cite{BansalKP07,DevanurMPY14}& \textcolor{darkred}{$10$}\\
  
    $p_j\equiv 1$ & $-$ & $1$~\cite{KaoCRW12} & $e$~\cite{BansalKP07,DevanurMPY14} &
    $e$~\cite{BansalKP07,DevanurMPY14} & $-$ & $1$~\cite{KaoCRW12} &
    $e$~\cite{BansalKP07,DevanurMPY14} & $e$~\cite{BansalKP07,DevanurMPY14} \\
  \bottomrule
\end{tabularx}
\caption{State of the art for (semi-)online machine
  minimization. New results presented in this paper are colored red
  (and  have no reference). Results for the online setting that are
  directly derived from the semi-online setting are marked with a
  star~($^\star$). Trivial bounds are excluded ($-$).} 
  \label{tab:results}
  \end{small}
\end{table}

As an important special case, we consider the class of instances
with  {\em agreeable deadlines} in which the order of deadlines for all jobs coincides with the order of
their release dates. We give the first constant competitive
ratio for the preemptive and non-preemptive problem variants.
The constant performance guarantee for the non-preemptive
setting is particularly interesting with regard to the strong lower bound of
$n$ for instances without this restriction~\cite{Saha13}. For the preemptive setting, we show that the natural algorithm \LLF admits a constant performance guarantee on agreeable jobs, in contrast with its non-constant ratio on the general problem. We also quantify how the tightness of jobs influences the performance
of algorithms such as \EDF and \LLF, which might be of independent interest.

%Other special cases we consider are jobs with equal
%processing times (improving on an earlier result by~\cite{DevanurMPY14}) or with a uniform deadline. 

Our most general result is a $\OO(\log n)$-competitive algorithm for the unrestricted preemptive online
problem. This improves on the $\OO(\log
(p_{\max}/p_{\min}))$-competitive algorithm by Phillips et
al.~\cite{phillipsSTW02,phillipsSTW97}, which was actually
obtained for the semi-online problem in which the optimal
number of machines is known in advance. But as we shall see it can be
used to solve the online problem as well. In fact, we show that the
online machine minimization problem can be reduced to the semi-online
problem by losing at most a factor~$4$ in the competitive ratio. This
is true for preemptive as well as non-preemptive~scheduling.

Our algorithm for the general problem maintains a dynamic partition of jobs into
loose and tight jobs and schedules each (temporal) subset
individually on separate machines. The loose jobs are scheduled
by \EDF whereas the tight jobs require a more sophisticated
treatment. The key is a characterization of how the decrease in the
relative laxity of jobs influences the optimum number of machines. To
quantify this we derive a compact expression of the optimum value,
which might be of independent interest. 
We complement this algorithmic result by showing strong lower bounds on \LLF or any deadline-ordered algorithm. 

We remark that our algorithms run in polynomial time (except for the non-preemptive online algorithms). As a side
result, we even improve upon a previous approximation result for
non-preemptive offline machine minimization with jobs of equal processing time when the optimum is not
known. Our semi-online $4$-competitive algorithm can be easily turned into an offline~$4$-approximation which
improves upon a known~$6$-approximation~\cite{YuZ09}.

\paragraph{Outline.} In Section~\ref{mp-sec:prelim}, we introduce some notations and state basic results used throughout the paper. In Section~\ref{sec:special-cases1}, we give constant-competitive algorithms for the problem of scheduling agreeable jobs. We continue with the other special cases of equal processing times and uniform deadlines in Sections~\ref{sec:special-cases2} and~\ref{sec:special-cases3}, respectively. Our general $\OO(\log n)$-competitive
algorithm is presented in Section~\ref{sec:general logn}. We finally present some lower bounds in Section~\ref{sec:LB}.

%\noindent {\bf Problem definition.} 
\section{Problem Definition and Preliminaries}
\label{mp-sec:prelim}

\paragraph{Problem definition.} Given is a set
of jobs~$J=\{1,2,\ldots,n\}$ where each job~$j \in J$ has a processing
time~$p_j\in \N$, a release
date~$r_j\in \N$ which is the earliest possible time at which the job can be
processed, and a deadline~$d_j\in \N$ by which it must be
completed. The task is to open a minimum number of machines such that
there is a feasible schedule in which no job misses its deadline. In a
feasible schedule each job~$j\in J$ is scheduled for~$p_j$ units of
time within the time window~$[r_j,d_j]$. Each opened machine can process at most one job at
the time, and no job is running on multiple machines at the same
time. When we allow job preemption, then a job can be preempted at any
moment in time and may resume processing later on the same or any
other machine. When preemption is not allowed, then a job must run
until completion once it has started. %The goal is to find a feasible
%schedule on a minimum number of machines. 

To evaluate the performance of our online algorithms we perform a {\em
  competitive analysis} (see e.g.~\cite{borodinEY98}). We call an
online algorithm~\A  $c$-{\em competitive} if~$m_{\A}$ machines with~$m_{\A} \leq c\cdot m$ suffice to
guarantee a feasible solution for any instance that admits a
feasible schedule on~$m$ machines.

\paragraph{Notation.} %For a given instance, we let $m$ denote the optimal number of machines. 
We let~$J(t):=\{j\in J\,|\, r_j\leq t\}$ denote the set of all jobs that have been released by time~$t$. For some job set~$J$ we let $m(J)$ denote the minimum number of machines needed to feasibly schedule~$J$. If $J$ is clear from the context, we also write more concisely~$m$ for $m(J)$ and $m(t)$ for $m(J(t))$. %Furthermore, we say that an instance $J$ has {\em agreeable deadlines} if $r_j<r_{j^\prime}$ implies $d_j\leq d_{j^\prime}$, for every $j,j^\prime\in J$. In other words, the feasible time windows of jobs form a proper interval graph. 

%In the design of our algorithms 
Consider an algorithm $\mathsf{A}$ and the schedule~$\mathsf{A}(J)$ it obtains for $J$. We denote the {\em remaining processing time} of a job $j$ at time $t$ by $p_j^{\mathsf{A}(J)}(t)$. % and we leave out $J$ or $\mathsf{A}$ if clear from the context. 
Further, we define the {\em laxity} of~$j$ at time~$t$ as~$\ell^{\mathsf{A}(J)}_j(t)=d_j-t-p_j^{\mathsf{A}(J)}(t)$ and call $\ell_j=\ell_j(r_j)$ the {\em original laxity}. We denote the largest deadline in $J$ by $d_{\max}(J)=\max_{j\in J}{d_j}$. % and omit $J$ if it can be inferred. 
We classify jobs by the fraction that their processing time takes from the feasible time window and call a job $j$ $\alpha$-{\em loose} at $t$ if $p_j(t)\le \alpha (d_j-r_j)$ or $\alpha$-{\em tight} at $t$, otherwise. Here, we drop $t$ if $t=r_j$. 
Finally, a job is called {\em active} at any time it is released but unfinished.

We analyze our algorithms by estimating the total processing volume (or workload) that is assigned to certain time intervals~$[t,t^\prime)$. To this end, we denote by $w_{\mathsf{A}(J)}(t)$ the total processing volume of $J$ that~$\mathsf{A}$ assigns to $[t,t+1)$ (or simply to $t$). Since our algorithms make decisions at integral time points, this is simply the number of assigned jobs. Further, $W_{\mathsf{A}(J)}(t)$ denotes the total remaining processing time of all unfinished jobs (including those not yet released). %References to $J$ or $\mathsf{A}$ can be omitted in clear contexts.
We omit the superscripts whenever $J$ or $\mathsf{A}$ are non-ambiguous, and we use $\OPT$ to indicate an optimal~schedule.
%And we use $\OPT$ in the subscript to refer to an optimal schedule.

\paragraph{Reduction to the semi-online problem.} In the following we show that we may assume that the optimum number of
machines~$m$ is given by losing at most a factor~$4$ %(resp.~$e$) 
in the competitive ratio. 
% of a deterministic (resp. randomized) online algorithm.

We employ the general idea of {\em doubling} an unknown parameter,
which has been successfully employed in solving various online optimization
problems~\cite{chrobakK06}. In our case, the unknown parameter is
the optimal number of machines. The idea is to open additional
machines at any time that the optimum solution has doubled.

Let $\mathsf{A}_{\alpha}(m)$ denote an~$\alpha$-competitive algorithm for the
semi-online machine minimization problem given the optimum number of
machines~$m$. Then our algorithm for the online problem is as follows.

\medskip
{\bf Algorithm \double}: 
\begin{compactitem}
\item Let~$t_0=\min_{j\in   J}r_j$. 
  For $i=1,2,\ldots$ let $t_i=\min\{t\,|\, m(t) > 2m(t_{i-1})\}$. 
\item At any time~$t_i$, $i=0,1,\ldots$, open $2\alpha
  m(t_i)$ additional machines. All jobs with $r_j\in [t_{i-1},t_i)$
  are scheduled by Algorithm $\mathsf{A}_{\alpha}(2m(t_{i-1}))$  on the
  machines opened at time~$t_{i-1}$.
\end{compactitem}
\medskip

Observe that this procedure can be executed online, since the time
points~$t_0,t_1,\ldots$ as well as $m(t_0),m(t_1),\ldots$ can be computed online and
$\mathsf{A}_{\alpha}$ is assumed to be an algorithm for the semi-online problem. Since
the optimal solution increases only at release dates, which are
integral, we may assume that~$t_i\in \N$. Notice also that \double does
not preempt jobs which would not have been preempted by Algorithm~$\mathsf{A}_{\alpha}$.

\begin{theorem}\label{thm: black box}
  Given an $\alpha$-competitive algorithm for the (non)-preemptive semi-online machine minimization problem, \double is $4\alpha$-competitive for  (non)-preemptive online machine minimization. \end{theorem}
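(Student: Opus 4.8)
I would verify the two defining requirements of $4\alpha$-competitiveness separately: that \double always produces a \emph{feasible} schedule, and that it opens at most $4\alpha m$ machines, where $m=m(J)$ is the offline optimum.

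For feasibility, the starting observation is that the job sets $J_i:=\{j\in J: t_{i-1}\le r_j<t_i\}$ — where $t_0,\dots,t_k$ are the batch times that actually occur and the last interval is read as $[t_k,\infty)$ — partition $J$, and \double schedules each $J_i$ independently by running $\mathsf{A}_\alpha(2m(t_{i-1}))$ on the $2\alpha m(t_{i-1})$ machines opened at time $t_{i-1}$. The crux is the claim $m(J_i)\le 2m(t_{i-1})$. Indeed, $m(\cdot)$ is nondecreasing, and by definition $t_i$ is the first time with $m(t)>2m(t_{i-1})$, so every $t<t_i$ has $m(t)\le 2m(t_{i-1})$; since release dates and the $t_i$ are integral and $t_i>t_{i-1}$ (as $m(t_i)>2m(t_{i-1})\ge m(t_{i-1})$), we get $t_i-1\ge t_{i-1}$ and $J_i\subseteq J(t_i-1)$, hence $m(J_i)\le m(t_i-1)\le 2m(t_{i-1})$. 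The only genuine boundary case is the final batch $J_{k+1}=\{j:r_j\ge t_k\}$ served by the machines opened at $t_k$: here the absence of a further threshold time means $m(t)\le 2m(t_k)$ for all $t$, hence $m(J_{k+1})\le m(J)\le 2m(t_k)$. In every case $J_i$ is feasible on $2m(t_{i-1})$ machines, so $\mathsf{A}_\alpha$, fed this sufficient bound, returns a feasible schedule of $J_i$ on at most $\alpha\cdot 2m(t_{i-1})=2\alpha m(t_{i-1})$ machines — exactly the batch it was allotted. Distinct batches use disjoint machines and disjoint jobs, and every job of $J_i$ is released at or after $t_{i-1}$, so overlaying the per-batch schedules gives a feasible schedule of $J$; it is non-preemptive whenever $\mathsf{A}_\alpha$ is, as already noted.

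For the machine count, \double opens $\sum_{i=0}^{k}2\alpha\, m(t_i)$ machines in total. From $m(t_i)>2m(t_{i-1})$ we get $m(t_{k-j})<2^{-j}m(t_k)$ for $j\ge 1$, so $\sum_{i=0}^{k}m(t_i)<m(t_k)\sum_{j\ge 0}2^{-j}=2m(t_k)\le 2m$, using $m(t_k)\le m(J)=m$. Hence \double opens fewer than $4\alpha m$ machines, which is the desired bound. (The list $t_0,\dots,t_k$ is finite because $m(t_i)>2^{i}m(t_0)\ge 2^{i}$ for $i\ge1$ while $m(t_i)\le m$; that the times $t_i$ and values $m(t_i)$, and hence the whole run of \double, can be produced online was already argued before the statement.)

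The step needing the most care is the implicit use of $\mathsf{A}_\alpha$ on input $2m(t_{i-1})$, which in general is only an \emph{upper bound} on the true optimum $m(J_i)$ rather than its exact value. This is legitimate under the standard reading that a semi-online algorithm promised $m$ machines delivers a feasible schedule on at most $\alpha m$ machines whenever the instance is feasible on $m$ machines (feasibility on $m$ machines being monotone in $m$); this is the one property of the black box that the argument invokes. The remaining work — the partition into the $J_i$, the two monotonicity inequalities, and the geometric-series estimate — is routine.
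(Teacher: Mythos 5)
Your proof is correct and follows essentially the same route as the paper's: feasibility via $J_i\subseteq J(t_i-1)$ and $m(t_i-1)\le 2m(t_{i-1})$, and the machine count via the geometric series driven by $m(t_i)>2m(t_{i-1})$ and $m(t_k)\le m$. The extra care you take with the final batch, the monotonicity of feasibility in the promised machine count, and the finiteness of $(t_i)$ is all consistent with (and slightly more explicit than) what the paper writes.
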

\begin{proof}
  Let~$t_0,t_1,\ldots,t_k$ denotes the times at which \double opens new machines.
  \double schedules the jobs $J_{i} = \{j\,|\, r_j\in
  [t_{i},t_{i+1})\}$, with $i=0,1,\ldots,k$, using
  Algorithm~$A_{\alpha}(2m(t_i))$ on $2\alpha m(t_{i})$
  machines exclusively. This yields a feasible schedule since an
  optimal solution for $J_{i}\subseteq J(t_{i+1}-1)$ requires at
  most~$m(t_{i+1}-1)\leq 2m(t_i)$ machines and the
  $\alpha$-competitive subroutine~$A_{\alpha}(2m(t_i))$ is
  guaranteed to find a feasible solution given a factor $\alpha$ times
  more machines than optimal, i.e.,  $2\alpha m(t_i)$.

  It remains to compare the number of machines opened by \double with
  the optimal number of machines~$m$, which is at least~$m(t_k)$. By
  construction it holds that $2m(t_i) \leq m(t_{i+1})$, 
  which implies by recursive application that 
  \begin{equation}\label{eq:a}
    2m(t_i) \leq \frac{m(t_k)}{2^{k-i-1}}.
  \end{equation}
  The total number of machines opened by \double is
  \begin{align*}
    \sum_{i=0}^k 2\alpha m(t_i) \ \leq \  \sum_{i=0}^k \alpha
    \frac{1}{2^{k-i-1}} m(t_k) \ 
    % = \  2\alpha \sum_{i=0}^k \ \frac{1}{2^{k-i}} m(t_k)\ 
    = \  2\alpha \sum_{i=0}^k \
    \frac{1}{2^{i}} m(t_k)
    \ \leq \ 4 \alpha m,
   \end{align*}
   which concludes the proof.
   \end{proof}

Using standard arguments, this performance guarantee can be improved to a factor~$e\approx 2.72$ using randomization over the doubling factor. However, in the context of hard real-time guarantees a worst-case guarantee that is achieved in expectation seems less relevant and we omit it.

\paragraph{Algorithms.} An algorithm for the (semi-)online machine minimization problem is called {\em busy} if, at all times $t$, $w_{\mathsf{A}}(t)<m$ implies that there are exactly $w_{\mathsf{A}}(t)$ active jobs at $t$. Two well-known busy algorithms are the following. The {\em Earliest Deadline First} ($\EDF$) algorithm on~$m'$ machines schedules at any time the~$m'$ jobs with the smallest deadline. %Ties are broken prefering jobs with an earlier release date.
The {\em  Least Laxity First} ($\LLF$) algorithm on~$m'$ machines, schedules at any time the~$m'$ jobs with the smallest non-negative laxity at that time. In both cases, ties are broken in favor of earlier release dates or, in case of equal release dates, by lower job indices. For simplicity, we assume in this paper that both \EDF and \LLF make decisions only at integer time points.

\paragraph{Scheduling loose jobs.} \label{mp-subsec: oa-edf-good}
Throughout the paper, we will use the fact that, for any fixed
$\alpha<1$, we can achieve a constant competitive ratio on
$\alpha$-loose jobs, i.e., when $p_j(t)\le \alpha (d_j-r_j)$, simply by using $\mathsf{EDF}$. We remark that the same result also applies to \LLF. 

\begin{theorem}\label{thm: EDF-small}
If every job is $\alpha$-loose, \EDF is a $1/(1-\alpha)^2$-competitive algorithm for preemptive semi-online machine minimization.
\end{theorem}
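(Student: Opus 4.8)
The plan is to argue by contradiction. Writing $m$ for the optimum number of machines (given in the semi-online setting), run \EDF on $m':=\lceil m/(1-\alpha)^2\rceil$ machines and suppose some deadline is missed. The goal is to produce an interval $[a',D)$ such that the jobs whose entire time window is contained in $[a',D)$ have total processing time more than $m(D-a')$: since any schedule on $m$ machines processes at most $m(D-a')$ units of work inside $[a',D)$, and each of those jobs must be run in full within its window $\subseteq[a',D)$, this contradicts feasibility of the optimum on $m$ machines.

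To build the interval, among the jobs \EDF fails to complete pick the one, $j$, of highest \EDF-priority and set $D:=d_j$; then all higher-priority jobs meet their deadlines, and all of them have deadline at most $D$. Let $a$ be minimal with the property that $j$ receives no processing during $[a,D)$ (we may assume $a<D$; the degenerate case where $j$ runs in the last slot needs only a minor modification). Then $r_j\le a$, and on all of $[a,D)$ \EDF keeps every machine busy with jobs of priority higher than $j$ — for otherwise the pending job $j$ would be scheduled — hence with jobs of deadline at most $D$; so \EDF performs exactly $m'(D-a)$ units of deadline-at-most-$D$ work on $[a,D)$. Since $j$'s total processing is less than $p_j$ and $p_j\le\alpha(D-r_j)$ by $\alpha$-looseness, and all of $j$'s processing happens within $[r_j,a)$, we get $a-r_j<\alpha(D-r_j)$, i.e.\ $D-a>(1-\alpha)(D-r_j)$. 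This is the first factor $(1-\alpha)$.

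The crux is that not all of the $m'(D-a)$ units of work on $[a,D)$ stems from jobs released within $[a,D)$: it may be carried over by the (at most $m'-1$) higher-priority jobs still active at time $a$, which were released before $a$. To bound how far back their release dates reach I would again use $\alpha$-looseness: such a job $k$ has so far received less than $\alpha(d_k-r_k)\le\alpha(D-r_k)$ units of processing, while its overhang $a-r_k$ cannot exceed that plus the number of slots it spent blocked before $a$; controlling the latter then yields $D-r_k<\tfrac{1}{1-\alpha}(D-r_j)$ for every carried-over job, so the earliest such release $a'$ satisfies $D-a'<\tfrac{1}{1-\alpha}(D-r_j)$ — the second factor. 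Charging the carried-over work back into the window $[a',D)$ (each carried-over job contributes its full processing time there, and the non–carried-over part of the $[a,D)$-work sits on jobs released in $[a,D)\subseteq[a',D)$), the jobs with window inside $[a',D)$ carry at least $m'(D-a)>m'(1-\alpha)(D-r_j)$ units in total, whence $m(D-a')>m'(1-\alpha)(D-r_j)>m'(1-\alpha)^2(D-a')$ and thus $m>m'(1-\alpha)^2$, contradicting the choice of $m'$ (the ceiling being absorbed in the usual way). I expect the overhang estimate to be the main obstacle: a carried-over high-priority job may itself have been blocked before time $a$ by still-higher-priority jobs, so bounding the ``blocked'' part of its overhang requires either choosing $a$ (or the comparison window) carefully enough to capture all the relevant congested slots, or a short recursive descent through the priority order.
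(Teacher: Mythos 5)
Your overall plan---a workload contradiction around the window of a job $\EDF$ misses, combining the upper bound $m\cdot(\text{interval length})$ from $\OPT$'s feasibility with a lower bound from the $(1-\alpha)$-fraction of fully busy slots---has the same shape as the paper's argument, and your ``first factor'' $D-a>(1-\alpha)(D-r_j)$ is exactly the paper's lower-bound step. The gap is the one you flag yourself: the claim $D-r_k<\tfrac{1}{1-\alpha}(D-r_j)$ for every carried-over job $k$ is not established. A job $k$ still active at $a$ may have received little processing not because it is long but because it spent a large fraction of $[r_k,a)$ blocked by still-higher-priority jobs; nothing in a single-window charging argument controls that blocked time, so $r_k$ (and hence $a'$) can recede arbitrarily far relative to $D-r_j$. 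Closing this requires descending through those higher-priority jobs, which may themselves have been blocked---precisely the recursion you anticipate but do not carry out.

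The paper makes exactly this recursion go away by proving a global workload invariant by induction on time (Lemma~\ref{lemma: small-busy-load}): as long as no deadline has been missed, $W_{\EDF}(t)\le W_{\OPT}(t)+\tfrac{\alpha}{1-\alpha}\,m\,(d_{\max}-t)$. The induction absorbs the descent in one step: in the problematic case (Case~3b there), some active job $k$ has become $\alpha$-tight at $t$ by being blocked, from which it follows that all machines were busy for at least $(1-\alpha)(t-r_k)$ slots in $[r_k,t]$, and the induction hypothesis applied at $r_k$ finishes the case. The theorem's proof then restricts the instance to jobs with deadline at most $d_{j^\star}$ so that $d_{\max}=d_{j^\star}$, applies the lemma at $t=r_{j^\star}$ together with $W_{\OPT}(r_{j^\star})\le m(d_{j^\star}-r_{j^\star})$ to obtain $W_{\EDF}(r_{j^\star})\le\tfrac{1}{1-\alpha}m(d_{j^\star}-r_{j^\star})$, and compares this against the same lower bound $(1-\alpha)cm(d_{j^\star}-r_{j^\star})$ you derive. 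So the missing piece in your proposal is precisely this induction lemma; with it the second $(1-\alpha)$ factor falls out of the global workload bound rather than out of a per-job release-date estimate, which is what makes the proof go through.
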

We prove this theorem by establishing a contradiction based on the workload that \EDF when missing deadlines must assign to some interval.
We do so by using the following work load inequality, which holds for arbitrary busy algorithms.
 %We get a lower bound on this by considering the workload that must block this job for it to be missed. The upper bound is given by the following lemma, which even holds for arbitrary busy algorithms.

\begin{lemma}\label{lemma: small-busy-load}
Let every job be $\alpha$-loose, $c\ge 1/(1-\alpha)^2$ and let $\mathsf{A}$ be a busy algorithm for semi-online machine minimization using $cm$ machines. Assume that $\ell_j^\mathsf{A}(t^\prime)\geq 0$ holds, for all $t^\prime\leq t$ and $j$. Then $$W_\mathsf{A}(t)\le W_{\OPT}(t)+\frac{\alpha}{1-\alpha}\cdot m\cdot\left(d_{\max}-t\right).$$
\end{lemma}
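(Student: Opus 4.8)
The plan is to prove the inequality by strong induction on $t$, after recasting it in terms of processed volume. Since the jobs not yet released at a time $s$ contribute the same amount $\sum_{r_j>s}p_j$ to $W_{\mathsf{A}}(s)$ and to $W_{\OPT}(s)$, the difference $W_{\mathsf{A}}(s)-W_{\OPT}(s)$ equals $X(s)-X^{\OPT}(s)$, where $X(s):=\sum_{j:\,r_j\le s}p_j^{\mathsf{A}(J)}(s)$ is the remaining volume of already released jobs under $\mathsf{A}$ and $X^{\OPT}$ is the analogue for $\OPT$; equivalently, $W_{\mathsf{A}}(s)-W_{\OPT}(s)$ is the volume $\OPT$ has processed in $[t_0,s)$ minus the volume $\mathsf{A}$ has processed there, where $t_0=\min_j r_j$. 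We may assume $t_0\le t\le d_{\max}$ (for $t>d_{\max}$ every job is completed under the standing assumption and there is nothing to show). The base case $t=t_0$ holds since $W_{\mathsf{A}}(t_0)=\sum_j p_j=W_{\OPT}(t_0)$.

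For the inductive step at $t>t_0$, write $s:=t-1$ and let $w$ be the volume $\mathsf{A}$ processes in the slot $[s,s+1)$. If $w\ge m/(1-\alpha)$, then from $W_{\mathsf{A}}(t)-W_{\OPT}(t)\le\bigl(W_{\mathsf{A}}(s)-W_{\OPT}(s)\bigr)-w+m$, the induction hypothesis at $s$, and the identity $\tfrac{\alpha}{1-\alpha}+1-\tfrac{1}{1-\alpha}=0$ (together with $d_{\max}-s=d_{\max}-t+1$) the claimed bound at $t$ follows; this uses only $c\ge 1/(1-\alpha)$. So assume $w<m/(1-\alpha)$. Then $s$ is not a fully loaded slot, so by the busy property $\mathsf{A}$ processes \emph{every} job active at $s$; hence the set $A'$ of jobs that are $\mathsf{A}$-active at $t$ and were released at some time $\le s$ has size $|A'|\le w<m/(1-\alpha)$. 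The remaining active jobs at $t$ are released exactly at $t$ and have been touched by neither algorithm, so they cancel, and it remains to bound $\sum_{j\in A'}p_j^{\mathsf{A}}(t)$.

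Now I distinguish two cases according to how tight the jobs in $A'$ are at time $t$. If $p_j^{\mathsf{A}}(t)\le\alpha(d_j-t)$ for every $j\in A'$, then $\sum_{j\in A'}p_j^{\mathsf{A}}(t)\le\alpha|A'|(d_{\max}-t)\le\tfrac{\alpha}{1-\alpha}m(d_{\max}-t)$ and we are done. Otherwise pick $j^{\dagger}\in A'$ with $p_{j^{\dagger}}^{\mathsf{A}}(t)>\alpha(d_{j^{\dagger}}-t)$ and let $\rho:=r_{j^{\dagger}}\le s$. The crucial point is that $j^{\dagger}$ is active throughout $[\rho,t)$, so by the busy property $\mathsf{A}$ processes it in every non-fully-loaded slot of that window; hence the number of non-fully-loaded slots of $\mathsf{A}$ in $[\rho,t)$ is at most the volume $\mathsf{A}$ devotes to $j^{\dagger}$ there, namely $p_{j^{\dagger}}-p_{j^{\dagger}}^{\mathsf{A}}(t)$, and by $\alpha$-looseness ($p_{j^{\dagger}}\le\alpha(d_{j^{\dagger}}-\rho)$) together with the tightness assumption this is strictly less than $\alpha(d_{j^{\dagger}}-\rho)-\alpha(d_{j^{\dagger}}-t)=\alpha(t-\rho)$. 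Consequently more than $(1-\alpha)(t-\rho)$ of the slots in $[\rho,t)$ are fully loaded, so $\mathsf{A}$ processes more than $cm(1-\alpha)(t-\rho)=\tfrac{m}{1-\alpha}(t-\rho)$ there while $\OPT$ processes at most $m(t-\rho)$; combining this with the induction hypothesis at $\rho$ and using $(d_{\max}-\rho)-(t-\rho)=d_{\max}-t$ gives $W_{\mathsf{A}}(t)-W_{\OPT}(t)<\tfrac{\alpha}{1-\alpha}m(d_{\max}-\rho)-\tfrac{\alpha}{1-\alpha}m(t-\rho)=\tfrac{\alpha}{1-\alpha}m(d_{\max}-t)$.

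I expect this last case --- bounding the damage when $\mathsf{A}$ has been neglecting some active job over a long stretch --- to be the main obstacle, and it is exactly where the bound $c\ge 1/(1-\alpha)^2$ (and not merely $1/(1-\alpha)$) is needed: a busy algorithm neglects a job only while running all $cm$ of its machines, and $cm(1-\alpha)^2\ge m$ forces it to have out-processed the $m$-machine optimum over $[\rho,t)$ by enough to cover the deficit carried by that job. The rest is routine bookkeeping of processed volume across $[\rho,t)$ and verification of the arithmetic identities in $\alpha$ and $c$.
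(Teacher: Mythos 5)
Your proof is correct and follows essentially the same inductive strategy as the paper: a case split on the volume processed in the slot preceding $t$, using the busy property to bound the number of active jobs, and—when some active job has become tight relative to its remaining window—charging the deficit to fully loaded slots over $[r_{j^\dagger},t)$, which is exactly where $c\geq 1/(1-\alpha)^2$ enters. The only cosmetic difference is that you merge the paper's two "few active jobs" cases into a single one, which works because $|A'|<m/(1-\alpha)$ combined with per-job $\alpha$-looseness at $t$ already yields the desired bound.
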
 

\begin{proof}
We prove by induction. The base is clear since $W_A(0)=W_{\OPT}(0)$. Now assume the lemma holds for all $t^\prime<t$. There are three possibilities.

\noindent\textbf{Case 1.} We have $w_\mathsf{A}(t)\le \alpha/(1-\alpha)\cdot m$. By the fact that $\mathsf{A}$ is busy, there is an empty machine at time $t$, meaning that there are at most $\alpha/(1-\alpha)\cdot m$ active jobs. Given that $\ell_j^\mathsf{A}(t^\prime)\geq 0$ for all $t^\prime\leq t$, each of the active jobs has a remaining processing time of no more than $d_{\max}-t$, implying that $W_\mathsf{A}(t)$ is bounded by $\alpha/(1-\alpha)\cdot m\cdot(d_{\max}-t)$ plus the total processing time of the jobs that are not released. As the latter volume cannot exceed $W_{\OPT}(t)$, the claim follows.

\noindent\textbf{Case 2.} We have $w_\mathsf{A}(t)\ge  1/(1-\alpha)\cdot m$. According to the induction hypothesis, it holds that $W_\mathsf{A}(t-1)\le W_{\OPT}(t-1)+\alpha/(1-\alpha)\cdot m\cdot (d_{\max}-t+1)$. Using $w_{\OPT}(t)\le m$, we get $W_{\OPT}(t)\ge W_{\OPT}(t-1)-m.$ For Algorithm $\mathsf{A}$, it holds that $W_\mathsf{A}(t)=W_\mathsf{A}(t-1)-w_\mathsf{A}(t)\le W_\mathsf{A}(t-1)-1/(1-\alpha)\cdot m.$ Inserting the first inequality into the second one proves the claim.

\noindent\textbf{Case 3.} We have $\alpha/(1-\alpha)\cdot m<w_\mathsf{A}(t)<1/(1-\alpha)\cdot m$. Again by the fact that $\mathsf{A}$ is busy, there is an empty machine at time $t$, i.e., there are less than $1/(1-\alpha)\cdot m$ active jobs. Distinguish two cases.

\noindent\textbf{Case 3a.} We have $p_j(t)\le \alpha (d_j-t)$ for all active jobs. Then the total remaining processing time of active jobs is bounded by $\alpha (d_{\max}-t)\cdot 1/(1-\alpha)\cdot m$. Plugging in the total processing time of unreleased jobs, which is bounded by $W_{\OPT}(t)$, we again get the claim.

\noindent\textbf{Case 3b.} There exists an active job $j$ with $p_j(t)> \alpha (d_j-t)$. Using that $j$ is $\alpha$-loose, i.e., $p_j\le \alpha(d_j-r_j)$, we get that $j$ is not processed for at least $(1-\alpha)(t-r_j)$ many time units in the interval $[r_j, t]$. As Algorithm $\mathsf{A}$ is busy, this means that all machines are occupied at these times, yielding
\begin{align*}
W_\mathsf{A}(t)&\le W_\mathsf{A}(r_j)-(1-\alpha)(t-r_j)\cdot cm\\
&\le W_\OPT(r_j)+\frac{\alpha}{1-\alpha}\cdot m\cdot(d_{\max}-r_j)-(1-\alpha)(t-r_j)\cdot cm\\
&\le W_\OPT(r_j)+\frac{\alpha}{1-\alpha}\cdot m\cdot(d_{\max}-t) - m\cdot(t-r_j),
\end{align*}
where the second inequality follows by the induction hypothesis for $t=r_j$, and the third one follows by $c\ge 1/(1-\alpha)^2$. Lastly, the feasibility of the optimal schedule implies $$W_{\OPT}(t)\ge W_{\OPT}(r_j)-m\cdot (t-r_j),$$ which in turn implies the claim by plugging it into the former inequality.
\end{proof}

\begin{proof}[Proof of Theorem~\ref{thm: EDF-small}.]
Let $cm$ be the number of machines \EDF uses and consider the schedule produced by \EDF from an arbitrary instance $J$ only consisting of $\alpha$-loose jobs. We have to prove that every job is finished before its deadline if $c=1/(1-\alpha)^2$. To this end, suppose that \EDF fails. Among the jobs that are missed, let $j^\star$ be one of those with the earliest release date.

First observe that we can transform $J$ into $J^\prime\subset J$ such that $j^\star\in J^\prime$, $\max_{j\in J^\prime} d_j = d_{j^\star}$ and \EDF still fails on $J^\prime$. For this purpose, simply define $J^\prime=\left\{j\in J\mid d_j\leq d_{j^\star}\right\}$ and notice that we have only removed jobs with a later deadline than $\max_{j\in J^\prime} d_j$. This implies that every job from $J^\prime$ receives the same processing in the \EDF schedule of $J^\prime$ as in the one of $J$.

We can hence consider $J^\prime$ instead of $J$ from now on. In the following, we establish a contradiction on the workload during the time interval $[r_{j^\star},d_{j^\star}]$. The first step towards this is applying Lemma~\ref{lemma: small-busy-load} for an upper bound. Also making use of the feasibility of the optimal schedule, we get:
\begin{align*}
W_{\EDF}(r_{j^\star})&\le W_{\OPT}(r_a)+\frac{\alpha}{1-\alpha}\cdot m\cdot(d_{\max}-r_{j^\star})\\
&\le \frac{1}{1-\alpha}\cdot m\cdot(d_{\max}-r_{j^\star}).
\end{align*}
We can, however, also lower bound this workload. Thereto, note that job $j^\star$ is not processed for at least $(1-\alpha)(d_{j^\star}-r_{j^\star})$ time units, implying that all machines must be occupied by then, i.e., \[W_{\EDF}(r_{j^\star})> (1-\alpha)\cdot cm \cdot (d_{\max}-r_{j^\star}).\] If we compare the right-hand sides of the two inequalities, we arrive at a contradiction if and only if $c\geq 1/(1-\alpha)^2$, which concludes the proof.
\end{proof}
%In the remainder of the paper we assume that the optimal number of machines is known.

\section{Constant-competitive Algorithms for Agreeable Deadlines}
\label{sec:special-cases1}

In this section, we study the special case of (semi-)online machine minimization of jobs with {\em agreeable deadlines}. That is, for any two jobs $j,k\in J$ with $r_j< r_k$ it holds that also $d_j\leq d_k$. 

\subsection{Preemptive scheduling}

% In this section, we focus on the {\em Least Laxity First} (\LLF) algorithm, which schedules at any moment in time the $m$ jobs with the smallest laxity. Ties are broken in favor of earlier deadlines.\nic{Simplistic tie breaking or do we want/need a more involved \LLF formulation?}\lin{Tie breaking rule is needed for the proof. We favor earlier released jobs, which have earlier deadlines.}
We propose an algorithm that again schedules~$\alpha$-loose jobs by \EDF or \LLF on $m/(1-\alpha)^2$ machines, which is feasible by Theorem~\ref{thm: EDF-small}. Furthermore, $\alpha$-tight jobs are scheduled by \LLF, for which we obtain the following bound.

\begin{theorem}\label{thm1:LLF-agreeable}
  The competitive ratio of \LLF is at most $4/\alpha+6$ if all jobs are $\alpha$-tight and have agreeable deadlines.
\end{theorem}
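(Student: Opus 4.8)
The plan is a proof by contradiction via a workload estimate, analogous to the proof of Theorem~\ref{thm: EDF-small} but driven by the laxity dynamics of \LLF rather than by the busy‑load inequality of Lemma~\ref{lemma: small-busy-load}. Put $c:=4/\alpha+6$, suppose \LLF on $cm$ machines misses some deadline, and let $j^\star$ be a missed job whose deadline $d^\star:=d_{j^\star}$ is smallest, breaking ties in favour of the smallest release date and then the smallest index. Write $r^\star:=r_{j^\star}$, $D:=d^\star-r^\star$, $p^\star:=p_{j^\star}$; since $j^\star$ is $\alpha$-tight, $p^\star>\alpha D$, equivalently $\ell_{j^\star}=D-p^\star<(1-\alpha)D$ and $D<p^\star/\alpha$. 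Every job $k$ with $d_k>d^\star$ satisfies $r_k\ge r^\star$ by agreeableness; using this together with a monotonicity argument for busy algorithms (deleting jobs with a later deadline only relaxes the laxity competition faced by $j^\star$ and so cannot turn a miss into a success), we first reduce to the case $d_{\max}=d^\star$, i.e.\ all deadlines are at most $d^\star$.

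The agreeable structure then splits the instance cleanly relative to $r^\star$. A job released at or after $r^\star$ has its whole window inside $[r^\star,d^\star]$ — in fact, by agreeableness such a job with $r_k>r^\star$ even has $d_k=d^\star$ — while a job released before $r^\star$ (an ``old'' job) has, by the choice of $j^\star$, nonnegative laxity throughout its life; since its deadline is at most $d^\star$, its remaining processing at $r^\star$ is at most $D$ and all of it must still be executed inside $[r^\star,d^\star]$. Consequently the total processing that any feasible $m$-machine schedule is forced to carry out inside $[r^\star,d^\star]$ — the new jobs in full plus the leftover of the old jobs — is at most $mD$, and the same bookkeeping applies verbatim to every subinterval $[t,d^\star]\subseteq[r^\star,d^\star]$.

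I would then exhibit the crisis. Because \LLF is busy, every step of $[r^\star,d^\star)$ at which $j^\star$ is not run has all $cm$ machines occupied by jobs of laxity at most $\ell_{j^\star}^{\LLF}(\cdot)\le\ell_{j^\star}<(1-\alpha)D$; moreover, after the reduction each such job has deadline at most $d^\star$, and a short computation (nonnegative laxity up to $r^\star$ for old jobs, window of length $\le D$ for new ones) shows each of them has remaining processing at most $D$. The laxity of $j^\star$ is nonincreasing, equals $\ell_{j^\star}$ at $r^\star$, and is negative at $d^\star$, so tracking it from $r^\star$ toward $d^\star$ — the steps on which $j^\star$ is starved, together with the forward propagation of this laxity deficit onto the deadlines of the jobs that starved it — identifies a window $[t^{\#},d^\star)$ of length $\Theta(D)$ inside which \LLF is forced to process a volume of order $cmD$ of work that every feasible $m$-machine schedule must likewise squeeze into a window of length $\Theta(D)$. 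Matching this $\Omega(cmD)$ lower bound against the $O(mD)$ feasibility upper bound of the previous paragraph, and substituting $D<p^\star/\alpha$ to convert interval lengths into numbers of genuinely busy steps, yields $c=O(1/\alpha)$; tracking the additive losses gives precisely $c<4/\alpha+6$, the desired contradiction.

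The step I expect to be the main obstacle is the crisis window, particularly the regime where $j^\star$ has very small laxity and very large processing time: there $j^\star$ can be starved on as few as one step, so merely confining work to $[r^\star,d^\star]$ is far too weak, and one has to follow how the laxity deficit of $j^\star$ is inherited by the jobs that delay it — using agreeableness to stop their release dates from drifting far to the left of $r^\star$ and the reduction $d_{\max}=d^\star$ to keep their deadlines from drifting far to the right — so that the effective busy window stays of length $\Theta(D)$ rather than becoming unbounded. Carefully accounting for the additive losses in this step is exactly what pins down the constant $4/\alpha+6$; the reduction to $d_{\max}=d^\star$ for \LLF is a smaller technical point that likewise hinges on the agreeable‑deadline hypothesis.
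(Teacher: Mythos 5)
Your proposal correctly identifies that a workload contradiction is the right target, and you correctly observe that the busy-load inequality of Lemma~\ref{lemma: small-busy-load} is unavailable because the jobs are all tight. But beyond that framing, the key technical content is missing, and one step you treat as minor is in fact unjustified.

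\textbf{The reduction to $d_{\max}=d^\star$ is not established.} For \EDF this reduction is immediate: priorities are fixed by deadline, so deleting all jobs with deadline larger than $d^\star$ leaves the \EDF schedule of the surviving jobs untouched. For \LLF, priorities are \emph{dynamic}: they depend on remaining processing times, which in turn depend on the schedule. Deleting a late-deadline job $k$ changes which jobs run at the first time $k$ would have run, which changes their laxities, which can cascade forward and reshuffle the entire priority order. It is not obvious --- and you give no argument --- that this reshuffling cannot cause $j^\star$ to lose machine time at some step where it would previously have run. The paper avoids this issue entirely: its minimal counterexample argument only deletes jobs \emph{released after} the critical time $t_0$ (when the missed job's laxity hits zero), which demonstrably cannot affect the schedule up to and including $t_0$, and it uses the (release-date-)earliest missed job rather than the (deadline-)earliest.

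\textbf{The ``crisis window'' is a placeholder, not an argument.} You write that one should track the laxity deficit of $j^\star$ forward ``onto the deadlines of the jobs that starved it'' to identify a window $[t^{\#},d^\star)$ of length $\Theta(D)$ in which \LLF is forced to do $\Omega(cmD)$ work, and you flag this as the main obstacle. That is exactly the hard part, and it is not done. The paper's resolution is structurally quite different and rests on three devices you do not have: (i) a bespoke load inequality (Lemma~\ref{lemma:mp-agreeable-LLF-load}) stated in terms of the set $\Lambda(t)$ of jobs active at $\phi(t)$, the last time at most $m$ jobs were processed, proved by a delicate induction with a two-parameter recursion in $\gamma$ and $c$; (ii) a comparison window $[t_b,t_a]$ anchored not at $[r^\star,d^\star]$ but at the ranked release dates and deadlines of the $cm$ jobs that starved the missed job, $t_b=r_{\lfloor cm/3\rfloor}$ and $t_a=d_{\lfloor 2cm/3\rfloor}$; and (iii) a ``bad jobs'' decomposition, with the observation that any job whose laxity reaches $0$ in \LLF during $[t_b,t_a]$ cannot be bad, backed by a separate non-preemption argument showing the first $\lfloor cm/3\rfloor$ jobs of $S_z$ are never preempted in favor of a bad job during $[t_b,t_0]$. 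The specific constant $4/\alpha+6$ falls out of balancing the parameter $\gamma$ in (i), not from ``tracking additive losses'' in a window argument. Without something playing the role of (i)--(iii), the contradiction does not close.

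\end{document}
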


 The proof is based on a load argument. Notice that
 Lemma~\ref{lemma: small-busy-load} is no longer true since jobs
 now are all tight. To define a new load inequality we introduce the
 set $\Lambda(t)$ which is defined as follows. Consider the latest
 point in time before (or at) $t$ such that at most $m$ jobs are
 processed and denote it by $\phi(t)=\max\{t'\le t\mid w_{\LLF}(t')\le
 m\}$. Let $\Lambda({t})$ be the set of active jobs at time
 $\phi(t)$. Obviously $|\Lambda(t)|\le m$. If there does not exist
 such a $\phi(t)$, i.e., $w(x)>m$ holds for any $x\le t$, then we let
 $\Lambda(t)=\emptyset$. 
\begin{lemma}
\label{lemma:mp-agreeable-LLF-load}
Consider an instance consisting only of $\alpha$-tight jobs. Let $c\ge 4/\alpha+6$ and $t$ such that there is no job missed by \LLF up to time $t$. Then we have $$W_{\LLF}(t)\le W_{\OPT}(t)+\sum_{j\in \Lambda(t)}p_j(t).$$
\end{lemma}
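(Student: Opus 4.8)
The plan is to prove the inequality by strong induction on $t$, the decisive distinction being whether the ``last uncongested time'' $\phi(t)$ equals~$t$.

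Two cases will be immediate. If $\phi(t)$ is undefined, then $w_{\LLF}(t')>m$ for every $t'\le t$, so $W_{\LLF}(t)=W_{\LLF}(0)-\sum_{t'<t}w_{\LLF}(t')\le W_{\LLF}(0)-mt$; since OPT runs at most $m$ jobs per step, $W_{\OPT}(t)\ge W_{\OPT}(0)-mt=W_{\LLF}(0)-mt\ge W_{\LLF}(t)$, and the claim holds with $\Lambda(t)=\emptyset$. If $\phi(t)=t$, i.e.\ $w_{\LLF}(t)\le m$, then \LLF (being busy and keeping a machine idle) runs all active jobs, so $\Lambda(t)$ is exactly the set of unfinished released jobs; hence $W_{\LLF}(t)=\sum_{j\in\Lambda(t)}p_j(t)+U(t)$, where $U(t)$ is the total processing time of the jobs released after~$t$, and $U(t)\le W_{\OPT}(t)$ because OPT cannot have processed any of those jobs. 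Neither case uses the induction hypothesis.

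So assume $w_{\LLF}(t)>m$ and set $\phi:=\phi(t)<t$. By maximality of $\phi$ we have $w_{\LLF}(t')>m$ for all $t'\in\{\phi+1,\dots,t\}$, and $\Lambda(t)=\Lambda(\phi)$ is the set of jobs active at $\phi$. Call a job \emph{old} if $r_j\le\phi$ and \emph{new} otherwise; an old job run after $\phi$ is unfinished at $\phi$ and hence lies in $\Lambda(\phi)$, so during $(\phi,t]$ at most $|\Lambda(\phi)|\le m$ distinct old jobs are ever run. Applying the $\phi(t)=t$ identity at time $\phi$ gives $W_{\LLF}(\phi)=\sum_{j\in\Lambda(\phi)}p_j(\phi)+U(\phi)$ with $U(\phi)\le W_{\OPT}(\phi)$. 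Writing $N$ and $N^{\OPT}$ for the amount of \emph{new} work processed during $[\phi,t]$ by \LLF and by OPT, respectively, and tracking $W_{\LLF}$, $W_{\OPT}$ and $\sum_{j\in\Lambda(\phi)}p_j(\cdot)$ from $\phi$ to $t$ — using only $w_{\OPT}(\cdot)\le m$ and the trivial fact that OPT's total processing of old work on $[\phi,t]$ is at most OPT's remaining old work at $\phi$ — the inequality claimed at $t$ reduces to the single statement $N\ge N^{\OPT}$.

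It remains to prove $N\ge N^{\OPT}$, and this is where $\alpha$-tightness and the constant $c$ enter. Call a step $s\in(\phi,t]$ \emph{full} if \LLF uses all $cm$ machines and \emph{light} otherwise; at a light step \LLF has fewer than $cm$ active jobs and so runs \emph{every} active job, in particular every active new job. If there is no full step in $(\phi,t]$, then \LLF runs each new job at every step after its release, so its remaining processing time of a new job never exceeds OPT's, and $N\ge N^{\OPT}$ follows term by term. The hard part — which I expect to be the main obstacle — is the accounting at full steps, where \LLF runs only the $cm$ lowest-laxity jobs and may defer new jobs in favour of a backlog. Here two opposing effects must be balanced: a deferred new job accrues ``debt'' against OPT at most equal to the number of full steps during which it is alive, while at each full step at least $cm-|\Lambda(\phi)|\ge(c-1)m$ of the jobs run are new, so that \LLF clears new work at rate $\ge(c-1)m$ against OPT's $\le m$. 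Finally $\alpha$-tightness, i.e.\ $d_j-r_j<p_j/\alpha$ for every job, simultaneously bounds how long any $\Lambda(\phi)$-job can stay active — hence, via the no-deadline-miss assumption, the total old work executed after $\phi$ — and bounds the window over which a deferred new job can accumulate debt; feeding these bounds into the balance above and using $c\ge 4/\alpha+6$ to make the per-full-step surplus dominate yields $N\ge N^{\OPT}$ and closes the induction.
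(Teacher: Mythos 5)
Your reduction is clean and correct as far as it goes: the three-way split on $\phi(t)$ is sound, and the algebraic bookkeeping showing that (in the $\phi(t)<t$ case) the claim is equivalent to $N\ge N^{\OPT}$ does check out — writing $U(\phi)$ for the total volume of jobs released after $\phi$, one gets $W_{\LLF}(t)-\sum_{j\in\Lambda(t)}p_j(t)=U(\phi)-N$ and $W_{\OPT}(t)\ge U(\phi)-N^{\OPT}$, so $N\ge N^{\OPT}$ suffices. This is a genuinely different decomposition from the paper's, and a rather elegant one.

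The gap is that you have not proved $N\ge N^{\OPT}$, and your sketch of it does not close. The light-step-only case works, but for the mixed case you only gesture at a ``debt vs.\ surplus'' balance without an actual inequality; there is no bound relating the number of light steps at which $\Delta:=N-N^{\OPT}$ can drop to the number of full steps at which it gains, and the per-light-step drop is not controlled (when \LLF has few active new jobs, $\Delta$ can fall by up to $m$ per step, for an unbounded number of steps). More fundamentally, you never invoke the \emph{agreeable-deadlines} hypothesis, which is implicit in this section and which the paper's proof uses in an essential way: the paper picks a witness $j'\in\Lambda(t)$ with $d_{j'}\ge t$ and concludes, \emph{by agreeability}, that every job released later has deadline $\ge t$ and hence (by $\alpha$-tightness) processing time $\ge\alpha(t-t_0)$; this is exactly what lets one charge the interval $[t_b,t_0]$ against a long enough window $[t_0,t]$ and makes the constant $4/\alpha+6$ emerge. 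Without that deadline-monotonicity you have no lower bound on how much work newly released jobs force into $[t_0,t]$, and I see no way for the full/light-step accounting alone to yield $N\ge N^{\OPT}$. The paper's proof instead fixes three auxiliary times $t_b\le t_0\le t_a$ inside $(\phi(t),t]$, derives two competing load drops for $W_{\LLF}$ over $[t_b,t]$, balances them with a parameter $\gamma$, and only then applies the induction hypothesis at $t_b$; none of this machinery appears in your sketch. As written, the hard step is asserted, not proved, and the missing ingredient is precisely the use of agreeability.
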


% Here the set $\Lambda(t)$ is defined as follows. We let $\phi(t)=\max\{t'\le t\mid w_{\LLF}(t')\le m\}$, i.e., the first time before (or at t) $t$ such that at most $m$ jobs are processed, and let $\Lambda({t})$ be the set of active jobs at time $\phi(t)$. Obviously $|\Lambda(t)|\le m$. If there does not exist such a $\phi(t)$, i.e., $w(x)>m$ holds for any $x\le t$, then we let $\Lambda(t)=\emptyset$. 

\begin{proof}
The lemma is trivially true for $t=0$. Now assume $t>0$ and that the lemma holds for all $t'<t$. We make the following assumptions. Indeed, if any of them is not true, then we can easily prove the lemma.

\begin{enumerate}[(i)]
\item $\phi(t)=t_l$ exists, and $t_l<t$.
\item $\Lambda(t)\neq \emptyset$.
\item During $[t_l,t]$, at least one job is not processed all the time.
\item $d_j\ge t$ for at least one job $j\in\Lambda(t)$.
\end{enumerate}

Consider assumption (i). The lemma is obviously true if $\phi(t)$ does not exist because that means $w_{\LLF}(t')> m$ for any $t'\le t$. That is at time $t$, \LLF must have finished more load than the optimum solution, in which only $m$ machines are available. It follows directly that $W_{\LLF}(t)\le W_{\OPT}(t)$.
Otherwise there exists $\phi(t)=t_l\le t$. Notice that, if $t_l=t$, then obviously $W_{\LLF}(t_l)\le W_{\OPT}(t_l)+\sum_{j\in \Lambda(t_l)}p_j(t_l)$. 
Thus we assume in the following that $t_l<t$, and it follows that $\Lambda(t')=\Lambda(t_l)$ for any $t'\in[t_l,t]$. 

Consider assumption (ii). If $\phi(t)=t_l$ exists while $\Lambda(t)$ is empty, then \LLF finishes all the jobs released before time $t_l$, while $w_{\LLF}(t')>m\ge w_{\OPT}(t')$ for $t_l< t'\le t$. It is easy to see that the lemma follows. 

Consider assumption (iii). Notice that $W_{\LLF}(t)\le W_{\OPT}(t)+\sum_{j\in\Lambda(t)} p_j(t)$ is obviously true if no new job is released after $t_l$. If a new job is released after $t_l$ and scheduled without preemption by \LLF, then its contribution to $W_{\LLF}(t)$ is no more than its contribution to $W_{\OPT}(t)$. Plugging in the contributions of the newly released jobs, we still have $W_{\LLF}(t)\le W_{\OPT}(t)+\sum_{j\in\Lambda(t)} p_j(t)$.

Consider assumption (iv). If it is not true, we have $W_{\LLF}(t-1)\le W_{\OPT}(t-1)$ according to the induction hypothesis, and $p_j(t-1)=0$ holds for all $j \in \Lambda(t-1)$. Given that $W_{\LLF}(t)\le W_{\LLF}(t-1)-m$ and $W_{\OPT}(t-1)\ge W_{\OPT}(t)-m$, the lemma is true. 

We proceed with the four assumptions above. %Since $\Lambda(t')=\Lambda(t_l)$ for $t'\in\left[t_l,t\right]$, from now on when we write $\sum p_j(t)$ we take the summation over $\Lambda(t_l)$.
Notice that some job is not processed in $[t_l,t]$, and we let $[t_0,t_0+1]$ be the first such time. We let $t_a$ be the first time in $[t_0,t]$ with $w_{\LLF}(t_a)\le 2m$. Notice that $t_a$ might not exist, and in this case $w_{\LLF}(t')>2m$ holds for all $t'\in [t_0,t]$. As a consequence we get $W_{\LLF}(t)\le W_{\LLF}(t_0)-2m\cdot(t-t_0).$ On the other hand, $W_{\OPT}(t)\ge W_{\OPT}(t_0)-m\cdot(t-t_0).$ Given the fact that $\sum_{j\in\Lambda(t)} p_j(t)\le \sum_{j\in\Lambda(t)} p_j(t_0) -m\cdot(t-t_0),$ we have $W_{\LLF}(t)\le W_{\OPT}(t)+\sum_{j\in\Lambda(t)} p_j(t)$.

Now assume that $t_a\le t$ exists. Let $t_b\leq t_0$ be the latest time with $w_{\LLF}(t_b-1)\le \gamma cm$ for some parameter $\gamma<1$ that will be fixed later (i.e., $w(t')>\gamma cm$ for $t'\in[t_b,t_0]$). Given that $w_{\LLF}(t_l)\le m$, we know that $t_b\in [t_l,t_0]\subseteq [t_l,t]$, and furthermore that
\begin{equation}\label{eq:load-ineq 1}
W_{\LLF}(t)\le W_{\LLF}(t_b)-\gamma cm\cdot (t_0-t_b).
\end{equation}

Since $w_{\LLF}(t_b-1)\le \gamma cm$ and $w_{\LLF}(t_0)=cm$, at least $(1-\gamma)\cdot cm$ jobs are released during $[t_b,t_0]$. Since $w_{\LLF}(t_a)\le 2m$, among these $(1-\gamma)\cdot cm$ jobs, there are at least $(1-\gamma-2/c)\cdot cm$ finished before $t_a$. Let $j$ be any of these jobs released after $t_b$ and finished before $t_a$. Now consider the active jobs at time $t_l$. We know by (iv) that at least one of them, say job $j'$, has a deadline no less than $t$. Recall that we are given an agreeable instance, hence job $j$ that is released after $j'$ must have a larger deadline, i.e., $d_j\ge t$. Further, we have $r_j\le t_0$, and as job $j$ is a tight job it follows that $p_j\ge \alpha\cdot(d_j-r_j)\ge \alpha\cdot(t-t_0)$. Thus,
\begin{equation}\label{eq:load-ineq 2}
W_{\LLF}(t)\le W_{\LLF}(t_b)-(1-\gamma-2/c)\cdot\alpha cm\cdot(t-t_0).
\end{equation}

By combining Inequalities~\ref{eq:load-ineq 1} and~\ref{eq:load-ineq 2}, we have
\begin{align}\label{eq:load-ineq 3}
W_{\LLF}(t)&\le W_{\LLF}(t_b)-\max\{\gamma\cdot cm\cdot(t_0-t_b), (1-\gamma-2/c)\cdot\alpha cm\cdot(t-t_0)\}\notag\\
&\le W_{\LLF}(t_b)-1/2\cdot\gamma cm\cdot(t-t_b).
\end{align}
if we choose $\gamma$ such that $\alpha\cdot(1-\gamma-2/c)=\gamma$.

On the other hand, it is easy to observe the following three inequalities: $W_{\LLF}(t_b)\le W_{\OPT}(t_b)+\sum_{j\in\Lambda(t)} p_j(t_b)$, $W_{\OPT}(t)\ge W_{\OPT}(t_b)-m\cdot(t-t_b),$ and
$\sum_{j\in\Lambda(t)}  p_j(t)\ge \sum_{j\in\Lambda(t)} p_j(t_b)-m\cdot(t-t_b)$. Hence it follows that
\begin{equation}\label{eq:load-ineq 4}
W_{\LLF}(t_b)\le W_{\OPT}(t)+m\cdot(t-t_b)+\sum_{j\in\Lambda(t)} p_j(t)+m\cdot(t-t_b).
\end{equation} 
By combining Inequalites~\ref{eq:load-ineq 3} and~\ref{eq:load-ineq 4}, it can be easily seen that the lemma holds if we set
$\gamma/2\cdot c\ge 2$.
Combining this inequality with the equation $\alpha\cdot(1-\gamma-2/c)=\gamma$ above, we get $c\ge 4/\alpha+6$, i.e., for $c\ge 4/\alpha+6$, the lemma is true.
\end{proof}

\begin{proof}[Proof of Theorem~\ref{thm1:LLF-agreeable}] Assume the contrary, i.e., \LLF fails. From the counterexamples, we pick one with the minimum number of jobs and let $J$ denote the instance. Among all those jobs that are missed, we let $z$ be the job with the earliest release date. Since job $z$ is missed, there must exist some time $t_0$ when the laxity of job $z$ is $0$ but it is still not processed in favor of $cm$ other jobs. Let these jobs be job $1$ to job $cm$ with $r_1\le r_2\le\cdots\le r_{cm}$ and $S_z$ be the set of them. We have the following observations used throughout the proof:
\begin{itemize}
\item No job is released after $t_0$.
\item For $1\le i\le cm$, we have $d_i\le d_z$.
\item For $1\le i\le cm$, each job $i$ has $0$ laxity at $t_0$ and is not missed. Hence it is processed without preemption until time $d_i$.
\end{itemize}
  
Let $t_b=r_{\lfloor cm/3\rfloor}$ and $t_a=d_{\lfloor 2cm/3\rfloor}$. 
Consider time $t_b$. According to Lemma~\ref{lemma:mp-agreeable-LLF-load}, we have
\begin{eqnarray}\label{eq1:mp-load-LLF-agreeable}
W_{\LLF}(t_b)\le W_{\OPT}(t_b)+\sum_{j\in\Lambda(t_b)}p_j(t_b)\le W_{\OPT}(t_b)+m\cdot(t_a-t_b).
\end{eqnarray} 

To see why the second inequality holds, observe that there are $\lfloor cm/3\rfloor>m$ jobs being processed at time $t_b$. Thus any job in $\Lambda(t_b)$ must have a release date no later than $r_m$ and hence it has a deadline no later than $d_m\le t_a$. Furthermore, no job from $\Lambda(t_b)$ is missed, so the total remaining load of these jobs at $t_a$ is at most $m\cdot (t_a-t_b)$.

We focus on the interval $[t_b,t_a]$ and will establish a contradiction on the load processed by \LLF during this interval. %Consider $W_{\LLF}(t_a)$ and $W_{\OPT}(t_a)$.
We define a job $j$ to be a {\em bad} job if it contributes more to $W_{\OPT}(t_a)$ than to $W_{\LLF}(t_a)$, i.e., either one of the following is true:
\begin{itemize}
\item Job $j$ is not finished at $t_a$ in both \LLF and \OPT, and we have $p_j^{\OPT}(t_a)>p_j^{\LLF}(t_a)$.
\item At $t_a$, job $j$ is finished in \LLF but it is not finished in \OPT.
\end{itemize}

\begin{observation}
If the laxity of a job becomes $0$ in \LLF at some time during $[t_b,t_a]$, then it is not a bad job.
\end{observation}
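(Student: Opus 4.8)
The plan is to exploit that a job whose \LLF-laxity has dropped to $0$ is, from that moment on, ``maximally committed'': its remaining processing time equals the length of its remaining window, so it cannot possibly carry more remaining work in \LLF than in any feasible schedule, in particular than in \OPT.

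First I would record the consequence of zero laxity. If the laxity of $j$ in \LLF becomes $0$ at a time $\tau$, then $d_j-\tau-p_j^{\LLF}(\tau)=0$, i.e.\ $p_j^{\LLF}(\tau)=d_j-\tau$. Since \LLF processes $j$ for at most one unit in each unit-length step (no job runs on two machines at once), for every integer $t\geq\tau$ we get $p_j^{\LLF}(t)\geq p_j^{\LLF}(\tau)-(t-\tau)=d_j-t$; combined with $p_j^{\LLF}(t)\geq 0$ this yields $p_j^{\LLF}(t)\geq\max\{0,d_j-t\}$ for all such $t$. It is important that this step uses nothing about whether \LLF ever completes $j$: it applies verbatim even if $j$ is a missed job, such as $z$ itself, whose laxity equals $0$ at $t_0\in[t_b,t_a]$.

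Next I would invoke only the feasibility of \OPT: job $j$ is completed by $d_j$ in \OPT, and at any time $t\leq d_j$ its remaining work there is at most $d_j-t$; hence $p_j^{\OPT}(t)\leq\max\{0,d_j-t\}$ for every $t$. Since $\tau\in[t_b,t_a]$ we have $t_a\geq\tau$, so evaluating both estimates at $t=t_a$ gives $p_j^{\OPT}(t_a)\leq\max\{0,d_j-t_a\}\leq p_j^{\LLF}(t_a)$.

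Finally I would check that this single inequality excludes both alternatives in the definition of a bad job. If $j$ is unfinished at $t_a$ in both schedules, the inequality directly contradicts $p_j^{\OPT}(t_a)>p_j^{\LLF}(t_a)$. If $j$ is finished by \LLF but unfinished by \OPT at $t_a$, then $p_j^{\OPT}(t_a)>0$, which forces $d_j-t_a>0$ through the \OPT bound and hence $p_j^{\LLF}(t_a)\geq d_j-t_a>0$, contradicting that $j$ is finished in \LLF. Either way $j$ is not bad. I do not expect a genuine obstacle here; the only subtlety worth flagging is the one above, namely that the lower bound on $p_j^{\LLF}(t)$ must be obtained from the ``at most one unit per step'' property rather than from assuming that \LLF schedules $j$ to completion, so that missed jobs are covered as well.
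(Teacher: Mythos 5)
Your proof is correct. The paper states this observation without proof, evidently regarding it as immediate; your argument supplies the missing details and is exactly the natural one: zero laxity at $\tau$ pins $p_j^{\LLF}(\tau)=d_j-\tau$, the one-unit-per-step bound then gives $p_j^{\LLF}(t_a)\ge\max\{0,d_j-t_a\}$, feasibility of \OPT gives $p_j^{\OPT}(t_a)\le\max\{0,d_j-t_a\}$, and the resulting inequality $p_j^{\OPT}(t_a)\le p_j^{\LLF}(t_a)$ rules out both clauses in the definition of a bad job. Your remark that the lower bound on $p_j^{\LLF}$ must not presuppose that \LLF completes $j$ is the right subtlety to flag, since the observation is applied to $z$ itself, which \LLF misses.
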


Hence, any job in $S_z$ cannot be a bad job since their laxities are $0$ at $t_0\in[t_b,t_a]$. Since we have $t_a=d_{\lfloor 2/3cm\rfloor}$, and a bad job $j$ is not finished in OPT at time $t_a$, we get that $d_j\ge d_{\lfloor 2/3cm\rfloor}$, and $r_j\ge r_{\lfloor 2/3cm\rfloor}\ge t_b=r_{\lfloor cm/3\rfloor}$. Now delete all the bad jobs in the schedules produced by \LLF and \OPT and denote their corresponding remaining loads at some $t$ by $\hat{W}_{\LLF}(t)$ and $\hat{W}_{\OPT}(t)$, respectively. By the preceding observations, we again have $$\hat{W}_{\LLF}(t_b)\le \hat{W}_{\OPT}(t_b)+m\cdot(t_a-t_b)$$ because the same amount is subtracted from both sides of Inequality~\ref{eq1:mp-load-LLF-agreeable}. Further, by the definition of bad jobs, we have $$\hat{W}_{\LLF}(t_a)\ge \hat{W}_{\OPT}(t_a).$$

It follows directly that $$\hat{W}_{\LLF}(t_b)-\hat{W}_{\LLF}(t_a)\le 2m\cdot(t_a-t_b),$$ and we show in the following that $$\hat{W}_{\LLF}(t_b)-\hat{W}_{\LLF}(t_a)\ge \lfloor cm/3\rfloor\cdot(t_a-t_b)>2m\cdot(t_a-t_b),$$ which will yield a contradiction and thus prove the theorem. Towards proving the claim, let $\hat{w}_{\LLF}(t)$ be the load of \LLF at time $t$ in the schedule of \LLF after deleting bad jobs. We prove that for any $t\in [t_b,t_a]$, we have $\hat{w}_{\LLF}(t)\ge \lfloor cm/3\rfloor$.

We first observe that at any $t\in [t_0,t_a]$ job $\lfloor 2cm/3\rfloor+1$ to job $cm$ of $S_z$ are processed. Thus the statement is true for every $t\in [t_0,t_a]$.

Consider any time $t\in [t_b,t_0]$. The first $\lfloor cm/3\rfloor$ jobs from $S_z$ are released and not finished, which implies that $w_{\LLF}(t)\ge \lfloor cm/3\rfloor$. Obviously if at time $t$ no bad job is processed, then $\hat{w}_{\LLF}(t)=w_{\LLF}(t)$ and we are done. 
Otherwise at time $t$ some bad job is being processed. We claim that, during $[t_b,t_0]$, the jobs $1,\dots,\lfloor cm/3\rfloor$ from $S_z$ will never be preempted in favor of a bad job, and then $\hat{w}_{\LLF}(t)\ge \lfloor cm/3\rfloor$ follows. Suppose this is not true. Then there exists some time $t\in [t_b,t_0]$, some bad job $j$ and some other job $j'\in\{1,2,\cdots,\lfloor cm/3\rfloor\}$ such that $j'$ is preempted at $t$ while $j$ is being processed. We focus on the two jobs $j$ and $j'$, and pick up the last time $t'$ such that $j'$ is preempted while $j$ is being processed, and obviously $t'< t_0$. Since $j'$ is preempted at $t'$, we get $\ell_{j}(t')\le \ell_{j'}(t')$. We know that the laxity of a job decreases by $1$ if it is preempted, and remains the same if it gets processed. From time $t'+1$ to $t_0$, we know that $j'$ will never be preempted in favor of $j$ since $t'$ is the last such time. This means whenever $\ell_{j'}(t)$ decreases by $1$ (meaning that $j'$ is preempted), then $\ell_j(t)$ also decreases by $1$. So from $t'+1$ to $t_0$, $\ell_{j'}(t)$ decreases no more than $\ell_{j}(t)$, implying that $\ell_{j}(t_0)\le \ell_{j'}(t_0)=0$, which contradicts the fact that $j$ is a bad job.
\end{proof}

Recall that $\alpha$-loose jobs can be scheduled via \EDF or \LLF on $m/(1-\alpha)^2$ machines. If we set $\alpha=1/2$ and use \LLF on the $\alpha$-tight jobs, we get the following result.

\begin{theorem}
  If deadlines are agreeable, there is a $18$-competitive algorithm for preemptive semi-online machine minimization.
\end{theorem}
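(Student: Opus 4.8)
The plan is to assemble the theorem from the two ingredients proved above, taking $\alpha=\tfrac12$. The algorithm splits the instance statically at release dates: a job $j$ goes into the \emph{loose part} $J_L$ if $p_j\le\tfrac12(d_j-r_j)$ and into the \emph{tight part} $J_T$ otherwise. Since this classification is evaluated at $t=r_j$, it is fixed as soon as $j$ is released and can be performed online. We then reserve two disjoint pools of machines, one of size $4m$ and one of size $14m$ (where $m$ is the given optimum), run \EDF on $J_L$ using only the first pool, and \LLF on $J_T$ using only the second. Because the two pools are disjoint, a feasible schedule on each pool immediately yields a feasible schedule for $J$ on $4m+14m=18m$ machines.

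It remains to check that each pool is large enough. First observe that $m(J_L)\le m$ and $m(J_T)\le m$, because the restriction of any feasible $m$-machine schedule for $J$ to a subset of the jobs is again a feasible $m$-machine schedule. Every job of $J_L$ is $\tfrac12$-loose, so Theorem~\ref{thm: EDF-small} applies with $\alpha=\tfrac12$ and \EDF schedules $J_L$ feasibly on $m(J_L)/(1-\tfrac12)^2=4\,m(J_L)\le 4m$ machines. Every job of $J_T$ is $\tfrac12$-tight, and agreeability is inherited by $J_T\subseteq J$, so Theorem~\ref{thm1:LLF-agreeable} applies with $\alpha=\tfrac12$ and \LLF schedules $J_T$ feasibly on $(4/\tfrac12+6)\,m(J_T)=14\,m(J_T)\le 14m$ machines. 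These are exactly the two reserved pools.

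Finally, $\tfrac12$ is the right split: reserving pools of sizes $m/(1-\alpha)^2$ and $(4/\alpha+6)m$ gives a total of $g(\alpha)\,m$ with $g(\alpha)=\tfrac{1}{(1-\alpha)^2}+\tfrac{4}{\alpha}+6$, and $g'(\alpha)=\tfrac{2}{(1-\alpha)^3}-\tfrac{4}{\alpha^2}$ vanishes at $\alpha=\tfrac12$, where $g(\tfrac12)=4+8+6=18$. There is no real obstacle in this last step: all the substance already lives in Theorems~\ref{thm: EDF-small} and~\ref{thm1:LLF-agreeable}, and the only points that need a word of care are the routine observations that the optimum cannot increase on a subinstance and that both the looseness test (being decided at the release date) and the agreeability condition are preserved under passing to a subset of the jobs.
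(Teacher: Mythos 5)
Your proposal is correct and takes essentially the same approach as the paper: split the jobs at their release dates into $\tfrac12$-loose and $\tfrac12$-tight subsets, run \EDF on the loose part on $4m$ machines (Theorem~\ref{thm: EDF-small}) and \LLF on the tight part on $14m$ machines (Theorem~\ref{thm1:LLF-agreeable}), using disjoint machine pools. The optimization over $\alpha$ you include at the end matches the paper's choice $\alpha=\tfrac12$.
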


By applying \double, we get the following result for the fully online case.

\begin{theorem}
  If deadlines are agreeable, there is a $72$-competitive algorithm for preemptive online machine minimization.
\end{theorem}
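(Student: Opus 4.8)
The plan is to derive this purely as a corollary of the black-box reduction in Theorem~\ref{thm: black box}, applied to the $18$-competitive semi-online algorithm for agreeable deadlines established just above. Concretely, I would set $\alpha = 18$ and let $\mathsf{A}_{\alpha}(m)$ be that semi-online algorithm (schedule the $\tfrac12$-loose jobs by \EDF on $4m$ machines and the $\tfrac12$-tight jobs by \LLF on $14m$ machines, for a total of $18m$), plug it into the definition of \double, and invoke Theorem~\ref{thm: black box} to conclude a competitive ratio of $4\alpha = 72$ for the preemptive online problem.

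The one point that needs checking is that the reduction stays inside the class of agreeable instances, i.e., that every invocation of the semi-online subroutine is on an agreeable subinstance. This is immediate: \double runs $\mathsf{A}_{\alpha}$ separately on the job sets $J_i = \{\, j \mid r_j \in [t_i, t_{i+1}) \,\}$, and agreeability is a condition on pairs of jobs ($r_j < r_k \Rightarrow d_j \le d_k$), hence it is inherited by every subset of an agreeable instance. So each $J_i$ is agreeable, the $18$-competitive semi-online guarantee applies to it, and the feasibility argument in the proof of Theorem~\ref{thm: black box} carries over unchanged; the machine-counting part of that proof is oblivious to the instance class. One should also recall the remark in the proof of Theorem~\ref{thm: black box} that \double introduces no preemptions beyond those of $\mathsf{A}_{\alpha}$, so the bound is genuinely in the preemptive model as claimed.

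There is essentially no obstacle here; the only thing to be careful about is the bookkeeping that the ratio propagates as exactly $4\alpha$ (and not $4\alpha + \mathcal{O}(1)$), which is precisely what Theorem~\ref{thm: black box} delivers. As noted after Theorem~\ref{thm: black box}, randomizing over the doubling base would replace the factor $4$ by $e$ and give a randomized $18e$-competitive algorithm, but consistent with the paper's stance on expected guarantees in the hard real-time setting I would not include that in the statement.
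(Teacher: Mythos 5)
Your proposal is correct and is exactly what the paper does: take the $18$-competitive semi-online algorithm for agreeable deadlines (\EDF on $4m$ machines for $\tfrac12$-loose jobs plus \LLF on $14m$ machines for $\tfrac12$-tight jobs) and feed it through the doubling reduction of Theorem~\ref{thm: black box} to obtain the $4\cdot 18 = 72$ bound. Your additional sanity checks---that agreeability is inherited by the subinstances $J_i$ and that \double adds no preemptions beyond the subroutine's---are accurate and match the remarks the paper relies on.
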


\subsection{Non-preemptive Scheduling} Consider the non-preemptive version of machine minimization with agreeable deadlines. Again we schedule $\alpha$-tight and $\alpha$-loose jobs separately. In particular, $\alpha$-loose jobs are scheduled by a non-preemptive variant of \EDF, that is, at any time $[t,t+1]$, jobs processed at $[t-1,t]$ and are not finished continue to be processed. If there are free machines, then among all the jobs that are not processed we select the ones with earliest deadlines and process them. To schedule $\alpha$-tight jobs, we use \Mediumfit:

\medskip
{\bf Algorithm \Mediumfit}:
Upon the release of a job $j$, we schedule it non-preemptively during the medium part of its processing interval, i.e., $[r_j+1/2\cdot \ell_j, d_j-1/2\cdot \ell_j]$.
\medskip

We first prove a lemma about \EDF and the $\alpha$-tight jobs.

\begin{lemma}\label{lemma:agreeable-loose-edf}
  Non-preemptive \EDF is a $1/(1-\alpha)^2$-competitive algorithm for non-preemptive semi-online machine minimization problem if all jobs are $\alpha$-loose and have agreeable deadlines.
\end{lemma}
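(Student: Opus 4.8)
The plan is to follow the proof of Theorem~\ref{thm: EDF-small} as closely as possible, the only real change being that the harmless deadline-truncation of the instance used there must be replaced by a release-date truncation, which is exactly what agreeability buys us in the non-preemptive setting. First I would record that non-preemptive \EDF is a busy algorithm: at each integer time it keeps the jobs currently running and then hands every free machine to the waiting job of smallest deadline, so a machine is left idle only if no active job is waiting; hence Lemma~\ref{lemma: small-busy-load} applies to it on $\alpha$-loose instances. Now assume for contradiction that \EDF on $cm$ machines with $c=1/(1-\alpha)^2$ misses a deadline on some instance $J$ of $\alpha$-loose agreeable jobs, and let $j^\star$ be a missed job of earliest release date. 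Since schedules are non-preemptive, a job that ever starts is completed, so $j^\star$ never starts; it is therefore active and unprocessed throughout $[r_{j^\star},d_{j^\star})$, and since \EDF is busy this forces all $cm$ machines to be busy at every integer time of that interval, so exactly $cm(d_{j^\star}-r_{j^\star})$ units of work are processed there.

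The key step is to identify which jobs do this work. Put $\mathcal B=\{\,i\in J : r_i\le r_{j^\star}\text{ and }d_i\le d_{j^\star}\,\}$; then $j^\star\in\mathcal B$, $d_{\max}(\mathcal B)=d_{j^\star}$, and $m(\mathcal B)\le m$ since $\mathcal B\subseteq J$. I claim every job run by $\EDF(J)$ during $[r_{j^\star},d_{j^\star})$ lies in $\mathcal B$. Such a job $i\ne j^\star$ is either already running at time $r_{j^\star}$, in which case it started before $r_{j^\star}$, so $r_i<r_{j^\star}$ and hence $d_i\le d_{j^\star}$ by agreeability; or it is started by \EDF at some time of $[r_{j^\star},d_{j^\star})$ at which $j^\star$ is waiting, in which case \EDF prefers $i$ to $j^\star$, so $d_i\le d_{j^\star}$, and if $d_i<d_{j^\star}$ then $r_i\le r_{j^\star}$ by the contrapositive of agreeability while if $d_i=d_{j^\star}$ then $r_i\le r_{j^\star}$ by the tie-breaking rule. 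In both cases $i\in\mathcal B$. Writing $W^{\mathcal B}_{\EDF(J)}(t)$ for the total remaining processing time at $t$ of the jobs of $\mathcal B$ and using that $j^\star\in\mathcal B$ contributes none of the work done in the interval,
\[
W^{\mathcal B}_{\EDF(J)}(r_{j^\star})\ \ge\ cm\,(d_{j^\star}-r_{j^\star})+p_{j^\star}\ >\ cm\,(d_{j^\star}-r_{j^\star}).
\]

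For the matching upper bound I would compare $\EDF(J)$ with \EDF run on $\mathcal B$ alone. Every job of $J\setminus\mathcal B$ has release date at least $r_{j^\star}$ (if $d_i>d_{j^\star}$ this is again the contrapositive of agreeability), so the two runs are literally identical on $[0,r_{j^\star})$; in particular $W^{\mathcal B}_{\EDF(J)}(r_{j^\star})=W_{\EDF(\mathcal B)}(r_{j^\star})$, and by the choice of $j^\star$ no job of $\mathcal B$ has negative laxity at any time up to $r_{j^\star}$ (such a job would be missed with an earlier release date, or would violate individual feasibility at its own release date). Hence Lemma~\ref{lemma: small-busy-load} applies to \EDF on $\mathcal B$ at time $r_{j^\star}$ — we run it on $cm\ge c\,m(\mathcal B)$ machines, and extra machines only strengthen its conclusion — and, together with the fact that an optimal schedule for $\mathcal B$ finishes all its work by $d_{\max}(\mathcal B)=d_{j^\star}$, it gives
\[
W_{\EDF(\mathcal B)}(r_{j^\star})\ \le\ m(\mathcal B)\,(d_{j^\star}-r_{j^\star})+\tfrac{\alpha}{1-\alpha}\,m(\mathcal B)\,(d_{j^\star}-r_{j^\star})\ \le\ \tfrac{1}{1-\alpha}\,m\,(d_{j^\star}-r_{j^\star}).
\]
Since $d_{j^\star}>r_{j^\star}$ by individual feasibility of $j^\star$, combining the two displays yields $c<1/(1-\alpha)\le 1/(1-\alpha)^2=c$ whenever $0<\alpha<1$ (the case $\alpha=0$ being trivial), the desired contradiction.

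I expect the key claim above — that the window $[r_{j^\star},d_{j^\star})$ is served only by $\mathcal B$-jobs — to be the one subtle point. In the preemptive proof of Theorem~\ref{thm: EDF-small} one simply deletes all later-deadline jobs and nothing changes for the rest, but non-preemptively a late-deadline job that was started early can still block a machine, and it is precisely agreeability (a larger deadline forces a not-smaller release date, and no job released after $r_{j^\star}$ can be preferred to $j^\star$) that rules out such blockers lying outside $\mathcal B$. The remaining estimates are routine bookkeeping with the workload inequality of Lemma~\ref{lemma: small-busy-load}, mirroring Theorem~\ref{thm: EDF-small}.
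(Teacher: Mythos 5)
Your overall plan — truncate to a set of jobs dominated by $j^\star$ (a subset of the paper's $J'=\{j:d_j\le d_{j^\star}\}$), argue that the \EDF schedules agree before $r_{j^\star}$, and then play the Lemma~\ref{lemma: small-busy-load} upper bound against a workload lower bound on $[r_{j^\star},d_{j^\star})$ — is exactly the paper's route, and the upper-bound half is sound. However, the lower-bound half contains a genuine error. You assert that ``a job that ever starts is completed, so $j^\star$ never starts.'' Non-preemptive \EDF as defined in the paper keeps no-idling and will assign a free machine to the lowest-priority waiting job if nothing else is waiting; in particular it may start $j^\star$ at some time $t_1>d_{j^\star}-p_{j^\star}$, in which case $j^\star$ runs to completion but still misses its deadline. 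So ``never starts'' does not follow, and without it your two key claims — that all $cm$ machines are busy throughout $[r_{j^\star},d_{j^\star})$ and that all of that work is on $\mathcal B$-jobs — can fail after $t_1$. Consequently the lower bound $W^{\mathcal B}_{\EDF(J)}(r_{j^\star})>cm(d_{j^\star}-r_{j^\star})$ is not established, and note that if it were, you would be contradicting already $c>1/(1-\alpha)$, a strictly better ratio than the one being proved.

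The fix is small and brings you back to the paper's argument. You do not need $j^\star$ to never start; you only need that $j^\star$ is waiting (unstarted) for at least $(1-\alpha)(d_{j^\star}-r_{j^\star})$ time units, which follows from $\alpha$-looseness: if $j^\star$ starts at $t_1$ (or never), then $t_1>d_{j^\star}-p_{j^\star}$, so $t_1-r_{j^\star}>(d_{j^\star}-r_{j^\star})-p_{j^\star}\ge(1-\alpha)(d_{j^\star}-r_{j^\star})$. Throughout $[r_{j^\star},t_1)$ all $cm$ machines are busy (busyness) and, by the argument you give, only with $\mathcal B$-jobs; this yields $W^{\mathcal B}_{\EDF(J)}(r_{j^\star})>(1-\alpha)\,cm\,(d_{j^\star}-r_{j^\star})$. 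Compared with the upper bound $\tfrac{1}{1-\alpha}\,m\,(d_{j^\star}-r_{j^\star})$, the contradiction now requires exactly $c\ge 1/(1-\alpha)^2$, matching the claimed ratio and mirroring the paper's proof of Theorem~\ref{thm: EDF-small}.
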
 
\begin{proof}
We use the same technique as Theorem~\ref{thm: EDF-small}, i.e., we prove via contradiction on the workload.

We observe that the load inequality provided by Lemma~\ref{lemma: small-busy-load} is also true for non-preemptive machine minimization. Suppose \EDF fails, then among all the jobs that are missed we consider the job with the earliest release date and let it be $j^\star$. Again let $J'=\{j\in J\mid d_j\le d_{j^\star}\}$. We claim that \EDF also fails on $J'$. To see why, notice that jobs in $J\setminus J'$ have a larger deadline than $j^\star$. Hence according to the agreeable deadlines their release dates are larger or equal to $r_{j^\star}$. Furthermore according to \EDF they are of a lower priority in scheduling, thus the scheduling of job $j^\star$ could not be delayed in favor of jobs in $J\setminus J'$. This implies that \EDF also fails on $J'$. The remaining argument is the same as the proof of Theorem~\ref{thm: EDF-small}. 
\end{proof}

We remark that for non-agreeable deadlines, \EDF is not a constant-competitive algorithm for $\alpha$-loose jobs. Indeed, non-preemptive \mbox{(semi-)online} machine minimization %does not admit an $\Omega(n/m)$-competitive algorithm 
does not admit algorithms using less than~$n$ machines when $m=1$,  even if every job is $\alpha$-loose for any constant $\alpha$~\cite{Saha13}. The above argument fails as we can no longer assume that the job $j^\star$ missed by \EDF has the largest deadline. 

We continue with a lemma about \Mediumfit:

\begin{lemma}\label{lemma:agreeable-medium-fit}
\Mediumfit is a $(2\lceil 1/\alpha\rceil+1)$-competitive (semi-)online algorithm for non-preemptive machine minimization on $\alpha$-tight jobs with agreeable deadlines.
\end{lemma}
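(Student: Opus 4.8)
The plan is to bound, for each time $t$, the number of machines \Mediumfit{} occupies at $t$. Since \Mediumfit{} commits every job $j$ to the fixed interval $[r_j+\ell_j/2,\,d_j-\ell_j/2]$ — an interval of length exactly $p_j$ sitting inside $[r_j,d_j]$, call it the \emph{medium window} of $j$ — the machines used at $t$ are precisely the jobs whose medium window contains $t$; denote this set $S(t)$. So it suffices to prove $|S(t)|\le(2\lceil1/\alpha\rceil+1)\,m$ for every $t$. Write $K:=\lceil1/\alpha\rceil$ and fix $t$. I first collect the consequences of $j\in S(t)$: because the medium window is the central length-$p_j$ block of $[r_j,d_j]$ and contains $t$, we get $r_j\le t-\ell_j/2$ and $d_j\ge t+\ell_j/2$, hence $t-r_j\le p_j+\ell_j/2$ and $d_j-t\le p_j+\ell_j/2$; and because $j$ is $\alpha$-tight, $p_j>\alpha(d_j-r_j)$ gives $\ell_j<\tfrac{1-\alpha}{\alpha}p_j$ and $d_j-r_j<p_j/\alpha\le Kp_j$, so in particular $t-r_j,\ d_j-t<\tfrac{1+\alpha}{2\alpha}p_j\le\tfrac{K+1}{2}p_j$ and $\ell_j\le 2\min\{t-r_j,\,d_j-t\}$.

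Next I combine agreeability with a workload (area) argument. Pick $j^\star\in S(t)$ of minimum processing time $p^\star:=p_{j^\star}$; then its window satisfies $[r_{j^\star},d_{j^\star}]\subseteq(t-Kp^\star,\,t+Kp^\star)$. Order $S(t)$ by release date; since deadlines are agreeable this is also the deadline order, so splitting $S(t)=S^-\cup S^+$ by whether $r_j\le r_{j^\star}$ or $r_j>r_{j^\star}$, every $j\in S^-$ has $d_j\le d_{j^\star}<t+Kp^\star$ and every $j\in S^+$ has $t-r_j<t-r_{j^\star}<\tfrac{K+1}{2}p^\star$. The engine is the elementary fact that in any feasible schedule — in particular \OPT{} on $m$ machines — the total work placed in a fixed interval $I$ is at most $m|I|$, while each job $j$ unavoidably contributes at least $\max\{0,\,p_j-|[r_j,d_j]\setminus I|\}$, so $\sum_{j}\max\{0,\,p_j-|[r_j,d_j]\setminus I|\}\le m|I|$. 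For $S^-$ take $I^-:=[t-(K+1)p^\star,\ t+Kp^\star)$: since $d_j<t+Kp^\star$ the only part of $[r_j,d_j]$ outside $I^-$ is to the left of $t-(K+1)p^\star$, and substituting $d_j=r_j+p_j+\ell_j$ and using $\ell_j\le2(d_j-t)<(d_j-t)+Kp^\star$ shows that this unavoidable contribution exceeds $p^\star$ for every $j\in S^-$; hence $|S^-|\,p^\star\le m|I^-|=(2K+1)\,m\,p^\star$. A reflection of this estimate about $t$, using $I^+:=[t-Kp^\star,\ t+(K+1)p^\star)$ and $t-r_j<\tfrac{K+1}{2}p^\star$ for $j\in S^+$, bounds $|S^+|$ the same way, and merging the two (they are essentially reflections of one another, and only $j^\star$ sits entirely in the narrow overlap, which is what lets the combined count stay at $O(K)m$ rather than doubling) gives $|S(t)|=O\!\big(\lceil1/\alpha\rceil\big)\,m$; the bare argument already delivers $2\lceil1/\alpha\rceil+2$, and squeezing it down to the stated $2\lceil1/\alpha\rceil+1$ is a matter of using the sharp forms of the laxity inequalities above and balancing the widths of $I^-,I^+$.

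The step I expect to be the real obstacle is exactly this interval bookkeeping. A workload argument over the span $[\min_{j\in S(t)}r_j,\ \max_{j\in S(t)}d_j]$ of all windows is hopeless: it only yields a bound proportional to $p_{\max}/p_{\min}$, because a handful of long jobs can blow up the span arbitrarily while each contributes just $p_j$ units of work. The reason anchoring everything to the shortest job $j^\star$ and invoking agreeability rescues the argument is that it confines the relevant behaviour to a $\Theta(Kp^\star)$-window around $t$: a long job of $S^-$ released far to the left of $t$ must, by agreeability, deadline at most $d_{j^\star}<t+Kp^\star$, hence have tiny original laxity, hence most of its length is forced into $I^-$ — enough that its enforced contribution there is still at least $p^\star$. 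Everything else (checking the inequalities in the first paragraph, the reflection argument for $S^+$, and turning the per-time bound on $|S(t)|$ into the competitive ratio) is routine.
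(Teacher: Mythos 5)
Your proposal is correct (modulo the details you defer), but it takes a genuinely different route from the paper. The paper argues by pigeonhole on the \emph{optimal} schedule: if more than $(2\lceil 1/\alpha\rceil+1)m$ medium windows cover $t$, then $2\lceil 1/\alpha\rceil+2$ of the corresponding jobs share a single \OPT machine, so at least $\lceil 1/\alpha\rceil+1$ of them must all start after $t$ (or all finish before $t$) on that one machine; calling this set $J^\star$ and letting $j^\star\in J^\star$ be the latest-deadline job, the one-machine workload inequality $t+\sum_{j\in J^\star}p_j\le d_{j^\star}$ combined with $r_{j^\star}\le t-\ell_{j^\star}/2$ gives $\sum_{j\in J^\star\setminus\{j^\star\}}p_j\le\ell_{j^\star}/2\le t-r_j$ for every $j\in J^\star\setminus\{j^\star\}$, and $\alpha$-tightness then forces $\sum p_j>\alpha\lceil 1/\alpha\rceil\sum p_j\ge\sum p_j$, a contradiction. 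Your argument instead works on all $m$ machines at once: you anchor to the \emph{shortest} job $j^\star\in S(t)$, show agreeability traps every $Q_j$'s near endpoint inside a $\Theta(\lceil 1/\alpha\rceil p^\star)$-window around $t$, and then bound $|S(t)|$ by a density inequality $\sum_j\max\{0,p_j-|Q_j\setminus I|\}\le m|I|$. Your sketch correctly identifies that the union interval gives $2\lceil 1/\alpha\rceil+2$ and that the symmetric interval $I=[t-(\lceil 1/\alpha\rceil+\tfrac12)p^\star,\,t+(\lceil 1/\alpha\rceil+\tfrac12)p^\star)$ together with the sharper bound $t-r_j,\ d_j-t<\tfrac{1+\alpha}{2\alpha}p^\star\le(\lceil 1/\alpha\rceil-\tfrac12)p^\star$ recovers $2\lceil 1/\alpha\rceil+1$; I checked that this closes, so the ``squeezing'' step you left informal does go through. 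Both proofs need agreeability for the same reason (to pin the deadline of an early-released job in $S(t)$ near $t$), but the paper's pigeonhole argument is more elementary and avoids any density machinery, whereas yours is closer in spirit to the strong-density technique the paper develops later for the general $\OO(\log n)$ algorithm, and gives a cleaner ``anchor to the shortest active job'' picture.
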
  
\begin{proof}
Suppose that the algorithm runs more than $\left[2\lceil 1/\alpha\rceil+1\right]\cdot m$ jobs at some time $t$. By the generalized pigeonhole principle, $2\lceil 1/\alpha\rceil+2$ of them must run on the same machine in \OPT. Consequently, on this machine, $\lceil 1/\alpha\rceil+1$ of them are not finished after or not started before $t$. W.l.o.g.~assume that the latter is the case and call these jobs $J^\star$ (the other case is symmetric).

Observe that, as $J^\star$ has been completely released at $t$, we can assume that, in \OPT, these jobs are scheduled in order of their release dates, for this is the order of their deadlines. Now choose $j^\star$ with maximum release date, that is, also maximum deadline, from $J^\star$. It holds that $$t+\sum_{j\in J^\star}p_j\leq d_{j^\star}=r_{j^\star}+\ell_{j^\star}+p_{j^\star}.$$ Plugging in $r_{j^\star}\leq t-\ell_{j^\star}/2$, which follows from the definition of \Mediumfit, yields $$\sum_{j\in J^\star\setminus\{j^\star\}}p_j\leq\frac{\ell_{j^\star}}{2}.$$ Hence, for all $j\in J^\star\setminus\{j^\star\}$, we have $$r_{j}\leq r_{j^\star}\leq t-\frac{\ell_{j^\star}	}{2}\leq t-\sum_{j'\in J^\star\setminus\{j^\star\}}p_{j'},$$ which follows by our choice of $j^\star$ as well as the previous inequalities. Since all of these jobs are $\alpha$-tight, we have $$p_j\geq\alpha\cdot(t-r_j)\geq\alpha\cdot\sum_{j'\in J^\star\setminus\{j^\star\}}p_{j'}.$$ By summing over all such $j$, we get 
\begin{align*}
\sum_{j\in J^\star\setminus\{j^\star\}}p_{j}&>\alpha\cdot(|J^\star|-1)\cdot\sum_{j\in J^\star\setminus\{j^\star\}}p_{j}\\
&\geq \alpha\cdot\left\lceil\frac{1}{\alpha}\right\rceil\cdot\sum_{j\in J^\star\setminus\{j^\star\}}p_{j}\geq \sum_{j\in J^\star\setminus\{j^\star\}}p_{j},	
\end{align*}
which is a contradiction.
\end{proof}

Combining the above two lemmas and choosing $\alpha=1/2$, we have a guarantee for the semi-online case.

\begin{theorem}
  For non-preemptive machine minimization with agreeable deadlines, there is a semi-online $9$-competitive algorithm.
\end{theorem}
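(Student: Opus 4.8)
The plan is to combine the two lemmas just proved---Lemma~\ref{lemma:agreeable-loose-edf} for the $\alpha$-loose jobs and Lemma~\ref{lemma:agreeable-medium-fit} for the $\alpha$-tight jobs---into a single algorithm by running them on disjoint sets of machines. Concretely, given an instance of non-preemptive machine minimization with agreeable deadlines that admits a feasible schedule on $m$ machines, I would partition the job set $J$ into $J_{\mathrm{loose}}=\{j\in J\mid p_j\le \tfrac12(d_j-r_j)\}$ and $J_{\mathrm{tight}}=J\setminus J_{\mathrm{loose}}$, i.e.\ I fix $\alpha=1/2$. This partition can be computed online since a job's membership depends only on its own release date, deadline, and processing time, all revealed at $r_j$.

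The key observation is that both $J_{\mathrm{loose}}$ and $J_{\mathrm{tight}}$ are themselves feasible on $m$ machines: any feasible schedule for $J$ restricted to the respective subset of jobs is feasible, so $m(J_{\mathrm{loose}})\le m$ and $m(J_{\mathrm{tight}})\le m$. Moreover, each subset still consists of agreeable-deadline jobs, and by construction every job in $J_{\mathrm{loose}}$ is $\tfrac12$-loose while every job in $J_{\mathrm{tight}}$ is $\tfrac12$-tight. Hence I can apply non-preemptive \EDF to $J_{\mathrm{loose}}$ on $\lceil m/(1-\alpha)^2\rceil = 4m$ machines (Lemma~\ref{lemma:agreeable-loose-edf} with $\alpha=1/2$) and \Mediumfit to $J_{\mathrm{tight}}$ on $(2\lceil 1/\alpha\rceil+1)m = 5m$ machines (Lemma~\ref{lemma:agreeable-medium-fit} with $\alpha=1/2$), each producing a feasible non-preemptive schedule on its own machines. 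Since the two machine pools are disjoint, the union is a feasible non-preemptive schedule for all of $J$ on $4m+5m=9m$ machines, giving the claimed competitive ratio of $9$. Note the algorithm is semi-online: it needs $m$ only to know how many machines to dedicate to each pool, which is exactly the semi-online model.

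Since both lemmas are already established and the only new ingredient is the routine "split into two parts, schedule each on its own machines, add the counts" argument, there is essentially no hard step here---the only thing to be a little careful about is that $1/(1-1/2)^2 = 4$ is an integer so no rounding loss occurs, and that restricting a feasible schedule to a sub-instance preserves feasibility and the agreeable property (both immediate). I would write the proof in two or three sentences along exactly these lines.

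\begin{proof}
Given an instance $J$ with agreeable deadlines feasible on $m$ machines, set $\alpha=1/2$ and split $J$ into the $\tfrac12$-loose jobs $J_{\mathrm{loose}}$ and the $\tfrac12$-tight jobs $J_{\mathrm{tight}}$; this can be done online. Both sub-instances inherit agreeable deadlines, and any feasible $m$-machine schedule for $J$ restricts to feasible $m$-machine schedules for $J_{\mathrm{loose}}$ and $J_{\mathrm{tight}}$, so $m(J_{\mathrm{loose}}),m(J_{\mathrm{tight}})\le m$. Schedule $J_{\mathrm{loose}}$ with non-preemptive \EDF on $1/(1-\alpha)^2\cdot m=4m$ machines, which is feasible by Lemma~\ref{lemma:agreeable-loose-edf}, and schedule $J_{\mathrm{tight}}$ with \Mediumfit on $(2\lceil 1/\alpha\rceil+1)\cdot m=5m$ machines, which is feasible by Lemma~\ref{lemma:agreeable-medium-fit}. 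Using disjoint machine pools, the combined schedule is a feasible non-preemptive schedule for $J$ on $4m+5m=9m$ machines.
\end{proof}
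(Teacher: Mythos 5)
Your proof is correct and matches the paper's argument exactly: the paper also partitions into $\tfrac12$-tight and $\tfrac12$-loose jobs, schedules the former via \Mediumfit on $5m$ machines (Lemma~\ref{lemma:agreeable-medium-fit}) and the latter via non-preemptive \EDF on $4m$ machines (Lemma~\ref{lemma:agreeable-loose-edf}), for $9m$ in total. The paper states this in one line; you have just written out the same routine combination in more detail.
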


A fully online $(2\cdot\lceil 1/\alpha\rceil+1+4/(1-\alpha)^2)$-competitive algorithm can be obtained by using \Mediumfit for tight jobs and combining \EDF with \double. Setting $\alpha=1/3$ we get the following.

\begin{theorem}
  For non-preemptive machine minimization with agreeable deadlines, there is a online $16$-competitive algorithm.
\end{theorem}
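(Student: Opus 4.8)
The plan is to run two algorithms in parallel on two disjoint pools of machines, with $\alpha=1/3$ fixed in advance. Since $\alpha$ is a constant, whether an arriving job is $\alpha$-loose or $\alpha$-tight is decided at its release date, so each job can be routed online to the appropriate pool. The structural fact I would lean on is that if a job set $J$ can be feasibly scheduled on $m$ machines, then so can any subset $J'\subseteq J$ (just restrict the schedule), so $m(J')\le m(J)$. Hence, if $\mathsf{A}_1$ is $c_1$-competitive on the $\alpha$-loose subinstance and $\mathsf{A}_2$ is $c_2$-competitive on the $\alpha$-tight subinstance, the combined algorithm uses at most $c_1 m(J)+c_2 m(J)$ machines and is therefore $(c_1+c_2)$-competitive on $J$.

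For the $\alpha$-tight jobs I would use \Mediumfit. By Lemma~\ref{lemma:agreeable-medium-fit} it is $(2\lceil 1/\alpha\rceil+1)$-competitive on $\alpha$-tight agreeable jobs, and — this is the point worth emphasizing — \Mediumfit never consults the optimum value $m$: it simply places each job in the middle of its own window upon release. Thus its guarantee already holds in the fully online model with no extra argument. At $\alpha=1/3$ this contributes $2\cdot 3+1=7$.

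For the $\alpha$-loose jobs the situation is different, since Lemma~\ref{lemma:agreeable-loose-edf} gives a $1/(1-\alpha)^2$-competitive bound for non-preemptive \EDF only in the semi-online model (\EDF needs to be told the machine count). Here I would invoke Theorem~\ref{thm: black box}: feeding this semi-online algorithm into \double produces a fully online algorithm that is $4/(1-\alpha)^2$-competitive, and \double preserves non-preemptiveness (it never preempts a job that the subroutine would not preempt), so the result is a legitimate non-preemptive online algorithm for the loose subinstance. At $\alpha=1/3$ this contributes $4/(2/3)^2=9$.

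Adding the two parts, the combined algorithm is $\bigl(2\lceil 1/\alpha\rceil+1+4/(1-\alpha)^2\bigr)$-competitive, which at $\alpha=1/3$ evaluates to $7+9=16$. I do not anticipate a genuine obstacle: the only things needing care are (i) confirming that splitting into loose and tight costs no more than an additive combination of the two ratios, which is exactly the subset-monotonicity of $m(\cdot)$ above, and (ii) checking that \double is applied only to the \EDF component (since \Mediumfit is already fully online). Everything else reduces to the three already-established results.
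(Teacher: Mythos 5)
Your proposal is correct and follows exactly the paper's argument: \Mediumfit on the $\alpha$-tight jobs (already online, contributing $2\lceil 1/\alpha\rceil+1$) plus \double wrapped around non-preemptive \EDF on the $\alpha$-loose jobs (contributing $4/(1-\alpha)^2$), combined via subset-monotonicity of $m(\cdot)$ and evaluated at $\alpha=1/3$ to give $7+9=16$. The paper's own justification is precisely this one-line composition, so there is no substantive difference.
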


\section{Equal processing times}
\label{sec:special-cases2}

Consider the special case of (semi-)online machine minimization of jobs that have equal processing times, that is, $p_j=p$ for all~$j\in J$.

\subsection{Preemptive Scheduling}

We first study preemptive semi-online machine minimization on instances with equal processing times. We firstly show that \EDF is $3$-competitive for semi-online machine minimization. Then we complement this result by a lower bound of $8/7$ on the competitive ratio of any deterministic online algorithm for the special case of equal processing times. Finally, we give a $9.38$-competitive online algorithm.

%We consider the special case when $p_j=p$, jobs are allowed to preempt and the offline optimum is known to be at most $m$. We give a 3-competitive algorithm.

%\noindent\textbf{Algorithm:} Apply \EDF with $3m$ machines.

\begin{theorem}\label{thm:pre-samesize-optknown-EDF}
  %No job is missed by running \EDF with $3m$ machines. 
  \EDF is a $3$-competitive algorithm for semi-online machine minimization when all processing times are equal.
\end{theorem}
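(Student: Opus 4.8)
The plan is to mimic the contradiction-by-workload argument used for Theorem~\ref{thm: EDF-small}, but to replace the laxity-based load inequality (Lemma~\ref{lemma: small-busy-load}), which crucially exploited that all jobs are $\alpha$-loose, by an argument tailored to equal processing times. Suppose \EDF runs on $cm$ machines with $c=3$ and fails. As in the proof of Theorem~\ref{thm: EDF-small}, among the missed jobs let $j^\star$ be one with the earliest release date, and restrict attention to $J'=\{j\in J\mid d_j\le d_{j^\star}\}$; since all jobs in $J\setminus J'$ have larger deadlines and hence lower \EDF-priority, \EDF still misses $j^\star$ on $J'$, and now $d_{j^\star}=\max_{j\in J'}d_j$. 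All subsequent arguments are for the instance $J'$.

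\textbf{Key step: a workload bound on the interval ending at $d_{j^\star}$.} Let $t_0$ be the latest time before $d_{j^\star}$ at which some machine is idle in the \EDF schedule (if no such time exists, then all $cm$ machines are busy throughout $[0,d_{j^\star}]$, so \EDF has processed at least $cm\cdot d_{j^\star}$ units by time $d_{j^\star}$; since $m$ machines suffice for $J'$ and all work of $J'$ lies in $[0,d_{j^\star}]$, the total volume is at most $m\cdot d_{j^\star}<cm\cdot d_{j^\star}$, a contradiction — so $t_0$ is well defined). By the busy property, at time $t_0$ fewer than $cm$ jobs are active; since all remaining processing times are at most $p$ and deadlines are at most $d_{j^\star}$, and here is where equal processing times enter: at $t_0$ every active job has been released but not started or has been partially processed, so its remaining work is at most $p\le d_j - r_j \le d_{j^\star}-t_0$ in the relevant window; more importantly, the number of active jobs at $t_0$ that still need processing in $[t_0,d_{j^\star}]$ is bounded, and each contributes at most $p$. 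I would count: $W_{\EDF}(t_0)$ is at most (work of jobs active at $t_0$) plus (work of jobs released after $t_0$). The first term is at most $(cm-1)\cdot p$; the second is at most $W_{\OPT}(t_0)$. Meanwhile, from $t_0+1$ onward all $cm$ machines are busy until $j^\star$ would have to finish, and $j^\star$ is missed, so \EDF processes at least $cm\cdot(d_{j^\star}-t_0-1)$ units — actually at least $cm\cdot(d_{j^\star}-t_0) - (cm - 1)$, but cleaner: $j^\star$ itself requires $p$ units in $[t_0,d_{j^\star}]$ and is not completed, forcing $cm\cdot(d_{j^\star}-t_0)$ units of other work plus the fact that $j^\star$ has release date at most $t_0$. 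Combining the upper bound on $W_{\EDF}(t_0)$ with the busy-interval processing rate and with feasibility of \OPT (which gives $W_{\OPT}(t_0)\le m\cdot(d_{j^\star}-t_0)$ since all of $J'$'s work fits on $m$ machines within $[0,d_{j^\star}]$, hence the as-yet-unprocessed part fits in $[t_0,d_{j^\star}]$) should yield an inequality that fails precisely when $c\ge 3$.

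\textbf{Main obstacle.} The delicate point is bounding the number of jobs active at the last idle time $t_0$ and their total remaining work without a laxity hypothesis: naively this is $(cm-1)p$, but $p$ need not be small relative to $d_{j^\star}-t_0$. The resolution I expect the authors use is to pick $t_0$ more cleverly — e.g. as the last time at which \emph{fewer than $m$} jobs are active, or to iterate the "last idle time" argument — so that the jobs active at $t_0$ each have a deadline close to $t_0$ (because \EDF was running the $cm$ smallest deadlines and one machine was free, every released-and-unfinished job is among the scheduled ones), giving remaining work bounded by $d_{\text{something}}-t_0$ rather than $p$. Making this counting tight to get exactly the constant $3$ (rather than a weaker constant) is the real work; the rest is the same contradiction template as Theorem~\ref{thm: EDF-small}, comparing an upper bound from the (possibly iterated) workload inequality against the lower bound forced by $j^\star$ being missed on a fully busy interval.
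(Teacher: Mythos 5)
Your high-level template (restrict to $J'$, anchor at a slack/idle time, derive a workload contradiction) is the right shape, but you yourself flag the crucial step as unresolved, and the resolution you guess — picking $t_0$ as the last time fewer than $m$ jobs are active, or iterating the last-idle-time argument, and then bounding remaining work by deadline proximity — is not what makes the paper's bound work. The paper anchors at a completely different point: it takes $t_0\in[d_{j^\star}-p,\,d_{j^\star}]$ to be a time at which $j^\star$ is \emph{preempted} (such a $t_0$ must exist, since $j^\star$ is active on all of $[d_{j^\star}-p,d_{j^\star}]$ and would finish if it ran uninterruptedly there). At $t_0$ all $3m$ machines run jobs with deadlines $\le d_{j^\star}$. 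The equal-processing-time hypothesis is then used in a per-machine/per-job accounting over the length-$\le 2p$ window $[t_0-p,\,d_{j^\star}]$: each of those $3m$ jobs $i$ needs $p_i(t_0)$ units after $t_0$ (it is not missed), and during the preceding $p-p_i(t_0)$ steps of $[t_0-(p-p_i(t_0)),t_0]$ job $i$ is active, so at each such step it is either run or blocked by a fully busy machine; either way a unit of machine time is consumed. Summing gives that \EDF processes more than $3mp$ units in $[t_0-p,d_{j^\star}]$, hence $W_{\EDF}(t_0-p)>3mp$, while feasibility forces $W_{\OPT}(t_0-p)\le m\cdot\bigl(d_{j^\star}-(t_0-p)\bigr)\le 2mp$. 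The "last slack time" you mention does appear, but only afterwards, to close the contradiction: one takes the latest $t'<t_0-p$ with $w_{\EDF}(t')\le w_{\OPT}(t')\le m$, observes $W_{\EDF}(t')\le W_{\OPT}(t')+mp$ (at most $m$ active jobs, each with $\le p$ remaining), and compares this with the $\ge m$-per-step workload drop on both sides between $t'$ and $t_0-p$.

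So the concrete gap in your sketch is twofold. First, the anchor must be a preemption time of $j^\star$ within $p$ of its deadline, so that the relevant window has length at most $2p$ — this is what makes $p$ itself the controlling length scale and gives $W_{\OPT}(t_0-p)\le 2mp$; your worry that "$p$ need not be small relative to $d_{j^\star}-t_0$" disappears by construction. Second, the lower bound on \EDF's workload is \emph{not} obtained by bounding remaining processing times by deadline proximity at an idle time; it is obtained by the per-job accounting above, which shows each of the $3m$ jobs scheduled at $t_0$ contributes a full $p$ of machine-busy-time to the window $[t_0-p,d_{j^\star}]$. Without these two ideas you get at best a weaker constant, and your suggested alternative anchor does not clearly yield either estimate.
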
 
\begin{proof}
Suppose the theorem is not true. Among those instances \EDF fails at, we pick one with the minimum number of jobs. It is easy to see that in this counterexample there is only one job missing its deadline. Let $j$ be this job. Then $d_j$ is the maximum deadline among all the jobs and we let $d_j=d$. Furthermore, during $[d-p,d]$ there must be some time when job $j$ is preempted. Let $t_0\in [d-p,d]$ be the time when job $j$ is preempted, and $S_{t_0}$ be the set of jobs that are processed at time $t_0$.

We claim that the total workload processed by \EDF during $[t_0-p,d]$ is at least $3mp$. To see why, consider the \EDF schedule. For simplicity we assume that $S_{t_0}$ contains job $1$ to job $3m$ and job $i$ is always scheduled on machine $i$ during $[t_0-p,d]$ for $1\le i\le 3m$. We show that the workload of machine $i$ in $[t_0-p,d]$ is at least $p$ and the claim follows. Consider job $i$. %Let $p_i(t_0)$ be its remaining processing time at time $t$.
Then due to the fact that job $i$ does not miss its deadline, we know that the workload on machine $i$ during $[t_0,d]$ is at least $p_i(t_0)$. Consider any time in $[t_0-(p-p_i(t_0)),t_0]$. either job $i$ is scheduled or it is not in favor of other jobs. In both cases, the workload of machine $i$ is at least $p-p_{i}(t_0)$. Thus, the workload of machine $i$ during $[t_0-p,d]$ is at least $p$.

Since \EDF fails, the previous considerations yield 
\begin{equation}\label{eq:unitp1}
  W_{\EDF}(t_0-p)>3mp.
\end{equation}
By the feasibility of the instance however, we know that 
\begin{equation}\label{eq:unitp2}
W_{\OPT}(t_0-p)\le m\cdot[d-(t_0-p)]\le 2mp
\end{equation} (recall that $t_0\ge d-p$). Therefore before time $t_0-p$, there must exist some time $t$ when $w_{\EDF}(t)\le w_{\OPT}(t)\le m$ and, if $t'$ is the latest such time, that $w_{\EDF}(t)>w_{\OPT}(t)$ for any $t'<t<t_0$. As there are at most $m$ active jobs at time $t'$ in the \EDF schedule, we now get that 
\begin{equation}\label{eq:unitp3}
  W_{\EDF}(t')\le W_{\OPT}(t')+mp.
\end{equation}
On the other hand, we have the inequalities $W_{\EDF}(t')\ge W_{\EDF}(t_0-p)+m\cdot(t_0-p-t')$ and $W_{\OPT}(t')\le W_{\OPT}(t_0-p)+m\cdot(t_0-p-t')$. Plugging in Inequalities~\ref{eq:unitp1} and~\ref{eq:unitp2}, respectively, into both of the latter ones yields a contradiction to Inequality~\ref{eq:unitp3}.
\end{proof}

Moreover, we provide a non-trivial lower bound. The underlying structure resembles the one known from the lower bound proof for the general semi-online machine minimization problem from~\cite{phillipsSTW02}.

\begin{theorem}\label{thm:LB}
Let $c<8/7$. On instances fulfilling $p_j\equiv p\in\N$, there does not exist a $c$-competitive algorithm for the semi-online machine minimization problem.
\end{theorem}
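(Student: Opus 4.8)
My plan is to construct an adversarial instance with $p_j \equiv p$ for a large value of $p$, in which the adversary commits to an optimum that uses $m$ machines (for suitable $m$), and then argues that any deterministic online algorithm is forced, at some time, to have more than $\tfrac{8}{7}m$ active unfinished jobs that all must be processed in a window too short to accommodate them on $\tfrac{8}{7}m$ machines. Following the cited construction of Phillips et al.~\cite{phillipsSTW02}, the instance proceeds in phases. In an initial phase the adversary releases a batch of jobs all with a common release date and a common (large) deadline, chosen so that $m$ machines exactly suffice offline; the online algorithm, not knowing which jobs the adversary will ``follow up on,'' must spread its processing over these jobs, and by an averaging argument there is a sub-batch on which it has fallen behind by a controlled amount. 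The adversary then releases a second, smaller batch of jobs with a nearer deadline, overlapping in time with the unfinished remnants of the chosen sub-batch; feasibility of the offline optimum on $m$ machines is maintained by having the optimum schedule the first sub-batch early and the second batch in the freed-up space, while the online algorithm now faces the union of leftover old work and fresh tight work in a short interval.

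The key step is the volume accounting. First I would set up the parameters: let the first batch consist of $am$ jobs released at time $0$ with deadline $D$, and the second batch of $bm$ jobs released at time $\sigma$ with deadline $D'$, with the ratios $a,b$ and the times $\sigma, D, D'$ chosen (as rationals, then scaled by a common multiple to make everything integral, using $p$ large) so that: (i) the offline optimum fits on exactly $m$ machines — this needs $am\cdot p \le mD$ and a compatible placement of the second batch in the interval $[\sigma, D']$ after the first batch is pushed to $[0,ap]$; and (ii) at time $\sigma$ every online algorithm has processed total volume at most $m\sigma$ on the first batch, so some $bm$ of those jobs retain total remaining volume at least $bm\cdot p - (\text{the share the algorithm could have devoted to them})$. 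The adversary picks the second batch's deadline $D'$ so that the union of this residual first-batch work plus the $bm\cdot p$ new work must all complete in $[\sigma, D']$; dividing total required volume by $(D'-\sigma)$ gives a lower bound on the machine count the online algorithm needs, and I would choose the free parameters to push this ratio above $\tfrac{8}{7}$. The arithmetic optimization of $a,b,\sigma,D,D'$ to hit exactly the threshold $8/7$ is the only delicate part; everything else is bookkeeping.

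The main obstacle I anticipate is twofold. First, making the argument work against an \emph{adaptive online} adversary that does not know the algorithm's future decisions: the lower bound must hold no matter how the online algorithm splits its effort in the first phase, which is precisely why an averaging/pigeonhole step over sub-batches is needed rather than a single ``follow-up'' job — and getting the constant to $8/7$ rather than something weaker requires the averaging to be tight, possibly necessitating more than two phases or a carefully weighted release pattern as in~\cite{phillipsSTW02}. Second, the integrality and ``$p$ large'' reductions must be handled cleanly: one works with rational target times, clears denominators, and notes that the competitive ratio is scale-invariant under multiplying all $p_j$, $r_j$, $d_j$ by a common integer, so the lower bound of $8/7$ on the ratio (which is an open inequality $c < 8/7$) is attained in the limit and hence rules out every fixed $c < 8/7$. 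I would state the instance explicitly, verify offline feasibility on $m$ machines by exhibiting the schedule, then carry out the two-line volume computation showing any $c$-competitive algorithm with $c<8/7$ misses a deadline.
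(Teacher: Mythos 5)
There is a genuine gap, and it is the central one. Your plan is a finite (essentially two-phase) adversarial construction with a volume/averaging argument, but a bounded construction of this type will not push the ratio up to $8/7$. The paper's proof is an \emph{iterated} construction governed by a monotone potential. The key technical lemma shows: whenever at time $t$ the optimum has cleared all released work, the algorithm carries a backlog of at least $w$, and every job still active for the algorithm is due at $t+3$, the adversary can release a fixed pattern of jobs (all with $p_j=2$: $m$ ``more tight'' jobs due at $t+3$ and $m/2$ ``more loose'' jobs due at $t+6$, plus a \emph{threatened} release of $m$ zero-laxity jobs at $t+2$) so that at time $t+3$ the same three invariants hold again with strictly larger backlog $w' > w$, where the increment $w'-w$ is bounded below by a fixed positive multiple of $m$ precisely because $c < 8/7$. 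Starting from the empty instance and iterating, the backlog of work that must all finish within a $3$-unit window eventually exceeds $3cm$, which is impossible on $cm$ machines; that is the contradiction.

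What your proposal lacks is this potential/invariant structure: a quantity (the backlog of jobs due three steps ahead) that the adversary can increase by a fixed amount per phase against any purported $c$-competitive algorithm with $c<8/7$, together with an absolute cap ($3cm$) on that quantity enforced by the short remaining window. You do flag ``possibly more than two phases'' as a possible refinement, but a hedge is not a mechanism; the way additional phases help, and why the number of phases grows without bound as $c\to 8/7$, is the substance of the argument. A two-phase averaging step over sub-batches of a single long release would at best give a lower bound strictly below $8/7$. Two further discrepancies: the paper does not need a ``large $p$ and rescale'' limit (it works with $p=2$ throughout, restricting to even $m$), and the threatened release at $t+2$ plays the role your averaging-over-sub-batches was meant to play---it forces the online algorithm to commit capacity to short-deadline work up front, rather than the adversary discovering a neglected sub-batch after the fact.
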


W.l.o.g.~we restrict to even $m$. We will heavily make use of the following technical lemma.

\begin{lemma}
Let $c<8/7$, $m$ be even and $\mathsf{A}$ be a $c$-competitive algorithm for semi-online machine minimization.

Assume that $J$ with $p_j=2$, for all $j\in J$, is an instance with the following properties:
\begin{enumerate}[(i)]
  \item $W_{\OPT(J)}(t)=0$,
  \item $W_{\A(J)}(t)\geq w$ and 
  \item $d_j=t+3$ for all jobs $j$ active at $t$ in $\A(J)$.
\end{enumerate}
Then there exists an instance $J^\prime\supset J$ such that $p_j=2$, for all $j\in J'$, with the following properties:
\begin{enumerate}[(i)]
  \item $W_{\OPT(J')}(t+3)=0$,
  \item $W_{\A(J')}(t+3)\geq w^\prime>w$ and 
  \item $d_j=t+6$ for all jobs $j$ active at $t+3$ in $\A(J')$.
\end{enumerate}
\end{lemma}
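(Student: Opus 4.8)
The plan is to obtain $J'$ from $J$ by releasing, at times $t$ and $t+1$ only, further jobs of processing time~$2$, each with deadline either $t+3$ (``urgent'') or $t+6$ (``deferred''), and to choose the numbers released -- in particular those released at $t+1$ -- \emph{adaptively}, i.e.\ after observing how $\A$ has used its machines during $[t,t+1)$; this is the equal-processing-time analogue of the recursive adversary of Phillips et al.~\cite{phillipsSTW02}. Several constraints pin down the admissible shape of the construction: $W_{\OPT(J)}(t)=0$ together with $p_j=2$ forces every job of $J$ to be released before $t$; no new job may be released at or after $t+2$, since a length-$2$ job released then cannot be completed by $t+3$ and would violate conclusion~(i); and no new job may have a deadline strictly between $t+3$ and $t+6$, since if $\A$ failed to complete it by $t+3$ it would be active at $t+3$ and violate conclusion~(iii). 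Throughout, write $cm$ for the number of machines used by~$\A$, so $c<8/7$.

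Two of the three conclusions hold by volume counting, irrespective of the adaptive choices. For~(i): since $W_{\OPT(J)}(t)=0$ and $J$ has no job released at $t$ or later, the optimal schedule of $J$ leaves all $m$ machines idle throughout $[t,t+3)$, i.e.\ $3m$ units of capacity; keeping the total volume of new jobs at most $3m$, one obtains a feasible schedule of $J'$ completed by $t+3$ by first placing the jobs released at $t+1$ on dedicated machines during $[t+1,t+3)$ and then inserting the jobs released at $t$ into the remaining capacity by a McNaughton-style wrap-around argument (each new job having length $2\le 3$). Hence $W_{\OPT(J')}(t+3)=0$, and incidentally $m(J')=m(J)=m$, so the semi-online machine bound handed to $\A$ is unchanged and $\A$ behaves on $J'$ exactly as on $J$ up to time~$t$, giving $W_{\A(J')}(t)=W_{\A(J)}(t)\ge w$. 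For~(iii): $\A$ is $c$-competitive, hence misses no deadline, and every job of $J'$ has deadline $t+3$ or $t+6$, so every job still active in $\A(J')$ at $t+3$ has deadline exactly $t+6$.

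The substance is conclusion~(ii), and here plain counting does not suffice: $\A$ must finish within $[t,t+3)$ both its inherited deadline-$(t+3)$ backlog, of volume $W_{\A(J)}(t)\ge w$, and all $2N_{\mathrm{urg}}$ newly released urgent units, so the deferred volume $\A$ still carries at time $t+3$ equals $2N_{\mathrm{def}}+2N_{\mathrm{urg}}+W_{\A(J)}(t)-3cm$ plus the total time $\A$ leaves machines idle in $[t,t+3)$; dropping the idle term and using only $W_{\A(J)}(t)\ge w$ together with the \OPT-budget $2(N_{\mathrm{urg}}+N_{\mathrm{def}})\le 3m<3cm$ leaves the useless bound $2(N_{\mathrm{urg}}+N_{\mathrm{def}})+w-3cm<w$. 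The point of the construction is therefore to force $\A$ to leave machines idle at moments when $\OPT$ -- still holding $m$ spare machines -- does not: release comparatively few jobs at time~$t$, so that in the slot $[t,t+1)$ fewer than $cm$ jobs are available to $\A$ and it must waste the difference, and then at time $t+1$ release a burst of (rigid) urgent jobs together with enough deferred jobs -- in quantities chosen in response to $\A$'s first-slot decision -- so that $\A$ is refilled yet still forced to push a large deferred volume past~$t+3$. Summing the per-slot capacity identities turns this wasted capacity directly into extra deferred backlog and yields $W_{\A(J')}(t+3)\ge w'>w$. I expect the main obstacle to be selecting these quantities correctly -- probably via a dichotomy on whether the inherited backlog $w$ and the number of inherited active jobs are large or small -- so that the forced waste strictly exceeds $3cm-3m=3(c-1)m$; it is exactly this comparison, with the integrality of the job counts forcing $m$ even, that pins down the threshold $8/7$, as opposed to the $5/4$ available in the unrestricted problem.
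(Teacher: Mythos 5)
The overall shape of your construction (append jobs of length~$2$ with deadlines $t+3$ and $t+6$, keep the new volume at most $3m$, argue~(i) by a McNaughton-style packing and~(iii) by feasibility of $\A$) agrees with the paper's, but on the only substantive conclusion, (ii), you depart from the paper's argument and then stop short of proving it.

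Two concrete issues. First, you explicitly rule out releasing anything at or after $t+2$ ``since a length-$2$ job released then cannot be completed by $t+3$ and would violate conclusion~(i).'' That reasoning applies only to jobs actually contained in $J'$; it blinds you to the \emph{counterfactual threat} that drives the paper's proof. The paper releases $m$ urgent jobs ($d=t+3$) and $m/2$ deferred jobs ($d=t+6$) at time~$t$ and nothing else, and then observes: at $t+2$ the adversary \emph{could} release $m$ rigid jobs with $p=2$, $d=t+4$ (the combined instance is still feasible on $m$ machines). Since $\A$ is online, it must play $[t,t+2]$ robustly against that threat, which forces it to have reduced the due-$(t+3)$ backlog at $t+2$ to at most $2m(c-1)$, i.e.\ to have spent at least $w+2m(2-c)$ of the $2cm$ capacity in $[t,t+2]$ on urgent work. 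In the branch where $\A$ doesn't do this, the threat is realised and $\A$ misses a deadline -- a contradiction with $c$-competitiveness, so that branch never occurs and the actual $J'$ contains no job released at $t+2$. Your constraint is correct about $J'$ itself but throws away exactly the tool the paper uses to control $\A$.

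Second, and more fundamentally, your argument for~(ii) is not a proof. You propose to release ``comparatively few'' jobs at $t$ and a ``burst'' at $t+1$, chosen adaptively, and to show the forced idle time of $\A$ exceeds $3(c-1)m$; but you conclude with ``I expect the main obstacle to be selecting these quantities correctly -- probably via a dichotomy\dots'' That is a plan, not a derivation, and it is not clear that the plan works: if you release few jobs at $t$, $\A$ need not waste capacity unless it has nothing to run, which depends on the inherited active-job count you don't control, and the later burst at $t+1$ (urgent jobs with zero laxity) risks being infeasible for $\OPT$ or failing~(i). The paper avoids all of this by a fixed, non-adaptive release at $t$ plus the threat-at-$t+2$ argument, from which the bound $4m(1-7c/8)+w$ on the deferred backlog at $t+3$ falls out by a short capacity count, giving $w'>w$ precisely for $c<8/7$. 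You should either fill in the adaptive dichotomy completely, or switch to the threat argument; as written, the core of the lemma is unproved.
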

\begin{proof}%[Proof (Lemma).]
We augment $J$ the following way. At $t$, we release $m$ \textit{more tight} jobs and $m/2$ \textit{more loose} jobs. For each job $j$ of the more tight jobs, we set $r_j=t$ and $d_j=t+3$; for each of the more loose jobs, we set $r_j=t$ and $d_j=t+6$.

We first observe that any $c$-competitive algorithm $\mathsf{A}$ has to do at least $w+2m-2m\cdot(c-1)=w+2m\cdot(2-c)$ work on the remaining active jobs from $J$ as well as the more tight jobs (i.e., all active jobs except for the more loose ones) in the interval $[t,t+2]$. That is because $m$ tight jobs (i.e., $d_j=t+4$ for each job $j$ of them) could be released at $t+2$, forcing any $c$-competitive algorithm at $t+2$ to have left not more work than $2m\cdot(c-1)$ of the jobs due at $t+3$. Also note that in an optimal schedule, all the jobs could be feasibly scheduled.

By the above observation, we get that \A is not $c$-competitive if $w+2m\cdot(2-c)>2cm$. Otherwise, we can upper bound the work done on the more loose jobs in the interval $[t,t+2]$ by $2cm-w+2m\cdot(c-2)=4m\cdot(c-1)-w$, amounting to $4m\cdot(c-1)-w+cm/2=4m\cdot(9c/8-1)-w$ in the interval $[t,t+3]$. What remains (of the more loose jobs) at $t+3$ is thus at least a total workload of $cm-4m\cdot(9c/8-1)+w=4m\cdot(1-7c/8)+w$, which is larger than $w$ if and only if $c<8/7$.
\end{proof}

This allows us to prove the theorem.

\begin{proof}[Proof of Theorem~\ref{thm:LB}.]
We assume there exists a $c$-competitive algorithm \A for semi-online machine minimization. Obviously, the lemma can be applied for $J=\emptyset$ and $t=0$ (note that $w=0$). By re-applying the lemma to the resulting instance for $\lceil 3cm\rceil$ more times, we can be sure that we finally have $W_{\A(J')}(t')>3cm$ for some $J'$ and $t'$ (by the strict increase of this value with each iteration), which is a contradiction.
\end{proof}

%\subsection{Preemptive, same size, $OPT$ unknown}
Now consider the online instead of semi-online case. Using our doubling technique as a black box (Theorem~\ref{thm: black box}), we are able to derive a  $12$-competitive algorithm. In this subsection we will give an improved $9.38$-competitive algorithm by utilizing the $e$-competitive algorithm of Devanur et
al.~\cite{DevanurMPY14}, which is actually used for the special case when $p=1$.

\begin{theorem}\label{thm:preempt-samesize-optunknown}
There exists a $9.38$-competitive algorithm for the online machine minimization problem when $p_j=p$.
\end{theorem}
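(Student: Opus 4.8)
The plan is to reduce the equal‑processing‑time case to the unit case $p_j=1$, for which the $e$‑competitive online algorithm of Devanur et al.~\cite{DevanurMPY14} is available, and to pay only a constant for the reduction. Partition time into blocks $B_k:=[kp,(k+1)p)$ of length $p$ and call a schedule \emph{block‑aligned} if every job is processed non‑preemptively inside a single block. Contracting each block to one time unit (equivalently rescaling time by $1/p$) turns a block‑aligned schedule of length‑$p$ jobs into a feasible schedule of a unit‑processing‑time instance, and conversely. So the skeleton is: maintain online the contracted instance, run the algorithm of~\cite{DevanurMPY14} on it (note that it does \emph{not} need to know $m$, so no \double wrapper is required here), and lift its block‑aligned output back. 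This is genuinely online because $j$ is revealed at $r_j$, so the contracted release date $\lceil r_j/p\rceil$ is known before that contracted time unit begins.

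First I would split $J$ according to whether the window $[r_j,d_j)$ is large relative to $p$. A job whose window does \emph{not} contain a full block $B_k$ (this includes all jobs with $d_j-r_j$ within a small multiple of $p$ and misaligned with the grid) cannot be treated by block‑alignment at all; I would isolate these very tight jobs on a separate set of machines and schedule them directly — by \EDF combined with \double through Theorems~\ref{thm:pre-samesize-optknown-EDF} and~\ref{thm: black box}, by a \Mediumfit‑type placement, or by the same block reduction on a shifted grid — while the remaining, block‑spanning jobs go through the contraction. The second point, and the main obstacle, is to bound the optimum $m'$ of the contracted instance by $O(m)$. Here one starts from an optimal preemptive schedule of the block‑spanning jobs on $m$ machines: at block granularity it puts at most $mp$ units of work into each block, and — after relocating, for each block‑spanning job, the (at most $O(p)$ units of) work it does inside partial end‑blocks into full blocks of its window, which costs at most another factor in the number of machines — every job's work then sits in full blocks. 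Since the incidence "job $j$ versus admissible block $B_k$'' is an interval structure, the associated fractional block‑assignment (load $\le O(m)$ per block, demand $1$ per job) is totally unimodular, hence integral; this gives a block‑aligned, window‑respecting schedule of the contracted block‑spanning instance on $O(m)$ machines, so $m'=O(m)$. Running~\cite{DevanurMPY14} then uses at most $e\cdot m'$ blocks of machines, and because every chosen block lies inside $[r_j,d_j)$ the lift is exact.

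The delicate quantitative issue — and the step I expect to be hardest — is making both the partial‑block relocation for the block‑spanning jobs and the side treatment of the very tight jobs \emph{cheap}: the naive $12$ for the tight class (\EDF plus \double) is far too large, so they must be scheduled with a substantially smaller constant on the disjoint machine set, and the partial‑block loss for the block‑spanning class must be kept to a small factor. After choosing the threshold $d_j-r_j\lessgtr\gamma p$ that separates the two classes (and how the tight class is further split across shifted grids) so as to balance the two machine budgets, the factor $e$ paid by the block‑spanning jobs and the constant paid by the tight jobs add up, and optimizing the parameters yields a total of at most $9.38\cdot m$ machines, proving the theorem.
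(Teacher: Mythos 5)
Your block‑contraction plan is a genuinely different route from the paper's. The paper does \emph{not} contract time or require a block‑aligned schedule. Instead it (i) classifies jobs as $\alpha$‑tight or $\alpha$‑loose for a tunable $\alpha$, (ii) handles tight jobs by $\Earlyfit$ with a packing bound of $(1/\alpha+1)\rho(t)$ machines where $\rho(t)$ is a density lower bound on $m$, and (iii) for loose jobs, \emph{does not} run the unit‑job algorithm at all: it replaces each $p$‑size job by $p$ unit jobs on the \emph{same} window (no rescaling), observes that the Devanur et al.\ algorithm would schedule those on $\lceil e\rho'(t)\rceil$ machines, and then proves a workload inequality comparing this hypothetical schedule to $\EDF$ run on the actual $p$‑size jobs; this shows $\EDF$ with $c\rho'(t)$ machines suffices for $c\geq e(1+\alpha)/(1-\alpha)$. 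The final bound $e(1+\alpha)/(1-\alpha)+1/\alpha+1\approx 9.38$ at $\alpha\approx 0.33$ comes out of this specific trade‑off. Your scheme would instead commit to a non‑preemptive, block‑aligned schedule for the loose class and actually run the unit‑job algorithm on the contracted instance; this is conceptually cleaner and closer in spirit to the paper's \emph{non‑preemptive} Lemma~\ref{lemma:nonpre-samesize-semi}.

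Where your proposal has real gaps: first, the bound $m'=O(m)$ is the crux, and the constant matters. Your relocation argument (push partial‑block work into full blocks) gives, with the natural ``assign left/right‑partial work to the outermost admissible blocks'' assignment, a per‑block load of at most $3mp$ (native block work plus imports from each neighbor), i.e.\ $m'\le 3m$, and then $3e+3\approx 11.15$ exceeds $9.38$. To beat $9.38$ you would need $m'\le 2m$, which is the bound Lemma~\ref{lemma:nonpre-samesize-semi} achieves; but that proof crucially uses that the reference schedule is \emph{non‑preemptive} (a job processed in $[\tau p-1,\tau p+1]$ runs there continuously), and you must start from the \emph{preemptive} optimum since $m$ here is the preemptive optimum. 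Establishing $m'\le 2m$ against a preemptive benchmark is not immediate and is not done in your sketch. Second, you do not commit to a concrete algorithm for the tight class nor to the threshold $\gamma$, and without that the claim that ``optimizing the parameters yields $9.38$'' is unsupported; the paper, by contrast, derives the closed‑form $e(1+\alpha)/(1-\alpha)+1/\alpha+1$ and then optimizes. The TU/interval‑matrix step is fine as stated, but it only converts a fractional block assignment into an integral one; the burden is on producing the fractional assignment with small $C$, which is exactly the unresolved relocation step. So the skeleton is reasonable, but as written it neither pins down the reduction loss nor verifies that the two budgets add to $9.38$.
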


As $m$ is unknown, we use the density $\rho(t)$ as an estimation for the number of machines needed for the jobs released until time $t$, where
$$\rho(t)=p\cdot \max_{[a,b]\subseteq[0,d_{max}]}\frac{|\{j\in J(t)\mid [r_j,d_j]\subseteq [a,b]\}|}{b-a}.$$
It can be easily seen that for any interval $[a,b]$, $|\{j\in J(t) \mid r_j\le t, [r_j,d_j]\subseteq [a,b]\}$ represents the set of all the jobs that have been released until time $t$, and have to be finished within $[a,b]$. Thus $\rho(t)$ serves as a lower bound on $m(J(t))$. Using $\rho(t)$, we give the following algorithm (parameterized by $\alpha$ and $c$):

\medskip
\noindent\textbf{Algorithm:} For $\alpha$-tight jobs, run \Earlyfit, i.e., (non-preemptively) schedule a job immediately when it is released. For $\alpha$-loose jobs, run \EDF with $c\rho(t)$ machines.
\medskip

We show in the following that with some parameter $c$ the algorithm gives a feasible schedule.

\begin{lemma}
\Earlyfit uses at most $\lfloor (1/\alpha+1)\rho(t)\rfloor$ machines.
\end{lemma}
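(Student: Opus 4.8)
The plan is to fix an arbitrary time $t$ and pin down exactly which machines \Earlyfit has in use then. Since \Earlyfit is non-preemptive and starts every ($\alpha$-tight) job $j$ at its release date $r_j$, running it for $p$ consecutive time units, the set of jobs occupying a machine at $t$ is precisely $S(t)=\{\,j\ \alpha\text{-tight}\mid r_j\le t< r_j+p\,\}$, and $|S(t)|$ is exactly the number of machines in use at $t$. If $S(t)=\emptyset$ the claim is trivial, so assume $S(t)\neq\emptyset$.

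Next I would exhibit one short interval $[a,b]$ that contains the processing window $[r_j,d_j]$ of every job in $S(t)$. Take $a=\min_{j\in S(t)}r_j$ and $b=\max_{j\in S(t)}d_j$; then $[r_j,d_j]\subseteq[a,b]$ for all $j\in S(t)$ by construction, $0\le a$, and $b\le d_{\max}$ (as $S(t)\subseteq J(t)$), so $[a,b]$ is an admissible choice in the maximization defining $\rho(t)$, and $b>a$ since $b\ge d_k\ge r_k+p>a$ for $k\in S(t)$ with $r_k=a$. The key estimate is the length $b-a$. Picking such a $k$, the fact that $k$ is still running at $t$ gives $t<r_k+p=a+p$, hence $r_j\le t<a+p$ for every $j\in S(t)$. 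Moreover each $j\in S(t)$ is $\alpha$-tight, i.e.\ (at its release date, where $p_j(r_j)=p$) $p>\alpha(d_j-r_j)$, so $d_j-r_j<p/\alpha$. Adding the two bounds, $d_j-a=(d_j-r_j)+(r_j-a)<p/\alpha+p=(1/\alpha+1)p$ for all $j\in S(t)$, and therefore $b-a<(1/\alpha+1)p$.

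Finally I would plug $[a,b]$ into the definition of the density: since $S(t)\subseteq J(t)$ and each $[r_j,d_j]$ with $j\in S(t)$ lies in $[a,b]$,
$$\rho(t)\ \ge\ p\cdot\frac{|S(t)|}{b-a}\ >\ p\cdot\frac{|S(t)|}{(1/\alpha+1)p}\ =\ \frac{|S(t)|}{1/\alpha+1},$$
so $|S(t)|<(1/\alpha+1)\rho(t)$. As $|S(t)|$ is a nonnegative integer, this strict inequality forces $|S(t)|\le\lfloor(1/\alpha+1)\rho(t)\rfloor$ (this remains correct even when $(1/\alpha+1)\rho(t)$ is itself an integer), which is the claim.

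There is no deep obstacle here; the only point requiring care is the bookkeeping of strict versus non-strict inequalities. The strictness of $d_j-r_j<p/\alpha$ (which comes from $j$ being $\alpha$-tight rather than $\alpha$-loose) together with the strictness of $r_j<a+p$ (from $j$ still being processed at $t$) is exactly what lets us upgrade "$|S(t)|<(1/\alpha+1)\rho(t)$" to the floor in the statement; one should also double-check that the chosen $[a,b]$ is legal in the definition of $\rho(t)$, i.e.\ $a\ge 0$, $b\le d_{\max}$, and $b>a$ so that the ratio is well defined.
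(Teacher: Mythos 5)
Your proof is correct and uses essentially the same argument as the paper: it isolates the set of tight jobs active at $t$, observes that each such job is released after $t-p$ and has deadline at most $t+p/\alpha$, and plugs the resulting interval (of length less than $(1/\alpha+1)p$) into the density definition. The paper phrases it as a short contradiction while you give a direct bound, but the underlying observations and estimates are identical.
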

\begin{proof}
Suppose the lemma is not true. Then at some time $t$ the Early-Fit algorithm requires more than $\lfloor (\alpha+1)\rho(t)\rfloor$ machines, which implies that by scheduling the $\alpha$-tight jobs immediately, there are more than $(1/\alpha+1)\rho(t)$ tight jobs overlapping at time $t$. Hence each of the $\alpha$-tight job is released after $t-p$ and has a deadline at most $t+p/\alpha$. Therefore, it holds that 
$$\rho(t)> \frac{(1/\alpha+1)\cdot\rho(t)\cdot p}{p/\alpha+p}=\rho(t),$$
which is a contradiction.
\end{proof}

From now on we only consider loose jobs and let $\rho'(t)$ be the density of these jobs. Let $J$ be the set of all loose jobs and $J'$ be the modified instance by replacing each loose job $j\in J$ with $p$ unit size jobs with feasible time window $[r_j,d_j]$. We observe that instance $J$ and $J'$ share the same density $\rho'(t)$. Furthermore, there is an $e$-competitive algorithm for $J'$ which uses $\lceil e\rho(t)\rceil$ machines. Let $W_J(t)$ be the remaining workload of jobs by applying our algorithm to $J$ with $c\ge e$ and $W_{J'}(t)$ be the remaining workload of jobs by applying the $e$-competitive algorithm to $J'$. We have the following load inequality:
\begin{lemma}\label{loadineq}
 For every time $t$ it holds that
 \begin{align*}
   W_J(t) \le W_{J'}(t) +  e \cdot \rho'(t) \cdot p \, .
 \end{align*}
\end{lemma}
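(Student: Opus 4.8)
The plan is to prove the inequality by induction on the integer time points $t$, comparing the behavior of our algorithm (call it $\mathsf{A}$) run on $J$ with $c\rho'(t)$ machines against the $e$-competitive algorithm (call it $\mathsf{B}$) run on the unit-split instance $J'$ with $\lceil e\rho'(t)\rceil$ machines. The base case $t=0$ is immediate since no work has been released and processed yet, so $W_J(0)=W_{J'}(0)=0$ and the term $e\rho'(0)p\ge 0$. For the inductive step, I would fix $t$, assume the inequality at $t-1$, and distinguish cases according to how many machines $\mathsf{A}$ uses at time $t$ — i.e., according to $w_{\mathsf{A}}(t)$ relative to the current machine budget $c\rho'(t)$.

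The key observations are the following. First, since $\mathsf{A}$ uses \EDF (which is busy) on $c\rho'(t)\ge e\rho'(t)$ machines while $\mathsf{B}$ uses at most $\lceil e\rho'(t)\rceil$ machines, whenever $\mathsf{A}$ is \emph{not} saturated at $t$ (fewer than $c\rho'(t)$ active jobs) it processes every available unit of work, so the only work that can remain for $\mathsf{A}$ at $t$ that is not also remaining for $\mathsf{B}$ is attributable to jobs not yet released — and those contribute equally, or less, to $W_J(t)$ than to $W_{J'}(t)$ once released. In this case the slack term is not even needed. Second, when $\mathsf{A}$ \emph{is} saturated, it processes $c\rho'(t)$ units while $\mathsf{B}$ processes at most $\lceil e\rho'(t)\rceil \le e\rho'(t)+1$ units; since $c\ge e$, the gap $W_J-W_{J'}$ can only shrink by roughly $(c-e)\rho'(t)$ per step, and it is here that one must control the possible \emph{increase} of the right-hand side slack term $e\rho'(t)p$: because $\rho'$ is nondecreasing in $t$, the term $e\rho'(t)p$ never decreases, which is exactly what protects the invariant. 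Third, one must handle the fact that $\rho'(t)$ can jump up between $t-1$ and $t$: but since $\rho'$ only increases and the slack term scales with it, any such jump only helps the inequality. The careful bookkeeping of unreleased jobs — noting that splitting a loose job of length $p$ into $p$ unit jobs with the same window preserves both the density and the total remaining volume — is what lets us treat $W_J$ and $W_{J'}$ on equal footing for the not-yet-released part.

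The main obstacle I expect is the saturated case: pinning down precisely why an interval on which $\mathsf{A}$ stays saturated cannot let $W_J - W_{J'}$ grow beyond $e\rho'(t)p$. The clean way to do this is to look back to the last time $\tau<t$ at which $\mathsf{A}$ was \emph{not} saturated (or $\tau=0$); on $(\tau,t]$ algorithm $\mathsf{A}$ did $c\rho'(\cdot)\ge c\rho'(\tau)$ work per step while $\mathsf{B}$ did at most $e\rho'(t)+1$ per step, and at $\tau$ the invariant (in its sharper, slack-free form from the non-saturated analysis) held; combined with $\rho'(\tau)$ being within a factor related to $p$ of $\rho'(t)$ — or more simply, bounding the number of active, unfinished loose jobs at the non-saturated time $\tau$ by the machine count there, each contributing at most $p$ remaining units — one recovers the bound $W_J(t)\le W_{J'}(t)+e\rho'(t)p$. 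I would then just need to verify the arithmetic closes with $c\ge e$ and would not expect any further subtlety beyond keeping the floor/ceiling terms from the machine counts from accumulating, which the extra $+p$-type room in the $e\rho'(t)p$ term absorbs.
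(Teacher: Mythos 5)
Your proposal is an induction on integer time points, which is the paper's approach, but two things go wrong. First, your case split is at the threshold $c\rho'(t)$ (the machine count of the algorithm $\mathsf{A}$ on $J$), whereas the paper splits at $\lceil e\rho'(t)\rceil$; this matters, because in the unsaturated regime the only usable bound on the released-but-unfinished workload of $\mathsf{A}$ is (number of active jobs)$\cdot p$, and the number of active jobs is capped by whatever threshold you chose. With your threshold you would only obtain $W_J(t)\le W_{J'}(t)+c\rho'(t)p$, not the stated bound with $e$; with the paper's threshold the number of active jobs is at most $e\rho'(t)$ and the bound comes out right. Second, and more seriously, in your unsaturated case you assert the inequality holds ``slack-free,'' i.e.\ $W_J(t)\le W_{J'}(t)$, on the grounds that ``the only work that can remain for $\mathsf{A}$ at $t$ that is not also remaining for $\mathsf{B}$ is attributable to jobs not yet released.'' This is not justified: $\mathsf{B}$ is a \emph{different} algorithm running on a \emph{different} instance $J'$, so there is no job-by-job domination between the two schedules even when $\mathsf{A}$ has idle machines. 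The paper never claims the slack-free form in this case; it simply counts active jobs and multiplies by $p$.

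Once the threshold and the unsaturated-case reasoning are fixed, your look-back argument becomes unnecessary: the per-step inductive argument already closes. In the saturated case $w_J(t)\ge\lceil e\rho'(t)\rceil\ge w_{J'}(t)$, so $W_J-W_{J'}$ does not increase, and since $\rho'$ is nondecreasing neither does the right-hand side decrease, which maintains the invariant in one step. Your worry about the gap growing ``by roughly $(c-e)\rho'(t)$ per step,'' and the need for a separate global argument over a maximal saturated interval, disappears because the comparison is against $\lceil e\rho'(t)\rceil$ (the work $\mathsf{B}$ can do), not against $e\rho'(t)$ with a stray $+1$. So your overall strategy (induction, split on whether $\mathsf{A}$ has a free machine, use monotonicity of $\rho'$) is the right one, but the threshold and the unsaturated-case claim both need to be corrected before the argument is sound.
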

\begin{proof}
We prove by induction on $t$. Obviously the lemma is true for $t=0$ as $W_J(0)=W_{J'}(0)$. Suppose the lemma holds for $t\le k$. We prove that $ W_J(k+1) \le W_{J'}(k+1) + e \cdot \rho'(k+1) \cdot p$. 

Consider $w_J(k)$, i.e., the workload processed during $[k,k+1]$ by our algorithm. If $w_J(k)< \lceil e \cdot \rho'(t)\rceil$, then there exists a free machines, implying that there are at most $e \cdot \rho'(t)$ jobs unfinished. Thus, we get $ W_J(k+1) \le W_{J'}(k+1) + e \cdot \rho'(k+1) \cdot p$. Otherwise, we get $w_J(k)\ge \lceil e \cdot \rho'(t)\rceil$. Recall that the lemma is true for $t=k$. That means that we have $W_J(k) \le W_{J'}(k) +  e \cdot \rho'(k) \cdot p $. Since it holds that $W_J(k+1)= W_J(k)-w_J(k)\le W_J(k)-\lceil e \cdot \rho'(k)\rceil$, $W_{J'}(k+1)=W_{J'}(k)-w_{J'}(k)\ge W_{J'}(k)-\lceil e \cdot \rho'(k)\rceil$ as well as $\rho'(k+1)\ge \rho'(k)$, simple calculations show that the lemma is true.
\end{proof}

Using the above load inequality, we prove the following:
\begin{lemma}
For $c\ge e\cdot(1+\alpha)/(1-\alpha)$, \EDF is a $c\rho'(t)$-competitive algorithm for online machine minimization if $p_j\equiv p$ and every job is $\alpha$-loose.
\end{lemma}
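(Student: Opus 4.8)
The plan is to argue by contradiction on the workload, in the spirit of the proof of Theorem~\ref{thm: EDF-small}, but with the feasible $e$-competitive schedule of the unit-size instance $J'$ playing the role of the reference schedule and Lemma~\ref{loadineq} supplying the crucial workload estimate. So suppose \EDF, run with $c\rho'(t)$ machines at time $t$ for some $c\ge e(1+\alpha)/(1-\alpha)$, misses a deadline, and among all missed jobs let $j^\star$ be one with the earliest release date. First I would perform the standard reduction to the case where $j^\star$ has the largest deadline: restricting the instance to $J^\star=\{j\mid d_j\le d_{j^\star}\}$ does not change how \EDF processes the jobs of $J^\star$, since every discarded job has a strictly larger deadline and thus lower \EDF-priority and is never run in preference to a job of $J^\star$; moreover the density $\rho'$ of $J^\star$ is pointwise at most that of $J$, so \EDF on $J^\star$ runs on no more machines and still misses $j^\star$. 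Henceforth $d:=d_{j^\star}=\max_j d_j$, so every job---and hence every unit job of $J'$---is due by $d$.

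Next I would bound $W_{\EDF}(r_{j^\star})$ from below. Since $p_{j^\star}=p$ and $j^\star$ is $\alpha$-loose we have $p\le\alpha(d-r_{j^\star})$, so the window length satisfies $d-r_{j^\star}\ge p/\alpha$; since $j^\star$ is missed it is processed for strictly fewer than $p$ time units, hence idle for more than $(1-\alpha)(d-r_{j^\star})$ time units inside $[r_{j^\star},d]$. As $j^\star$ has the globally largest deadline it has lowest \EDF-priority, and \EDF is busy, so at each such time all available machines are occupied; since $\rho'$ is nondecreasing there are at least $c\rho'(r_{j^\star})$ of them, and all this work is remaining at $r_{j^\star}$. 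Thus $W_{\EDF}(r_{j^\star})>(1-\alpha)\,c\,\rho'(r_{j^\star})\,(d-r_{j^\star})$. For the matching upper bound I would apply Lemma~\ref{loadineq} at $t=r_{j^\star}$, giving $W_{\EDF}(r_{j^\star})\le W_{J'}(r_{j^\star})+e\,\rho'(r_{j^\star})\,p$; because the $e$-competitive algorithm on $J'$ is feasible and every job of $J'$ is due by $d$, it completes all of $W_{J'}(r_{j^\star})$ within $[r_{j^\star},d]$ using at most $\lceil e\rho'\rceil$ machines per step, so $W_{J'}(r_{j^\star})$ is at most roughly $e\,\rho'\,(d-r_{j^\star})$. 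Absorbing the term $e\rho'(r_{j^\star})p$ via $p\le\alpha(d-r_{j^\star})$ yields an upper bound of the form $e(1+\alpha)\,\rho'\,(d-r_{j^\star})$; provided the density may be treated as essentially constant on $[r_{j^\star},d]$, comparison with the lower bound contradicts $c\ge e(1+\alpha)/(1-\alpha)$, which is exactly where the factor $(1+\alpha)$ enters.

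The main obstacle is precisely the time-dependence of $\rho'$: the lower bound naturally carries $\rho'(r_{j^\star})$, whereas the workload the $e$-competitive subroutine must still perform in $[r_{j^\star},d]$ is governed by the possibly larger values $\rho'(t')$ with $t'\in(r_{j^\star},d)$. To reconcile the two sides I would either run the whole argument over the maximal busy period of \EDF that ends at $d$, on which the relevant density is controlled, or estimate the workload released after $r_{j^\star}$ separately, using that all such jobs have feasible windows contained in $(r_{j^\star},d]$ and therefore contribute to $\rho'(d)$ relative to that interval. The bookkeeping around these densities---together with the $\lceil\cdot\rceil$ rounding in the machine counts, which one absorbs in the same way as in the analogous proofs above---is where the care lies; the remaining estimates are routine.
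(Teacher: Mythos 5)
Your proposal shares the overall shape of the paper's argument (contradiction on workload, driven by Lemma~\ref{loadineq}), but it chooses the wrong reference time and consequently runs into an obstacle that the paper sidesteps entirely. You anchor the argument at $r_{j^\star}$, the release date of the missed job, and then honestly observe the resulting difficulty: the lower bound carries $\rho'(r_{j^\star})$, while the remaining workload in the reference unit-job schedule is paid off at rate roughly $e\rho'(t')$ for $t'\in(r_{j^\star},d)$, which can be much larger. You sketch two possible repairs (maximal busy period, or separately accounting for later-released jobs) but do not carry either through, and as stated neither clearly closes the gap: in a busy period the density can still grow, and the "contribute to $\rho'(d)$ relative to that interval" estimate would again produce a $\rho'(d)$ on the upper-bound side against $\rho'(r_{j^\star})$ on the lower-bound side.

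The paper avoids this issue with a structural observation you are missing. Once one has reduced to a counterexample in which the missed job $j$ has the largest deadline $d_j$, every remaining job $j'$ satisfies $d_{j'}\le d_j$ and is $\alpha$-loose with $p_{j'}=p$, so $d_{j'}-r_{j'}\ge p/\alpha$ and hence $r_{j'}\le d_j-p/\alpha$. Thus the \emph{entire instance is released by time} $d_j-\lfloor p/\alpha\rfloor$, and $\rho'$ is constant (equal to $\rho'_{\max}$) from then on. The paper therefore sets up the contradiction at time $d_j-\lfloor p/\alpha\rfloor$ rather than at $r_{j^\star}$: the lower bound on $W_J(d_j-\lfloor p/\alpha\rfloor)$ comes from the at least $\lfloor p/\alpha\rfloor-p+1$ time units in $[d_j-\lfloor p/\alpha\rfloor,d_j]$ where all $c\rho'_{\max}$ machines must be busy (since $j$ has lowest \EDF priority), and the upper bound comes from Lemma~\ref{loadineq} plus the fact that the $e$-competitive unit-job schedule clears $W_{J'}$ on at most $\lceil e\rho'_{\max}\rceil$ machines within $[d_j-\lfloor p/\alpha\rfloor,d_j]$. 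Both sides now involve the \emph{same} density $\rho'_{\max}$, and the calculation cleanly yields the threshold $c\ge e(1+\alpha)/(1-\alpha)$. So the gap in your proposal is a missing idea, not a calculation slip: you need to notice that $\alpha$-looseness and $p_j\equiv p$ force all release dates to be at most $d_j-p/\alpha$, and then move the reference point from $r_{j^\star}$ to $d_j-\lfloor p/\alpha\rfloor$, where the density has already frozen.
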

\begin{proof}
Among those instances \EDF fails at, we pick one with the minimum number of jobs. It can be easily seen that in such an instance there is only one job missed, and this job has the largest deadline. Let $j$ be the missed job. Then we know that until time $d_j-p/\alpha$, every job is released, implying that $\rho'(t)$ does not change after time $t=d_j-\lfloor p/\alpha\rfloor$. Let $\rho'(t)=\rho'_{max}$ for $t\ge d_j-\lfloor p/\alpha\rfloor$.
Furthermore the fact that job $j$ is missed implies that during $[d_j-\lfloor p/\alpha\rfloor,d_j]$, there must exist at least $\lfloor p/\alpha\rfloor-p+1$ time units when every machine is busy. This means that $$W_J(d_j-p/\alpha)>c\rho'_{max}\cdot(\lfloor p/\alpha\rfloor-p+1)+p-1.$$ 

On the other hand, the load inequality implies that 
$$W_J(d_j-\lfloor p/\alpha\rfloor)\le W_{J'}(d_j-\lfloor p/\alpha\rfloor)+ e \cdot \rho'_{max} \cdot p\le \lfloor p/\alpha\rfloor\cdot\lceil e\rho'_{max}\rceil+e \cdot \rho'_{max}.$$
Simple calculations show that choosing $c\ge e(1+\alpha)/(1-\alpha)$ yields a contradiction.
\end{proof}

We can now prove the theorem.

\begin{proof}[Proof of Theorem~\ref{thm:preempt-samesize-optunknown}.]
Combining the two algorithms for tight and loose jobs, we derive an $(e\cdot(1+\alpha)/(1-\alpha)+ 1/\alpha+1)$-competitive algorithm. By taking $\alpha\approx 0.33$, we get a $9.38$-competitive algorithm. 
\end{proof}

\subsection{Non-preemptive scheduling}

%We consider the special case when jobs are not allowed to preempt, $OPT\le m$ is known beforehand and $p_j=p$. 
We consider non-preemptive semi-online machine minimization with equal processing times and also show a~$4$-competitive algorithm. 

Let $P=\{0,p,2p,\dots\}$ be the set of all the multiples of the integer $p$. For any job $j$, we let $Q_j=[r_j,d_j]$ be its feasible time window. We know that $Q_j\cap P\neq \emptyset$ since $d_j-r_j\ge p$. We now define job $j$ to be a {\em critical} job if $|Q_j\cap P|=1$, and a {\em non-critical} job if $|Q_j\cap P|\ge2$. 

When we say we round a non-critical job $j$, we round up its release date $r_j$ to be its nearest value in $P$, and round down $d_j$ to be its nearest value in $P$. 
\medskip

\noindent\textbf{Algorithm:} Open $2m$ machines to schedule the non-critical jobs on them. Upon the release of a new job, check if it is critical. If it is critical, schedule it using \Earlyfit, i.e., schedule it immediately upon its release on a separate machine. Otherwise round it and add it to the set of all the non-critical jobs that have not been scheduled. At time $\tau p$ and for any integer $\tau$, pick $2m$ jobs from the set via \EDF and schedule them on the dedicated machines.
\medskip

We first prove the following lemma.

\begin{lemma}\label{lemma:nonpre-samesize-semi}
For every instance $J$, there exists a schedule that uses at most $2m$ machines such that every critical job is scheduled via \Earlyfit, i.e., immediately, and every non-critical job is scheduled exactly in $[\tau p,(\tau+1)\cdot p]$ for some integer $\tau$.
\end{lemma}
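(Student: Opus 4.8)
The plan is to fix an arbitrary feasible schedule $S$ of $J$ on $m=m(J)$ machines and then to build the claimed schedule on $2m$ machines as two superimposed layers: all critical jobs placed by \Earlyfit, and all non-critical jobs (after rounding) placed into aligned slots $I_\tau:=[\tau p,(\tau+1)p)$ by a matching that only uses the machine capacity left over by the critical layer. Both layers are trivially ``per-job feasible'': since $J$ is schedulable we have $d_j\ge r_j+p$, so the \Earlyfit interval $[r_j,r_j+p]$ of a critical job lies in $[r_j,d_j]$, and a non-critical job placed inside any aligned slot contained in its (rounded, hence original) window meets its real deadline. The content is therefore the machine count.

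First I would record the rigidity of critical jobs. If $j$ is critical its window contains exactly one multiple of $p$, say $m_j=\sigma_j p$, and a short computation gives $r_j\in((\sigma_j-1)p,\sigma_j p)$ and $d_j\in(\sigma_j p,(\sigma_j+1)p)$; hence in \emph{every} feasible schedule (in particular $S$) job $j$ occupies a length-$p$ interval inside the open interval $((\sigma_j-1)p,(\sigma_j+1)p)$, so it is processed across the single grid point $\sigma_j p$ and contains no other multiple of $p$. Writing $n_\sigma$ for the number of critical jobs with $m_j=\sigma p$, this yields: (a) these jobs all conflict at time $\sigma p$ in $S$, so $n_\sigma\le m$; and (b) in the \Earlyfit layer a critical job running at a time $t\in I_\tau$ must have its multiple in $(t-p,t+p)$, hence in $\{\tau p,(\tau+1)p\}$, so if $c(t)$ denotes the number of critical jobs running at $t$ then $C_\tau:=\max_{t\in I_\tau}c(t)\le n_\tau+n_{\tau+1}$. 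Put $R_\tau:=2m-C_\tau$. If we can assign the non-critical jobs to aligned slots with at most $R_\tau$ of them in $I_\tau$, then at every time in $I_\tau$ the total load is at most $C_\tau+R_\tau=2m$, and since the whole schedule is a union of intervals it can be colored onto $2m$ machines; grid-point coincidences cause no trouble once one uses half-open aligned slots and notes that critical \Earlyfit intervals never begin or end at a multiple of $p$.

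Next I would round each non-critical job to $[r'_j,d'_j]:=[\lceil r_j/p\rceil p,\lfloor d_j/p\rfloor p]$; since a non-critical window contains at least two multiples of $p$, the set $A_j:=\{\tau:I_\tau\subseteq[r'_j,d'_j]\}$ of admissible slots is a nonempty interval of integers. The task ``assign each job to some slot of $A_j$ with at most $R_\tau$ jobs in slot $\tau$'' is, because the $A_j$ are intervals, solvable by Hall's theorem provided that for every slot interval $[\sigma_1,\sigma_2]$ the number $N$ of non-critical jobs with $A_j\subseteq[\sigma_1,\sigma_2]$ satisfies $N\le\sum_{\tau=\sigma_1}^{\sigma_2}R_\tau$. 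I would prove this by charging against $S$. Let $K$ be the number of critical jobs whose multiple lies in $\{\sigma_1 p,\dots,(\sigma_2+1)p\}$. One checks that each of the $N$ counted non-critical jobs has original window inside $((\sigma_1-1)p,(\sigma_2+2)p)$, so in $S$ it is processed across one of the $\sigma_2-\sigma_1+2$ grid points $\sigma_1 p,\dots,(\sigma_2+1)p$, and by rigidity the same holds for each of the $K$ critical jobs. Since $S$ runs at most $m$ jobs at each of these times, $N+K\le m(\sigma_2-\sigma_1+2)$. On the other hand, summing (b), $\sum_{\tau=\sigma_1}^{\sigma_2}C_\tau\le\sum_{\tau=\sigma_1}^{\sigma_2}(n_\tau+n_{\tau+1})=K+\sum_{\sigma=\sigma_1+1}^{\sigma_2}n_\sigma\le K+m(\sigma_2-\sigma_1)$. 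Adding the two estimates gives $N+\sum_{\tau=\sigma_1}^{\sigma_2}C_\tau\le 2m(\sigma_2-\sigma_1+1)=\sum_{\tau=\sigma_1}^{\sigma_2}2m$, i.e.\ $N\le\sum_{\tau=\sigma_1}^{\sigma_2}R_\tau$, as required; Hall then produces the slot assignment, and superimposing it with the \Earlyfit layer gives the schedule.

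The hard part is precisely this density inequality. The bound $C_\tau\le n_\tau+n_{\tau+1}$ and the bound on $N$ are each, taken alone, too weak by an $\Omega(m)$ additive term; what makes the constant come out to exactly $2$ is the \emph{joint} charging of both job types onto the $\sigma_2-\sigma_1+2$ grid points that their $S$-intervals are forced to cross, so that the two slacks cancel. A purely volumetric argument — bounding $N+K$ by the processing volume fitting into the length-$(\sigma_2-\sigma_1+3)p$ interval $((\sigma_1-1)p,(\sigma_2+2)p)$ — loses a factor linear in $m$ and does not suffice. Everything else (the arithmetic of the rounding, the reduction of Hall's condition to slot intervals since the $A_j$ are intervals, and the endpoint bookkeeping) is routine.
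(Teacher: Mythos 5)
Your proof is correct, and it takes a genuinely different route from the paper's.

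The paper proves the lemma by a \emph{local exchange} argument: it takes the optimal schedule $S$ on $m$ machines, moves each critical job to its \Earlyfit{} interval and moves each non-critical job that straddles a grid point into one of the two adjacent aligned slots, and then bounds the load of each slot $[\tau p,(\tau+1)p]$ directly by counting, in $S$, the jobs running at the unit slots $[\tau p,\tau p+1]$ and $[(\tau+1)p-1,(\tau+1)p]$ (the quantities $y_c,y_c',y_s,y_l,y_r$ with $y_s+y_c+y_l\le m$ and $y_s+y_c'+y_r\le m$). You instead build the schedule from scratch as a superposition of an \Earlyfit{} layer and a slot-assignment layer, formulate the slot assignment as a bipartite matching with capacities $R_\tau=2m-C_\tau$, observe that the admissible-slot sets $A_j$ are intervals so it suffices to verify Hall's condition on slot intervals, and prove that condition by a \emph{global} charging of critical and non-critical jobs against the $\sigma_2-\sigma_1+2$ grid points of a block; the slack in $C_\tau\le n_\tau+n_{\tau+1}$ exactly cancels against the slack in $N+K\le m(\sigma_2-\sigma_1+2)$. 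Both proofs rest on the same combinatorial fact — that every relevant job's processing interval in $S$ is forced to occupy a specific unit slot adjacent to a grid point, and at most $m$ jobs can occupy that unit slot — but the paper deploys it slot-by-slot while you deploy it over arbitrary blocks to feed a Hall argument. The paper's route is shorter and constructive in the sense of an explicit modification of $S$; yours is more systematic (the capacity constraint $R_\tau=2m-C_\tau$ makes the origin of the factor $2$ transparent) at the cost of invoking Hall and the interval-reduction, and it needs the small bookkeeping you note about half-open slots and $\lceil s/p\rceil$ to make ``processed across a grid point'' precise for non-critical jobs whose $S$-interval is already aligned.
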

\begin{proof}
Consider the (offline) optimum schedule $S$ for the instance $J$. We alter the schedule in the following way. Firstly, we schedule critical jobs immediately when they are released. Secondly, we move any non-critical job which is processed all the time during some $[\tau p-1,\tau p+1]$ (for $\tau$ integer) either completely into $[(\tau-1)\cdot p,\tau p]$ or into $[\tau p,(\tau+1)\cdot p]$. We claim that by doing so we increase the number of machines by at most $m$, which we show for each $\tau$ in the following. 

Consider any critical job $j'$ whose feasible time window intersects with $P$ at point $\tau p$. We observe that in any feasible solution, and hence in the solution $S$, $j'$ is being processed all the time during $[\tau p-1, \tau p+1]$. We let $y_c$ be the number of these jobs. Similarly, we let $y_c'$ be the number of those critical jobs that intersect with $P$ at $(\tau+1)\cdot p$.

Consider non-critical jobs. Let $y_s$ be the number of non-critical jobs that are completely scheduled in $[\tau p,(\tau+1)\cdot p]$ in $S$. Let $y_l$ be the number of non-critical jobs whose feasible time windows contain $[\tau p,(\tau+1)\cdot p]$ and which are processed within $[\tau p,\tau p+1]$ in $S$ but are not completely scheduled in $[\tau p,(\tau+1)\cdot p]$. We know that $$y_s+y_c+y_l\le m(J).$$ Similarly, let $y_r$ be the number of non-critical jobs whose feasible time windows contain $[\tau p,(\tau+1)\cdot p]$ and which are processed within $[(\tau+1)\cdot p,(\tau+1)\cdot p-1]$ in $S$ but are not completely scheduled in $[\tau p,(\tau+1)\cdot p]$. We have $$y_s+y_c'+y_r\le m(J).$$

Now after we alter the solution $S$, the maximum load during $[\tau p,(\tau+1)\cdot p]$ is at most $y_c+y_c'+y_s+y_l+y_r\le 2m(J)$.
\end{proof}

We observe that, in a schedule that satisfies the above lemma, all non-critical jobs could actually be viewed as unit-size jobs. The following lemma can be proved via a simple exchange argument as in~\cite{KaoCRW12}.

\begin{lemma}\label{lem:unit-non-critical}
If there exists a feasible schedule for unit-size jobs that uses $\mu$ machines at time $[t,t+1]$, then applying \EDF to schedule these jobs with $\mu$ machines also returns a feasible solution.
\end{lemma}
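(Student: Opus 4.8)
The plan is to prove this by the standard slot-by-slot exchange argument, following the approach of~\cite{KaoCRW12}. Since all jobs have unit size and all release dates and deadlines are integral, it suffices to reason about the unit slots $[0,1),[1,2),\dots$ Write $E$ for the \EDF schedule restricted to $\mu$ machines, and fix any feasible schedule $S$ that uses at most $\mu$ machines in every slot. I would show by induction on $T$ that there is a feasible schedule $S^{(T)}$, still using at most $\mu$ machines per slot, that agrees with $E$ on all slots $[0,T)$. Taking $T$ beyond $d_{\max}$ then forces $E$ itself to be feasible, since by that time every job is completed in any feasible schedule, so $E=S^{(d_{\max})}$ on all relevant slots.

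For the inductive step, assume $S:=S^{(T-1)}$ coincides with $E$ on $[0,T-1)$ and look at the slot $[T-1,T)$. Because $S$ and $E$ process exactly the same jobs before $T-1$, the set $R$ of jobs released by $T-1$ and still unfinished is the same in both; note $E$ runs $E_{T-1}:=R$ in this slot if $|R|\le\mu$, and otherwise the $\mu$ jobs of $R$ of smallest deadline (ties broken as specified), so in all cases $|E_{T-1}|=\min(\mu,|R|)\ge|S_{T-1}|$, where $S_{T-1}$ is the set $S$ runs in the slot. If $S_{T-1}=E_{T-1}$ we are done; otherwise pick $a\in E_{T-1}\setminus S_{T-1}$. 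Since $a\in R$ and $S$ is feasible, $S$ runs $a$ in a later slot $[t',t'+1)$ with $t'\ge T-1$ and $d_a\ge t'+1$. If $|S_{T-1}|<|E_{T-1}|$, move $a$ from $[t',t'+1)$ into $[T-1,T)$: the machine count of slot $[T-1,T)$ becomes at most $|E_{T-1}|\le\mu$, the count of slot $[t',t'+1)$ drops, and feasibility is preserved since $r_a\le T-1$. If instead $|S_{T-1}|=|E_{T-1}|$, there is a job $b\in S_{T-1}\setminus E_{T-1}$; transpose $a$ and $b$, letting $S$ run $a$ in $[T-1,T)$ and $b$ in $[t',t'+1)$. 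This leaves every machine count unchanged, and it is feasible because $r_b\le T-1\le t'$ and $d_b\ge d_a\ge t'+1$, the first inequality holding because \EDF preferred $a$ to $b$. Each such exchange strictly increases $|S_{T-1}\cap E_{T-1}|$ and touches no slot before $T-1$, so after finitely many steps we obtain $S^{(T)}$.

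The only delicate point — which I expect to be the main, though modest, obstacle — is the bookkeeping that guarantees an exchange partner always exists and that no slot ever exceeds $\mu$ machines. The key observation is that \EDF is ``busy'' in the precise sense $|E_{T-1}|=\min(\mu,|R|)$, together with $S_{T-1}\subseteq R$; from these, whenever $S$ does not already match $E$ in the slot, either $S$ has a free machine there (handled by moving $a$ in) or it runs some non-selected job $b$ (handled by the transposition), and in the latter case \EDF's greedy choice yields $d_a\le d_b$. Everything else is routine, and the argument is insensitive to the tie-breaking rule because $a$ is always chosen to be a job that \EDF actually schedules.
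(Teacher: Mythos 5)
Your proof is correct and is precisely the exchange argument the paper has in mind: the paper itself does not spell this out but simply refers to the ``simple exchange argument as in~\cite{KaoCRW12}.'' Your slot-by-slot induction, with the two cases (free machine vs.\ transposition with a later-deadline job $b$) and the potential $|S_{T-1}\cap E_{T-1}|$, is exactly that argument; a small stylistic note is that the lemma is phrased so that the bound $\mu$ may depend on the slot $[t,t+1]$, but your exchanges never increase any slot's load beyond what \EDF uses there, so the same argument covers that reading verbatim.
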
 

We are now ready to prove the theorem.

\begin{theorem}
For equal processing times, there exists a $4$-competitive algorithm for non-preemptive semi-online machine minimization.
\end{theorem}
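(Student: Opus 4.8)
The plan is to bound separately the machines the algorithm spends on critical jobs (run by \Earlyfit on their own pool) and the $2m$ machines it dedicates to non-critical jobs, showing that \Earlyfit never needs more than $2m$ machines either; the grand total is then $4m$, which yields $4$-competitiveness. Feasibility on the non-critical machines comes from Lemmas~\ref{lemma:nonpre-samesize-semi} and~\ref{lem:unit-non-critical}, while the critical bound is an elementary packing argument.

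For the non-critical jobs, first observe that rounding is harmless: writing $r_j^P$ and $d_j^P$ for the smallest and largest multiples of $p$ in $Q_j$ (so $d_j^P-r_j^P\ge p$ precisely because $j$ is non-critical), any grid-aligned assignment of $j$ to a slot $[\tau p,(\tau+1)p]\subseteq Q_j$ automatically satisfies $[\tau p,(\tau+1)p]\subseteq[r_j^P,d_j^P]$ since $\tau p,(\tau+1)p\in P$. Lemma~\ref{lemma:nonpre-samesize-semi} supplies a schedule on at most $2m$ machines in which every non-critical job occupies one grid slot; restricting it to the non-critical jobs and contracting each slot $[\tau p,(\tau+1)p]$ to a single unit of time turns it into a feasible schedule on $2m$ machines of the non-critical jobs viewed as unit-size jobs with release date $r_j^P/p$ and deadline $d_j^P/p$. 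Lemma~\ref{lem:unit-non-critical} then guarantees that \EDF on these unit-size jobs with $2m$ machines is feasible as well — which is exactly what the algorithm does at the grid points — and un-contracting places each non-critical job inside $[r_j^P,d_j^P]\subseteq Q_j$. Hence $2m$ dedicated machines suffice.

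For the critical jobs, I would bound \Earlyfit by $2m$ machines via two observations. First, if $j$ is critical with $Q_j\cap P=\{\tau_j p\}$, then $r_j\in((\tau_j-1)p,\tau_j p]$ and $d_j\in[\tau_j p,(\tau_j+1)p)$; integrality gives $d_j-r_j\le 2p-2$, so \Earlyfit's interval $[r_j,r_j+p]\subseteq Q_j$ (which anyway follows from $d_j-r_j\ge p$), and in the optimal \emph{non-preemptive} schedule $j$ runs entirely inside the window $[r_j,d_j]$ of length less than $2p$ anchored at $\tau_j p$. Therefore two critical jobs sharing an anchor $\tau p$ cannot both lie on one \OPT-machine, as two length-$p$ intervals span at least $2p$; so there are at most $m$ critical jobs for each fixed $\tau$. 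Second, at any time step $[t,t+1)$ a critical job run by \Earlyfit has $r_j\in\{t-p+1,\dots,t\}$, which by the first observation forces $\tau_j p$ into the open interval $(t-p,t+p)$ and hence into at most two values of $\tau$. Combining, at most $2m$ critical jobs are processed by \Earlyfit at once, so it uses at most $2m$ machines. Summing, the algorithm opens at most $4m$ machines, and since the critical/non-critical split, \Earlyfit, and the grid-point \EDF are all computable online once $m$ is known, it is a $4$-competitive semi-online algorithm.

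The main obstacle is the critical-job count: one must argue carefully that even though a critical job may have a feasible window almost as long as $2p$, it is still confined, in any optimal non-preemptive schedule, to a window strictly shorter than $2p$ around its unique grid point, so that no \OPT-machine carries two critical jobs with the same anchor, and that \Earlyfit, starting each critical job at its release date, keeps at most two anchors active at any instant. The remaining points — that rounding does not destroy feasibility of the grid-aligned schedule and that contracting slots to unit-size jobs is a legitimate reduction so that Lemma~\ref{lem:unit-non-critical} applies — are routine but should be spelled out.
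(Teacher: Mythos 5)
Your proposal follows the same algorithmic structure and the same $2m+2m=4m$ accounting as the paper, and the non-critical half (round, reduce to unit-size jobs via Lemma~\ref{lemma:nonpre-samesize-semi}, apply Lemma~\ref{lem:unit-non-critical}) is essentially identical. The difference is in the critical-job bound. The paper simply reads off the $2m$ bound for \Earlyfit from Lemma~\ref{lemma:nonpre-samesize-semi}: that lemma exhibits a single schedule on at most $2m$ machines in which every critical job runs exactly at its release, and restricting that schedule to the critical jobs shows \Earlyfit never exceeds $2m$ machines. You instead give a self-contained packing argument: a critical job anchored at $\tau p$ must, in any non-preemptive schedule, occupy a length-$p$ interval strictly inside $((\tau-1)p,(\tau+1)p)$, so two critical jobs with the same anchor cannot coexist on one \OPT machine, giving at most $m$ per anchor; and at any instant $t$ the jobs \Earlyfit is running have release dates in $(t-p,t]$, forcing their anchors into $(t-p,t+p)$, which holds at most two multiples of $p$. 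Both arguments are correct and yield the same $2m$ bound; yours is more elementary and isolates the \Earlyfit bound from the structural lemma (though you still invoke that lemma for the non-critical side), while the paper's reuse of the lemma is shorter given that it has to be proved anyway. Your verification details (e.g.~$d_j-r_j\le 2p-2$ by integrality, the open-interval disjointness contradiction) are sound.
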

\begin{proof}
	Consider the algorithm defined above. Using Lemma~\ref{lemma:nonpre-samesize-semi} and ~\ref{lem:unit-non-critical}, we get that \EDF for non-critical jobs requires at most $2m$ machines. Furthermore, Lemma~\ref{lemma:nonpre-samesize-semi} also implies that $2m$ machines suffice to schedule the critical jobs via \Earlyfit. Thus, we need at most $4m$ machines to schedule the whole instance.
\end{proof}

%\subsection{Non-preemptive, same size, $OPT$ unknown}
%Applying the black-box algorithm, we derive an $8$-competitive deterministic algorithm and a $2e$-competitive randomized algorithm.

Using the \double as a black box (Theorem~\ref{thm: black box}) for non-critical jobs, we directly obtain the following result.

\begin{theorem}
  For equal processing times, there exists a $10$-competitive algorithm for non-preemptive online machine minimization.
\end{theorem}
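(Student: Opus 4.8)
The plan is to feed the doubling reduction \double (Theorem~\ref{thm: black box}) only the non-critical jobs, exploiting that the critical jobs are already handled online by \Earlyfit without any knowledge of $m$. Recall that the semi-online $4$-competitive algorithm splits its $4m$ machines as $2m$ for the critical jobs, scheduled immediately by \Earlyfit, and $2m$ for the rounded non-critical jobs, which are treated as unit-size jobs and scheduled slot by slot by \EDF; these two bounds come from Lemma~\ref{lemma:nonpre-samesize-semi} and Lemma~\ref{lem:unit-non-critical}, respectively. The point is that, when $m$ is unknown, the critical part costs nothing extra while the non-critical part can be made online at the usual doubling loss.

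First I would check that running \Earlyfit on the critical jobs stays feasible and never uses more than $2m$ machines even when $m$ is unknown, so that this part needs no modification. If $j$ is critical with $Q_j\cap P=\{\tau p\}$, then $r_j\in(\tau p-p,\tau p]$ and $d_j\in[\tau p,\tau p+p)$, so \Earlyfit runs $j$ during $[r_j,r_j+p]\subseteq[r_j,d_j]$ (using $d_j-r_j\ge p$), and this interval is contained in $(\tau p-p,\tau p+p)$. Hence at any time $t$ every critical job under processing has its associated multiple $\tau p$ in the open interval $(t-p,t+p)$, which by integrality contains at most two multiples of $p$; and for each fixed multiple $\tau p$, every feasible schedule -- in particular \OPT -- processes all critical jobs with that multiple throughout $[\tau p-1,\tau p]$, so there are at most $m$ of them. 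Thus \Earlyfit uses at most $2m$ machines, and since this argument never invokes $m$, we may simply keep \Earlyfit online and on its own set of machines.

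Then I would apply \double to the non-critical jobs only. After rounding they form a unit-size instance $J'$, and rounding is done job by job and the optimum $m(J'(t))$ of the partial unit-size instance released by time $t$ is computable online, so \double can be run on $J'$. By Lemma~\ref{lem:unit-non-critical}, for any unit-size instance \EDF on $\mu$ machines produces a feasible schedule as soon as $\mu$ machines admit one, i.e., \EDF is a $1$-competitive semi-online algorithm for unit-size jobs; so Theorem~\ref{thm: black box} with $\alpha=1$ yields a feasible online schedule of the non-critical jobs on at most $4\,m(J')$ machines. Finally, the schedule of Lemma~\ref{lemma:nonpre-samesize-semi} uses at most $2m$ machines and places every non-critical job entirely inside a single slot $[\tau p,(\tau+1)p]$; read as a schedule of $J'$ it certifies $m(J')\le 2m$. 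Adding the $2m$ machines used for the critical jobs, the total is $4\cdot 2m+2m=10m$, as claimed. The only point that requires care is the bookkeeping: feeding the whole $4$-competitive algorithm into \double would cost another factor $4$ and give only $16m$; the improvement comes from peeling off the critical jobs (which are online for free) and from noticing that on the unit-size instance \EDF is $1$-competitive rather than $2$-competitive.
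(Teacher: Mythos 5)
Your proof is correct and takes essentially the same route as the paper: the paper's one-line argument is precisely ``apply \double only to the non-critical jobs,'' keeping \Earlyfit (which never consults $m$) running online for the critical jobs, and you have correctly reconstructed and fleshed out that bookkeeping to reach $4\cdot 2m + 2m = 10m$. Your reframing of the non-critical part as \double applied to the rounded unit-size instance $J'$ with $\alpha=1$ (using Lemma~\ref{lem:unit-non-critical} to get $1$-competitiveness) rather than to the original non-critical jobs with a $2$-competitive subroutine is a cosmetic repackaging that yields the identical bound.
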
 

\noindent\textbf{Remark.} Lemma~\ref{lemma:nonpre-samesize-semi} actually implies a $4$-approximation algorithm for the offline scheduling problem, in which the instance is known in advance and the goal is to minimize the number of machines so as to schedule each job in a feasible way. 
\begin{theorem}\label{coro:offline}
There exists a $4$-approximation algorithm for non-preemptive offline machine minimization when the processing times of all jobs are equal.
\end{theorem}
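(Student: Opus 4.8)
The plan is to turn the semi-online algorithm of this subsection into an offline algorithm, by noticing that its machine bounds can be realized without ever knowing the offline optimum $m=m(J)$. Concretely, I would split $J$ into critical and non-critical jobs exactly as above, schedule the critical ones by \Earlyfit on one dedicated set of machines, round the non-critical ones and schedule them by \EDF on a second dedicated set, and then bound each of these two sets by $2m$. Classifying jobs, rounding, \Earlyfit and \EDF all run in polynomial time, so the only things to verify are the two machine-count bounds together with the fact that neither half of the procedure needs $m$ as input.

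For the critical jobs, recall from the proof of Lemma~\ref{lemma:nonpre-samesize-semi} that a critical job $j$ whose window meets $P$ in the single point $\tau p$ satisfies $(\tau-1)p < r_j$ and $d_j < (\tau+1)p$, so in \emph{every} feasible schedule job $j$ occupies the whole interval $[\tau p - 1,\tau p + 1]$; in particular at most $m$ critical jobs can share the same value of $\tau$. Under \Earlyfit a critical job runs in $[r_j, r_j + p]$, hence if it runs at an integer time $t$ its special point $\tau p$ lies in the open interval $(t-p, t+p)$, which contains at most two multiples of $p$. Therefore at any time \Earlyfit processes critical jobs for at most two values of $\tau$, i.e., at most $2m$ jobs, so it needs at most $2m$ machines, with no parameter entering the algorithm.

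For the non-critical jobs, I would round each such $j$ (release date up, deadline down, to the nearest element of $P$) and, as noted right after Lemma~\ref{lemma:nonpre-samesize-semi}, regard the rounded jobs as unit-length jobs on the grid $P$. Lemma~\ref{lemma:nonpre-samesize-semi} supplies a feasible schedule of these unit-length jobs on at most $2m$ machines, so their standalone optimum $k^\star$ obeys $k^\star \le 2m$; and $k^\star$ is computable in polynomial time, e.g.\ by running \EDF with $k = 1, 2, \dots$ and stopping at the first feasible value (which happens within at most $2m \le 2n$ trials). By Lemma~\ref{lem:unit-non-critical}, \EDF with $k^\star$ machines produces a feasible schedule of the rounded non-critical jobs, and since unrounding only enlarges feasible windows the same schedule is feasible for the original jobs.

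Putting the two parts on disjoint machine sets then yields a feasible schedule of $J$ on at most $2m + k^\star \le 4m = 4 \cdot m(J)$ machines, computed in polynomial time, which is the desired $4$-approximation. I expect no real obstacle beyond bookkeeping: all of the mathematical content already sits in Lemmas~\ref{lemma:nonpre-samesize-semi} and~\ref{lem:unit-non-critical}, and the only care required is to confirm that the \Earlyfit half is parameter-free and that the parameter $k^\star$ used for the \EDF half can itself be found offline, so that the overall algorithm runs with no prior knowledge of the optimum.
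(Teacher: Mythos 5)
Your proof is correct and follows essentially the same route as the paper: split into critical and non-critical jobs, run \Earlyfit on the critical jobs ($\le 2m$ machines), round the non-critical jobs to unit size on the grid $P$ and schedule them optimally ($\le 2m$ machines by Lemma~\ref{lemma:nonpre-samesize-semi}), giving $4m$ in total. The only cosmetic differences are that the paper invokes the optimal offline unit-job algorithm of~\cite{KaoCRW12} where you instead search for the smallest feasible $k$ and then apply \EDF (justified by Lemma~\ref{lem:unit-non-critical}), and that you supply a direct, self-contained argument for the $2m$ bound on critical jobs rather than reading it off the lemma.
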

\begin{proof}
Let $m$ be the (unknown) optimum value. We schedule critical jobs immediately when they are released. This requires at most $2m$ machines by Lemma~\ref{lemma:nonpre-samesize-semi}. On separate machines we schedule non-critical jobs, and after rounding they are viewed as unit-size jobs which could be scheduled feasibly on additional $2m$ machines according to Lemma~\ref{lemma:nonpre-samesize-semi}. In~\cite{KaoCRW12} an optimum algorithm is presented for the offline problem of scheduling unit size jobs, and by applying this algorithm we directly get a feasible schedule for non-critical jobs on at most $2m$ machines. Thus, the overall approximation ratio is $3$. 
\end{proof}
  
\section{Uniform deadlines}
\label{sec:special-cases3}

Consider the special case of (semi-)online machine minimization with a uniform deadline, i.e., every job $j\in J$ has the same deadline $d_j=d$.

\subsection{Preemptive Scheduling}

First look at the special case of preemptive semi-online deadline scheduling in which every job $j\in J$ has the same deadline $d_j=d$. We prove that running \LLF with $m$ machines yields a feasible solution and thus \LLF is 1-competitive. 

\begin{theorem}\label{thm:pre-samedeadline-optknown}
  %There exists a feasible solution with $m$ machines such that jobs are scheduled according to the \LLF rule.
  On instances with a uniform deadline, \LLF is $1$-competitive for preemptive semi-online machine minimization.
\end{theorem}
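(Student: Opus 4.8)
The plan is to prove that \LLF never lets a job miss its deadline when run on exactly $m$ machines, where $m$ is the optimum. Since all jobs share the deadline $d$, the laxity of a job $j$ that has been processed for $x_j$ units by time $t$ is $\ell_j(t)=d-t-(p_j-x_j)$, so among active jobs the laxity order is exactly the order of remaining processing time: the job with the most remaining work has the least laxity. Hence on instances with a uniform deadline, \LLF coincides with the \emph{Longest Remaining Processing Time} rule, which is known to be optimal for minimizing makespan-type / feasibility objectives. I would make this correspondence explicit first, as it converts the problem into a cleaner combinatorial statement.

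The main step is a standard exchange/swap argument. Suppose for contradiction that \LLF on $m$ machines misses some deadline, i.e.\ at time $d$ some job has positive remaining processing time; equivalently $W_{\LLF}(d)>0$. Since the instance is feasible on $m$ machines, $W_{\OPT}(d)=0$. I would compare the \LLF schedule with an optimal schedule and argue that \LLF is, at every time step, ``as balanced as possible'': formally, if we sort the vector of remaining processing times of active jobs in nonincreasing order, the \LLF vector is lexicographically (or majorization-wise) dominated by that of any other busy schedule. The key invariant to maintain by induction on $t$ is that for every $k$, the total remaining processing time of the $k$ jobs with largest remaining work under \LLF is at most the corresponding quantity under \OPT (restricted to already-released jobs, with unreleased jobs treated identically on both sides since release dates are common information). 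Because \LLF always works on the $m$ jobs with largest remaining processing time, it maximizes the decrease of precisely these top-$k$ sums subject to the $m$-machines constraint, which preserves the invariant across one time step. Applying the invariant with $k$ equal to the number of active jobs at time $d$ gives $W_{\LLF}(d)\le W_{\OPT}(d)=0$, the desired contradiction.

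The main obstacle I anticipate is handling the release dates cleanly inside the induction: a newly released job at time $t$ contributes its full processing time $p_j$ to both schedules simultaneously, but it may enter the ``top-$k$'' set at different ranks in the two schedules, so the bookkeeping of which $k$ jobs achieve the top-$k$ sum needs care. One clean way around this is to phrase the invariant in the ``prefix-sum'' / majorization form above, which is insensitive to \emph{which} jobs realize the maximum — only the sorted multiset of remaining processing times matters — and to observe that inserting the same value $p_j$ into both sorted sequences preserves all prefix-sum inequalities. A secondary subtlety is the case $w_{\LLF}(t)<m$: then by busyness every active job is being processed, so every active remaining time drops by $1$ (or to $0$), which is clearly the maximal possible decrease of every top-$k$ sum, so the invariant is trivially maintained; the interesting case is $w_{\LLF}(t)=m$, handled as above. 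Ties are broken consistently by release date then index as specified, which does not affect the sorted-multiset argument. Once the invariant is established, the theorem follows immediately.
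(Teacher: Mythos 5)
Your proof is correct, but it takes a genuinely different route from the paper. The paper gives a \emph{local exchange argument}: it inductively morphs the optimal schedule $S_{\OPT}$ into $S_{\LLF}$ one time slot at a time, observing that whenever $S_{\LLF}$ processes $j$ while $S_{\OPT}$ processes $j'$ instead, the laxity relation $\ell_{j'}(t)\le\ell_j(t)$ (equivalently $p_{j'}(t)\ge p_j(t)$) guarantees a later slot where $j$ is processed but $j'$ is not, so the two units can be swapped without breaking feasibility. This only uses uniform deadline implicitly (it ensures the swap slot stays inside the common window). Your argument instead makes the LLF$=$LRPT reduction explicit and runs a \emph{global} prefix-sum/weak-majorization invariant comparing LLF's sorted remaining-work vector to OPT's at every time step, then reads off $W_{\LLF}(d)\le W_{\OPT}(d)=0$. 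Both are valid; the paper's version is more self-contained (a single swap step, no majorization machinery), while yours yields the stronger and somewhat classical fact that LRPT's profile is weakly majorized by any feasible profile, which is nice to have.

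One step in your sketch is stated too loosely and, as written, is not itself a proof: you justify the inductive step by saying that LLF ``maximizes the decrease of precisely these top-$k$ sums,'' but that is a statement about what LLF does to \emph{its own} vector $a$, whereas the invariant compares $a$ to OPT's different vector $b$. That LLF is prefix-sum-optimal on $a$ does not by itself imply $A'_k\le B'_k$ when $b\ne a$. The cleanest fix is to argue at the level of the equivalent characterization $\sum_i(a_i-\theta)^+\le\sum_i(b_i-\theta)^+$ for all $\theta$. Since an LLF/LRPT step decrements the top $\mu=\min(m,s_a)$ positives, it reduces $\sum_i(a_i-\theta)^+$ by exactly $\min(\mu,k_\theta)$ where $k_\theta=|\{i:a_i>\theta\}|$, while any step on $b$ reduces $\sum_i(b_i-\theta)^+$ by at most $\min(m,l_\theta)$ with $l_\theta=|\{i:b_i>\theta\}|$; using the difference identity $f_b(\theta)-f_a(\theta)=\bigl(f_b(\theta+1)-f_a(\theta+1)\bigr)+(l_\theta-k_\theta)\ge l_\theta-k_\theta$ one gets $\min(m,l_\theta)-\min(\mu,k_\theta)\le f_b(\theta)-f_a(\theta)$ in every case, which is exactly what is needed. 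The same characterization also handles releases in one line, since inserting $p_j$ adds $(p_j-\theta)^+$ to both sides. With that step filled in, your argument is complete.
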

\begin{proof}
We prove via an exchange argument. Let $S_{\OPT}$ be the feasible offline schedule that uses $m$ machines, and $S_{\LLF}$ be the schedule produced by \LLF (and in $S_{\LLF}$ some jobs may miss their deadlines). %We assume that in $S_{\OPT}$ jobs are only preempted at integer points, otherwise we may re-scale the time axis by multiplying a sufficiently large factor to every $p_j$. 

We show that we can inductively alter $S_{\OPT}$ to $S_{\LLF}$ and thereby maintain feasibility. To this end, suppose $S_{\OPT}$ and $S_{\LLF}$ coincide during $[0,t]$ for $t\ge 0$. Let $j$ be any job which is chosen to be scheduled in $S_{\LLF}$ during $[t,t+1]$ but preempted in $S_{\OPT}$. If there is still a free machine in $S_{\OPT}$, we can simply use one of them to schedule $j$. Otherwise every machine is occupied and there exists some $j'$ so that during $[t,t+1]$ job $j'$ is processed in $S_{\OPT}$ but not in $S_{\LLF}$. Recall that in $S_{\LLF}$ job $j'$ is not processed in favor of job $j$. Thus their laxities satisfy $\ell_{j'}(t)\le \ell_j(t)$, or equivalently, $p_{j'}(t)\ge p_{j}(t)$. Since in $S_{\OPT}$ job $j'$ is chosen to be scheduled in $[t,t+1]$ while job $j$ is preempted, it follows that during $[t+1,d]$ in $S_{\OPT}$, there must exist some time $t'$ such that job $j$ is scheduled, while job $j'$ is not scheduled (either preempted or finished). We now alter $S_{\OPT}$ by swapping the unit of $j'$ processed during $[t,t+1]$ with the unit of $j$ processed during $[t',t'+1]$. By doing so, the resulting schedule is still feasible.
\end{proof}

%\subsection{Preemptive, same deadline, $OPT$ unknown}
%Applying the black box algorithm, we derive a $4$-competitive deterministic algorithm.

We conclude this subsection with the following result obtained using Theorem~\ref{thm: black box}.
\begin{theorem}
  On instances with a uniform deadline, there is a $4$-competitive algorithm for preemptive online machine minimization.
\end{theorem}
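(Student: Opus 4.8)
The plan is to invoke Theorem~\ref{thm: black box} (the \double reduction) with the $1$-competitive semi-online algorithm for uniform deadlines established in Theorem~\ref{thm:pre-samedeadline-optknown}. Concretely, take $\alpha=1$ and let $\mathsf{A}_1(m)=\LLF$ run on $m$ machines; by Theorem~\ref{thm:pre-samedeadline-optknown} this is a valid $1$-competitive algorithm for preemptive semi-online machine minimization on instances with a uniform deadline. Feeding this into \double yields an algorithm for the fully online problem, and Theorem~\ref{thm: black box} gives a competitive ratio of $4\cdot 1 = 4$.

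The one point that needs a sentence of justification is that the class of instances with a uniform deadline is preserved by the \double reduction, so that the subroutine $\mathsf{A}_1$ is only ever invoked on instances it is designed for. This is immediate: \double partitions the job set by release date into the groups $J_i=\{j\mid r_j\in[t_i,t_{i+1})\}$ and runs a separate copy of $\mathsf{A}_1$ on each $J_i$; since every job in the original instance shares the common deadline $d$, so does every job in each $J_i$, hence each subinstance again has a uniform deadline and $\LLF$ is $1$-competitive on it. Thus the feasibility argument in the proof of Theorem~\ref{thm: black box} goes through verbatim with $\alpha=1$.

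I do not anticipate a real obstacle here; the result is a direct corollary. The only mild subtlety is bookkeeping in the reduction when $\alpha=1$, but nothing in the proof of Theorem~\ref{thm: black box} uses $\alpha>1$ or any slack, so the factor-$4$ bound holds as stated. Hence: apply \double to $\LLF$ (viewed as the $1$-competitive semi-online subroutine), observe that every generated subinstance still has a uniform deadline, and conclude $4$-competitiveness from Theorem~\ref{thm: black box}.
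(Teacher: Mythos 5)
Your proof is correct and matches the paper's own reasoning exactly: the paper obtains this result by applying Theorem~\ref{thm: black box} to the $1$-competitive semi-online \LLF algorithm from Theorem~\ref{thm:pre-samedeadline-optknown}. Your extra remark that the uniform-deadline property is inherited by each sub-instance $J_i$ (so the subroutine is always invoked on inputs it handles) is a good, if implicit in the paper, sanity check.
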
 

\subsection{Non-preemptive Scheduling}

Consider non-preemptive semi-online machine minimization with a uniform deadline.

Recall the definition of $\alpha$-tight and $\alpha$-loose jobs. Since $d_j=d$ in this special case, a job is $\alpha$-tight if $p_j> \alpha(d-r_j)$, and $\alpha$-loose otherwise. We give a $5.25$-competitive algorithm.

\medskip
\noindent\textbf{Algorithm:} We schedule ($\alpha$-)tight and loose jobs separately. 
For tight jobs, we use \Earlyfit, that is, schedule each job immediately when they are released. For loose jobs, we apply \EDF with $m/(1-\alpha)^2$ machines.
\medskip

\noindent\textbf{Remark.} Since jobs have the same deadline, \EDF actually does not distinguish among jobs and thus represents any busy algorithm.

\begin{lemma}\label{lemma:nonpre-deadline-optknown}
\Earlyfit is a $\lceil 1/\alpha\rceil$-competitive algorithm for (semi-)online machine minimization.
\end{lemma}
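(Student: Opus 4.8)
The plan is to argue by contradiction, analogously to (and in fact more simply than) the \Mediumfit argument in Lemma~\ref{lemma:agreeable-medium-fit}, exploiting the fact that all jobs share the same deadline $d$. Suppose that at some time $t$, \Earlyfit runs more than $\lceil 1/\alpha\rceil\cdot m$ of the $\alpha$-tight jobs simultaneously. Since \Earlyfit starts each job immediately at its release date and processes it for $p_j$ units without preemption, a job $j$ being run at $t$ satisfies $r_j\le t< r_j+p_j$, hence $r_j> t-p_j$. Because $j$ is $\alpha$-tight we have $p_j>\alpha(d-r_j)$, and combining this with $r_j> t-p_j$ gives a lower bound on $p_j$ in terms of $t-r_j$; more usefully, since $r_j>t-p_j$ we get $d-r_j> d-t$ is not quite what we want, so instead I would use $d-r_j\ge d-t$ only when $r_j\le t$, which holds, giving $p_j>\alpha(d-t)$ — wait, that is the wrong direction. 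Let me reorganize.

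The cleaner route: fix the time $t$ and let $J^\star$ be the set of more than $\lceil 1/\alpha\rceil m$ tight jobs running at $t$ under \Earlyfit. Each such job has $r_j\le t$ and is still running, so $r_j+p_j> t$, i.e.\ $p_j> t-r_j$. In the optimal schedule on $m$ machines, by the pigeonhole principle some machine runs more than $\lceil 1/\alpha\rceil$ jobs of $J^\star$; pick the one among them with the largest release date, call it $j^\star$, and note all of these jobs must fit (non-preemptively) in the window $[\,\min_j r_j,\,d\,]$ on that single machine, so $\sum p_j\le d-\min_j r_j\le d-r_{j^\star}$ is false in general — the jobs other than $j^\star$ may be released earlier. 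I would instead bound via: all jobs on that machine have release date $\le r_{j^\star}$... this is also not forced. The genuinely robust statement is that on that machine the total processing time of these $>\lceil1/\alpha\rceil$ jobs is at most $d$ minus the \emph{smallest} release date among them, but each of them is $\alpha$-tight, so each has $p_j>\alpha(d-r_j)\ge\alpha(d-r_{\max})$ where $r_{\max}$ is the largest of their release dates; summing, $(\lceil1/\alpha\rceil+1)\cdot\alpha(d-r_{\max})<\sum p_j\le d-r_{\min}$. Since $r_{\min}\le r_{\max}$, the right side is $\le d-r_{\min}$, and I need the left side to dominate — this requires controlling $d-r_{\min}$ versus $d-r_{\max}$, which is not automatic.

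So the real argument must use that all these jobs are \emph{running at the common time $t$}: hence for every $j\in J^\star$ we have $r_j\le t\le r_j+p_j$, giving simultaneously $p_j\ge t-r_j$ and, by $\alpha$-tightness, $p_j>\alpha(d-r_j)\ge \alpha(d-t-(t-r_j))+\alpha(t-r_j)$, i.e.\ combining, $p_j>\alpha(d-t)+\alpha(t-r_j)\ge\alpha(d-t)$ is still not enough on its own, but now on OPT's overloaded machine with $k>\lceil1/\alpha\rceil$ such jobs, the portions of them scheduled after $t$ sum to at most $d-t$, and the portions before $t$ sum to at most $t-r_{\min}$; meanwhile each contributes total $p_j>\alpha(d-r_j)$, and $d-r_j=(d-t)+(t-r_j)$. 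Summing over the $k$ jobs: $\sum p_j> k\alpha(d-t)+\alpha\sum(t-r_j)$, while $\sum p_j\le (d-t)+\sum(t-r_j)$ (the first term because at most $d-t$ total work fits on one machine in $[t,d]$, the second because each job's pre-$t$ part is at most $t-r_j$). Hence $(d-t)+\sum(t-r_j)> k\alpha(d-t)+\alpha\sum(t-r_j)$; since $\alpha\le1$, the $\sum(t-r_j)$ terms are dominated on the left, so $(d-t)>k\alpha(d-t)-(1-\alpha)\sum(t-r_j)$, and as $\sum(t-r_j)\ge0$ this forces $1\ge k\alpha$ — wait that gives $k\le 1/\alpha$, contradicting $k>\lceil1/\alpha\rceil\ge1/\alpha$. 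That closes it. The main obstacle is exactly this bookkeeping: correctly splitting each job's load around the witness time $t$ and checking that the $\alpha$-weighted release-date slack terms cancel in the right direction; everything else is the standard pigeonhole setup. I would present the final inequality chain as a single \texttt{align*} display culminating in $k\le\lceil1/\alpha\rceil$, the desired contradiction.
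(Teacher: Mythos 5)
Your final step is where the argument breaks. From $(d-t)+\sum_j(t-r_j) > k\alpha(d-t)+\alpha\sum_j(t-r_j)$ you rewrite to $(d-t) > k\alpha(d-t) - (1-\alpha)\sum_j(t-r_j)$ and claim that, since $\sum_j(t-r_j)\ge 0$, this forces $k\alpha\le 1$. That implication is backwards: because $(1-\alpha)\sum_j(t-r_j)\ge 0$, the displayed inequality is \emph{weaker} than $(d-t)>k\alpha(d-t)$, not stronger, and it holds comfortably even with $k\alpha$ well above $1$ (take, say, $\alpha=1/2$, $k=3$, $d-t=1$, $\sum_j(t-r_j)=10$). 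The underlying reason is that bounding the pre-$t$ work on the OPT machine by $\sum_j(t-r_j)$ is far too loose---on one non-preemptive machine the pre-$t$ work is at most $t-r_{\min}$, which you even state but then do not use---and even that tighter bound does not close your inequality, because the release dates can be arbitrarily early compared to $d-t$.

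The structural fact your accounting discards is that \emph{every} job of $J^\star$ is still running at $t$ under \Earlyfit, so $r_j+p_j>t$ for all of them. In any non-preemptive schedule each such job, started at or after $r_j$, therefore finishes strictly after $t$; on a single machine these jobs form a chain of disjoint intervals all ending past $t$, so at most one of them can start before $t$. The remaining $\ge k$ jobs lie entirely inside $(t,d]$, and since each satisfies $p_j>\alpha(d-r_j)\ge\alpha(d-t)\ge(d-t)/\lceil 1/\alpha\rceil$, their total strictly exceeds $d-t$, which cannot fit in $(t,d]$. That is the contradiction, and it is exactly the route the paper takes (with $k=\lceil 1/\alpha\rceil$ and $k+1$ jobs pigeonholed onto one optimal machine). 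Your ``split every job's load around $t$'' bookkeeping throws away the ``one job blocks through $t$, all others pushed past $t$'' structure, and the resulting inequality is simply not strong enough to recover it.
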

\begin{proof}
Let $k=\lceil 1/\alpha\rceil$ for simplicity. Then for any tight job $j$ we have $p_j>1/k\cdot(d-r_j)$. Consider the solution produced by \Earlyfit on tight jobs and assume it uses $m'$ machines. There exists some time $t$ such that $w_{\Earlyfit}(t)=m'$. Let $S$ be this set of these $m'$ jobs processed at $t$. We claim that any $k+1$ jobs from $S$ could not be put on the same machine in any feasible solution, yielding that $m\leq m'/k$.

To see why the claim is true, suppose that there exist $k+1$ jobs from $S$ that can be scheduled on one machine, and let them be job $1$ to job $k+1$ for simplicity. We know that $r_j\le t$ for $1\le j\le k+1$, i.e., we get $p_j>1/k\cdot (d-r_j)\ge 1/k\cdot (d-t)$. Now consider the schedule where the $k+1$ jobs are scheduled on one machine, and assume w.l.o.g.~that job $1$ is scheduled at first. Recall that even if job $1$ is scheduled immediately after its release, it will be processed during $[t,t+1]$. Thus job $2$ to job $k+1$, which are scheduled after job $1$, could not start before time $t+1$. Hence, they will be scheduled in $[t+1,d]$. However, $p_j>1/k\cdot(d-t)$, which is a contradiction. 
\end{proof}

We can now prove the following theorem:

\begin{theorem}
  There is a $5.25$-competitive algorithm for non-preemptive semi-online machine minimization with a uniform deadline.
\end{theorem}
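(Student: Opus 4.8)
The plan is to instantiate the given two-part algorithm with $\alpha = 1/3$ and to bound the two machine budgets separately. Write $m$ for the optimum number of machines for the whole instance; since the optimum for the $\alpha$-tight jobs alone and for the $\alpha$-loose jobs alone are each at most $m$, and the two job classes are scheduled on disjoint machine sets, it suffices to show that \Earlyfit needs at most $\lceil 1/\alpha\rceil\, m$ machines on the tight part and that \EDF needs at most $m/(1-\alpha)^2$ machines on the loose part. Adding the two budgets and plugging in $\alpha = 1/3$ gives $3m + (9/4)m = 5.25\,m$.

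The tight part is immediate from Lemma~\ref{lemma:nonpre-deadline-optknown}, which gives $\lceil 1/\alpha\rceil = 3$ machines per optimal machine. For the loose part I would show that, in the present non-preemptive uniform-deadline setting, any busy algorithm is $1/(1-\alpha)^2$-competitive; with $\alpha = 1/3$ this is $9/4$. In particular the non-preemptive \EDF we use is busy: a started job runs to completion, and whenever a machine is idle and an unstarted active job exists the algorithm starts one, so at any time with fewer than $m/(1-\alpha)^2$ jobs running every active job is running. The claim is exactly Theorem~\ref{thm: EDF-small} but for non-preemptive scheduling. The workload inequality Lemma~\ref{lemma: small-busy-load} is stated for arbitrary busy algorithms and, as already observed in the proof of Lemma~\ref{lemma:agreeable-loose-edf}, remains valid for non-preemptive machine minimization; note also that a once-released loose job stays $\alpha$-loose, since $p_j(t)$ only decreases. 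The only place where the proof of Theorem~\ref{thm: EDF-small} relies on structure beyond a busy algorithm is the reduction to $J' = \{j \in J \mid d_j \le d_{j^\star}\}$, used so that the missed job $j^\star$ attains the maximum deadline; here all deadlines equal $d$, so $J' = J$ and this step is vacuous. The remaining argument — lower-bounding $W_{\EDF}(r_{j^\star})$ by the fact that $j^\star$ goes unprocessed for at least $(1-\alpha)(d - r_{j^\star})$ time units, upper-bounding it via Lemma~\ref{lemma: small-busy-load}, and comparing the two bounds — carries over verbatim and yields the contradiction precisely when $c \ge 1/(1-\alpha)^2$.

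Summing the two budgets with $\alpha = 1/3$ yields a feasible schedule on at most $\bigl(\lceil 1/\alpha\rceil + 1/(1-\alpha)^2\bigr)m = (3 + 9/4)m = 5.25\,m$ machines, as claimed. The main thing to get right is the second paragraph: one must verify that the non-preemptive \EDF restricted to the loose jobs is genuinely busy on its $m/(1-\alpha)^2$ machines, and that the workload-bounding steps in Lemma~\ref{lemma: small-busy-load} and in the proof of Theorem~\ref{thm: EDF-small} go through without change when preemption is forbidden; everything else is a one-line substitution of $\alpha = 1/3$ into results already established.
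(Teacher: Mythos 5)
Your proposal is correct and follows essentially the same route as the paper: schedule tight and loose jobs on disjoint machine sets, bound the tight budget by $\lceil 1/\alpha\rceil m$ via Lemma~\ref{lemma:nonpre-deadline-optknown}, bound the loose budget by $m/(1-\alpha)^2$, and pick $\alpha=1/3$. The only difference is that the paper obtains the loose-job bound by citing Lemma~\ref{lemma:agreeable-loose-edf} (viewing a uniform deadline as the special case of agreeable deadlines), whereas you re-derive that lemma in place by noting the reduction $J'=\{j\mid d_j\le d_{j^\star}\}$ is vacuous when all deadlines coincide; the underlying argument (non-preemptive \EDF is busy, Lemma~\ref{lemma: small-busy-load} applies, contradiction on the workload in $[r_{j^\star},d]$) is the same.
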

\begin{proof}
For loose jobs, we know that the scheduling problem with same deadline jobs is a special case of scheduling with agreeable deadlines jobs. Hence according to Lemma~\ref{lemma:agreeable-loose-edf}, \EDF is a $1/(1-\alpha)^2$-competitive semi-online algorithm. 
Combining the above lemma and setting $\alpha=1/3$, we derive the $5.25$-competitive algorithm.
\end{proof}

%\subsection{Non-preemptive, same deadline, $OPT$ unknown}

Using \double as a black box (Theorem~\ref{thm: black box}), we directly derive a $21$-competitive algorithm for non-preemptive online machine minimization with a uniform deadline. However, since \Earlyfit is already an online algorithm, we only need to apply the doubling technique to transform \EDF into an online algorithm. This yields a $(4/(1-\alpha)^2+\lceil1/\alpha\rceil)$-competitive algorithm. This performance guarantee is optimized for $\alpha=1/4$, yielding the following result.

\begin{theorem}
  There is a $11\frac{1}{9}$-competitive algorithm for non-preemptive online machine minimization with a uniform deadline.
\end{theorem}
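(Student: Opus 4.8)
The plan is to combine the two ingredients already established for uniform deadlines, but this time treating $\alpha$-loose jobs with a fully online algorithm obtained from \double and keeping \Earlyfit as is (since it already requires no knowledge of $m$). First I would recall from Lemma~\ref{lemma:nonpre-deadline-optknown} that \Earlyfit schedules the $\alpha$-tight jobs on at most $\lceil 1/\alpha\rceil\cdot m$ machines, and that this algorithm is online by construction: its decisions depend only on release dates, not on $m$. So no transformation is needed for the tight part. For the $\alpha$-loose jobs, I would invoke Lemma~\ref{lemma:agreeable-loose-edf} (uniform deadlines being a special case of agreeable deadlines) which says non-preemptive \EDF is $1/(1-\alpha)^2$-competitive in the semi-online setting. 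Applying Theorem~\ref{thm: black box} (\double) to this semi-online subroutine yields a fully online algorithm for the loose jobs using at most $4/(1-\alpha)^2\cdot m$ machines.

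Next I would add up the two machine counts: running the \double-wrapped \EDF for loose jobs on its own dedicated pool of machines and \Earlyfit for tight jobs on a separate pool gives a feasible schedule for the whole instance on at most $\left(\frac{4}{(1-\alpha)^2}+\left\lceil\frac{1}{\alpha}\right\rceil\right)m$ machines. Finally I would optimize the choice of $\alpha$. Setting $\alpha=1/4$ gives $\lceil 1/\alpha\rceil=4$ and $4/(1-\alpha)^2 = 4/(3/4)^2 = 4\cdot 16/9 = 64/9$, so the total is $64/9 + 4 = 64/9 + 36/9 = 100/9 = 11\frac{1}{9}$, which matches the claimed bound.

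The one subtlety worth spelling out — and the step I expect to require the most care — is why it is legitimate to apply \double only to the loose-job subroutine rather than to the whole algorithm. The point is that \Earlyfit never needs to know $m$, so running it unmodified is fine; only the \EDF component, whose correctness in Lemma~\ref{lemma:agreeable-loose-edf} relied on knowing $m$, must be made online. One must check that \double applied to the loose jobs alone still produces a feasible schedule: this follows because \double, instantiated with the $\frac{1}{(1-\alpha)^2}$-competitive non-preemptive semi-online algorithm, partitions the loose jobs by release-date intervals $[t_{i-1},t_i)$ and schedules each batch on its own machines, and Theorem~\ref{thm: black box} explicitly covers the non-preemptive case. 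Since tight and loose jobs live on disjoint machine sets, feasibility of each part gives feasibility overall, and the machine counts simply add. Beyond this observation the proof is a one-line arithmetic optimization, so I would keep the write-up short.
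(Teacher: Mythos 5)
Your proposal is correct and is essentially identical to the paper's argument: the paper likewise observes that \Earlyfit is already online (needing no knowledge of $m$), applies \double only to the non-preemptive \EDF subroutine for loose jobs to get the factor $4/(1-\alpha)^2$, sums the two bounds, and optimizes at $\alpha=1/4$ to obtain $100/9 = 11\frac{1}{9}$.
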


\section{A Preemptive  $\mathcal{O}(\log n)$-competitive Algorithm}
\label{sec:general logn}

We give a preemptive $\mathcal{O}(\log n)$-competitive semi-online
algorithm and apply \double (Theorem~\ref{thm: black box}). Assume for
the remainder of this section that the optimal number of machines is known.

\subsection{Description}

We fix some parameter $\alpha\in(0,1)$. At any time $t$, our algorithm maintains a partition of $J(t)$ into  {\em safe} and {\em critical} jobs (at $t$), which we define as
\[L_t=L_{t-1}\cup\left\{j\in J(t)\mid p_j(t)\leq\alpha(d_j-t)\right\}\;(L_{-1}=\emptyset)\textup{ and }T_t=J(t)\setminus L_t, \textup{ respectively.}\]
In other words, once a job gets $\alpha$-loose at some time, it gets classified as safe job from then on and is only considered as critical if it has {\em always} been $\alpha$-tight.

%The names for the two sets are meaningful. On the one hand, 
Our algorithm does the following. Once a job is classified as safe it is taken care of by $\EDF$ on separate machines. More formally, consider $L=\bigcup_t\left\{j_t\mid j\in L_t\setminus L_{t-1}\right\},$ where $j_t$ is the {\em residue} of job $j$ at time $t$, i.e., it is a job with processing time $p_j(t)$ and  time window $[t,d_j]$. We run \EDF on $L$ using $m(L)/(1-\alpha)^2$ machines. The critical jobs (at time $t$) require a more sophisticated treatment on $\Theta(m\log|J(t)|)$ machines, which we describe below. %We now explain how our algorithm deals with $T_t$.

Consider the set of critical jobs $T_t$ at time $t$. We partition $T_t$ into $h_t\cdot\mu_t$ different subsets, each of which is processed by \EDF on a separate machine. If $t$ is not a release date, we simply take over the partition from $t-1$ and remove all the jobs that have become safe since then. Otherwise, assume w.l.o.g.~that $T_t=\{1,\dots,|T_t|\}$ with $d_j\geq d_{j+1}$, for all $j$, and first partition $T_t$ into $h_t$ many sets $S_1,\dots,S_{h_t}$. We consider the jobs in increasing order of indices and assign them to existing sets or create new ones, as follows. %Begin with $h_t=0$ . 
Let $a(i)$ be the job with the earliest deadline in an existing set $S_i$. Then a job $j$ is added to an arbitrary set $S_i$ such that the (remaining) length of the feasible time window of $j$ is smaller than the laxity of $j_i$, i.e., $d_j-t\leq \ell_{a(i)}(t)$. If no such set exists, we create a new set and add $j$.

\begin{footnotesize}
\begin{figure}
\centering
\begin{tikzpicture}[scale=0.45]
%\begin{tikzpicture}[scale=0.35]
  \newcounter{x}
  \newcommand\job[3]{
    \addtocounter{x}{1}
    \begin{scope}[shift={(0,-\value{x})}]
      \filldraw[fill=black!#3, shift={(-0.05,0)}] (#1.1,0.05) rectangle (#2,0.9);
      \draw (0,0.2) -- (0,0) -- (#2,0) -- +(0,0) -- +(0,0.2);
      \node at (-0.5,0.5) {\scriptsize{$j_\text{\arabic{x}}$}};
    \end{scope}
  }
  \job{17}{20}{100}
  \job{14}{16}{66}
  \job{10}{13}{33}  
  \job{9}{10}{0}  
  \job{7}{9}{100}  
  \job{6}{7}{66} 
  \job{5}{6}{33} 
  \job{2}{4}{0}
  \job{1}{2}{100}
  %\node at (0.95,-9.5) {\vdots}; 
  \node at (23,-4.5) {\LARGE{$\leadsto$}};  
  %\node at (-4,-4.5) {\LARGE{\dots}};
  \draw[->] (0,0.5) -- (22,0.5);
  %\draw (0,0.5) -- (0,-10.5);
  \foreach \tick in {0,...,21}
    \draw (\tick,0.4) -- (\tick,0.6);
  \node at (0,1) {\scriptsize{$t$}};
  \node at (1,1) {\scriptsize{$t$+$1$}};     
  \node at (22,1) {\scriptsize{time}};  
  \node at (20.7,-1) {\scriptsize{$d_{j_1}$}};
  \node at (18.5,-0.57) {\textcolor{white}{\scriptsize{$p_{j_1}(t)$}}};
  \draw[<-] (0.05,-0.5) -- (7.5,-0.5);
  \draw[->] (9.5,-0.5) -- (16.95,-0.5);
  \draw (0.05,-0.3) -- (0.05,-0.7);
  \draw (16.95,-0.3) -- (16.95,-0.7);
  \node at (8.5,-0.5) {\scriptsize{$\ell_{j_1}(t)$}}; 
  
  \newcommand\sjob[7]{
      \filldraw[fill=black!#3, shift={(-0.05,0)}] (#1.1,-#4.05) rectangle (#2,-#4.9);
      \node at (#5,-#4.5) {\textcolor{black!#7}{\scriptsize{$j_#6$}}};
  }
  \begin{scope}[shift={(26,-2.5)}]
    \draw[->] (0,0.5) -- (8,0.5);
    \foreach \tick in {0,...,7}
      \draw (\tick,0.4) -- (\tick,0.6);
    \foreach \y/\machine in {0/1,1/2,2/3,3/4}{   
      \node at (-0.7,-\y.5) {\scriptsize{$m_\machine$}}; 
      \draw[shift={(0,0.05)}] (0,-\y.8) -- (0,-\machine) -- (8,-\machine);
    }
    \node at (0,1) {\scriptsize{$t$}};
    \node at (1,1) {\scriptsize{$t$+$1$}};     
    \node at (8,1) {\scriptsize{time}};        
    \sjob{0}{1}{100}{0}{0.5}{9}{0}
    \sjob{1}{3}{100}{0}{2}{5}{0}
    \sjob{3}{6}{100}{0}{4.5}{1}{0}
    \sjob{0}{1}{66}{1}{0.5}{6}{0}
    \sjob{1}{3}{66}{1}{2}{2}{0}
    \sjob{0}{1}{33}{2}{0.5}{7}{100}
    \sjob{1}{4}{33}{2}{2.5}{3}{100}
    \sjob{0}{2}{0}{3}{1}{8}{100}
    \sjob{2}{3}{0}{3}{2.5}{4}{100}            
  \end{scope}            
\end{tikzpicture}
\caption{Visualization of jobs in set $S_i\subset T_t$ and their  (temporal) distribution over machines. Left: Time window of each job with its processing volume pushed to the right. Right: resulting job-to-machine~assignment. Assume $\alpha$ to be small enough that none of the jobs gets safe before completion as well as $\mu_t=4$ (the machines are called $m_1,\dots,m_4$).
%Visualization of jobs in some set $S_i\subset T_t$ (left) and the resulting schedule (right). Jobs of the same gray scale are assigned to the same machine. Assume $\alpha$ to be small enough that none of the jobs gets safe before completion as well as $\mu_t=4$ (the machines are called $m_1,\dots,m_4$).
}
\label{mp-fig: main-alg}
\end{figure}
\end{footnotesize}

Consider an arbitrary $S_i$. By construction, we could feasibly schedule this set of jobs on a single machine using $\EDF$. However, doing so may completely use up the laxity of some job in $S_i$, causing problems when new jobs are released in the future. We want to keep a constant fraction $\beta\in(0,1)$ (independent of $t$) of the initial laxity $\ell_j(r_j)$ for each $j\in S_i$ by distributing them carefully over $\mu_t>1$ machines. Consider again $S_i=\{j_1,\dots,j_{|S_i|}\}$ and assume an increasing order by deadlines. We further partition $S_i$ into $\mu_t$ different subsets $S_i^1,\dots,S_i^{\mu_t}$ by setting $S_i^k=\{j_i\mid i\equiv k-1 \mod \mu_t\}$ and schedule {\em each} of them on a separate machine via $\EDF$. By the right choice of $\mu_t$, this reduces the decrease in laxity sufficiently. The procedure is visualized in Figure~\ref{mp-fig: main-alg}.

We know from the preliminaries that \EDF on $L$ requires $m(L)/(1-\alpha)^2$ machines. Hence the algorithm uses $\OO(m(L)+\mu_t\cdot h_t)$ machines. In Subsection~\ref{subsec: oa-analysis}, we prove that we can choose $\mu_t=\mathcal{O}(\log|J(t)|)$ so that a constant $\beta$ as above exists and that $h_t=\mathcal{O}(m)$, $m(L)=\OO(m)$. The key to prove that lies in Subsection~\ref{subsec: oa-pow-lax} where we characterize the relation between a decrease in the laxity and increase in the number of machines. Hereby, we use the compact representation of the optimum value presented in Subsection~\ref{subsec:optimum}. In summary we need $\mathcal{O}(m)+\mathcal{O}(m)\cdot\mathcal{O}(\log|J(t)|)=\mathcal{O}(\log n)\cdot m$ machines. 

\subsection{Strong Density -- A Compact Representation of the Optimum}
\label{subsec:optimum}

The key to the design and analysis of online algorithms are good lower bounds on the optimum value. The offline problem is known to be solvable optimally in polynomial time, e.g., by solving a linear program or a maximum flow problem~\cite{horn74}. However, these techniques do not provide a quantification of the change in the schedule and the required number of machines when the job set changes. We derive an exact estimate of the optimum number of machines as an expression of workload that must be processed during some (not necessarily consecutive) intervals.

% It is known that the offline deadline scheduling problem is polynomial time solvable, given that preemption is allowed. Indeed, one can easily formulate it as a Linear programming or max flow problem. However, via these techniques we do not see how the dependency of the number of machines we need with the given instance. To achieve the $O(\log n)$-approximation algorithm. we need a theorem which roughly states that, if we modify a given offline instance by subtracting from the laxity of each job a constant fraction, then the number of machines needed for the new instance is at most a constant times the machine needed for the original instance. To derive such a theorem, we need a close form for the optimum, which is based on the strong density defined as follows.

% \noindent\textbf{Strong density:} Let $Q_j=[r_j,d_j]$ be the time window during which job $j$ must be processed, and $I_h\subseteq [0,d_{max}]$ for $h=1,2,\cdots,k$ be arbitrary sub-intervals. The contribution of job $j$ to the interval $\cup I_h$ is defined as:
% $$\Gamma_j(\cup_h I_h)=\max\{0,|(\cup I_h)\cap Q_j|-L_j\}=\max\{0,|(\cup I_h)\cap Q_j|-(d_j-r_j-p_j)\}.$$

%Let $\II=\{[a_h,b_h]\mid 1\leq h\leq k;\;a_h,b_h\in\N\}$ be a set of $k$ 
Let~$\II$ denote a set of $k$ pairwise disjoint intervals~$[a_h,b_h], 1\leq h\leq k$. We define the {\em length} of $\II$ to be $|\II|=\sum_{1\leq h\leq k}|b_h-a_h|$ and its {\em union} to be $\cup\II=\bigcup_{1\leq h\leq k}[a_h,b_h]$. %Sometimes we also restrict to unit time intervals $I_t=[t,t+1]$.

\begin{definition}[Strong density]
  Let $\II$ be as above and $Q_j=[r_j,d_j]$ be the feasible time window of job $j$. The contribution $\Gamma_j(\II)$ of job~$j$ to $\II$ is the least processing volume of~$j$ that must be scheduled within $\cup\II$ in any feasible schedule, i.e.,
$$\Gamma_j(\II)=\max\{0,|(\cup\II)\cap Q_j|-\ell_j\}=\max\{0,|(\cup\II)\cap Q_j|-d_j+r_j+p_j)\}.$$
The strong density $\rho_s$ is defined as maximum total contribution of all jobs over all interval sets:
$$\rho_s=\max_{\II\subseteq\{[t,t+1]\mid 0\leq t<d_{\max}\}} \frac{\sum_{j=1}^n\Gamma_j(\II)}{|\II|}.$$
\end{definition}
%Notice that $\Gamma_j(\II)$ is indeed the minimum possible amount of processing of~$j$ that must be done in $\cup\II$, more precisely in $(\cup\II)\cap Q_j$. 

% Obviously $\lceil\rho_s\rceil$ is a lower bound on the number of machines needed since for the set of intervals through which the maximum of the above formula is attained, say, $I_h^*$, then $\Gamma_j(\cup I_h^*)$ is the contribution of job $j$ to this interval, and $\sum_j \Gamma_j(\cup I_h^*)$ is then the total amount of the workload that every feasible schedule must finish during the $\cup I_h^*$, and the number of machines needed should be at least $\lceil \sum_j \Gamma_j(\cup I_h^*)/|\cup I_h^*| \rceil$.

% The interesting part is that, $\lceil \rho_s\rceil$ is also an upper bound, which gives rise to the following theorem.

Our main result in this section is that $\lceil \rho_s\rceil$ is the exact value of an optimal solution. We give a combinatorial proof, but we also note that it follows from LP duality.
\begin{theorem}
\label{thm: strong-density}
$m=\lceil\rho_s\rceil.$
\end{theorem}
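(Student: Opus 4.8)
The plan is to prove the two inequalities $m \geq \lceil \rho_s \rceil$ and $m \leq \lceil \rho_s \rceil$ separately. The first direction is the easy one: for any fixed interval set $\II \subseteq \{[t,t+1] \mid 0 \le t < d_{\max}\}$, any feasible schedule on $m$ machines must process at least $\sum_j \Gamma_j(\II)$ units of work inside $\cup\II$, because $\Gamma_j(\II)$ is by definition a lower bound on the volume of $j$ forced into $\cup\II$ (the job cannot be scheduled outside its window $Q_j$, and at most $\ell_j$ of the portion of $Q_j$ lying in $\cup\II$ can be ``saved'' by scheduling elsewhere). Since $m$ machines can process at most $m \cdot |\II|$ units in the time represented by $\II$, we get $m \ge \sum_j \Gamma_j(\II) / |\II|$, and taking the maximum over $\II$ and using integrality of $m$ gives $m \ge \lceil \rho_s \rceil$.

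The second direction, $m \le \lceil \rho_s \rceil$, is the substantive part. I would set $M := \lceil \rho_s \rceil$ and show that $M$ machines suffice, i.e.\ exhibit (or guarantee the existence of) a feasible preemptive schedule on $M$ machines. The natural route is via a flow/Hall-type argument: model the preemptive scheduling problem as a bipartite fractional assignment (or a max-flow) where job $j$ with $p_j$ units of demand must be routed into the unit time slots $[t,t+1]$ with $t \in [r_j, d_j)$, each slot having capacity $M$. By the integrality of the scheduling polytope (or Hall's theorem / the max-flow--min-cut theorem, as referenced via \cite{horn74}), such a schedule exists if and only if every ``deficiency'' condition is met. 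The key step is to show that the worst violation of feasibility corresponds exactly to an interval set $\II$ of the form in the definition of $\rho_s$, so that feasibility on $M$ machines is equivalent to $M \ge \rho_s$, hence to $M = \lceil\rho_s\rceil \ge \rho_s$.

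Concretely, I would argue the contrapositive: if $M$ machines do \emph{not} suffice, then by LP duality / max-flow--min-cut there is a violated cut, and I would massage that cut into a set $\II$ of unit intervals. The delicate point is that an arbitrary min-cut in the flow network corresponds to choosing, for each job $j$, a subset of its available slots; one must show that the optimal such choice is always an \emph{interval-structured} set $\cup\II$ (intersected with each $Q_j$), and that the relevant deficiency is precisely $\sum_j \Gamma_j(\II) - M|\II| > 0$. This reduction — from a general subset of time slots to a union of intervals, exploiting that each job's window is itself an interval so the ``best'' slots to exclude form a prefix/suffix of its window — is the main obstacle, and is essentially where the $\max\{0, |(\cup\II)\cap Q_j| - \ell_j\}$ truncation comes from. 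Once that structural reduction is in place, the chain ``$M$ insufficient $\Rightarrow$ $\exists \II: \sum_j\Gamma_j(\II) > M|\II| \Rightarrow \rho_s > M \ge \rho_s$'' yields the contradiction.

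An alternative, possibly cleaner, write-up would bypass the deficiency version of Hall/max-flow and instead directly invoke LP duality: write the LP whose feasibility encodes ``a schedule on $M$ machines exists,'' take its dual, observe that dual solutions can be taken to be $0/1$ on time slots (threshold rounding, since the slot-constraint coefficients are totally unimodular), interpret the support of the rounded dual as $\cup\II$, and read off that the dual objective equals $\sum_j \Gamma_j(\II) - M|\II|$. I would present the combinatorial argument as the main proof and remark, as the authors already signal, that the LP-duality argument gives the same bound. Either way the arithmetic is routine; the conceptual content is entirely in identifying that interval sets are the extremal witnesses.
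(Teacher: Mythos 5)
Your route is genuinely different from the paper's. The paper proves $\lceil\rho_s\rceil\ge m$ combinatorially: it takes a feasible $m$-machine schedule whose occupancy histogram $(\chi_m,\dots,\chi_0)$ is lexicographically extremal, builds a directed graph on unit slots whose arcs record unit-workload moves permitted by some job's window, argues by extremality that no full slot ($\phi_i=m$) can reach a slot at level $\le m-2$, and takes the reachability closure of the full slots as the witness $\II$; every slot there has level in $\{m-1,m\}$ with at least one at $m$, so the density strictly exceeds $m-1$. Your proposal is the LP/max-flow--min-cut argument the authors acknowledge in passing, and it does work, but you have misplaced the difficulty. You present as the "main obstacle" the reduction of a min cut to an interval-structured $\cup\II$ via per-job prefix/suffix reasoning. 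That obstacle does not exist: $\rho_s$ is already a maximum over \emph{all} $\II\subseteq\{[t,t+1]\mid 0\le t<d_{\max}\}$, i.e., over arbitrary subsets of unit slots, with no requirement that they form maximal consecutive runs. Moreover, a cut in the standard network (source $\to j$ with capacity $p_j$; $j\to t$ with capacity $1$ for $t\in[r_j,d_j)$; $t\to$ sink with capacity $M$) is parametrized by a \emph{global} slot set $U$ and a \emph{global} job set $S$ kept on the source side, not by a per-job choice of slots. Its value is $\sum_{j\notin S}p_j+\sum_{j\in S}\bigl(|Q_j|-|Q_j\cap\cup U|\bigr)+M|U|$; demanding this be $\ge\sum_j p_j$ and optimizing $S$ for fixed $U$ yields exactly $M|U|\ge\sum_j\max\{0,|Q_j\cap\cup U|-\ell_j\}=\sum_j\Gamma_j(U)$. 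So the truncation in $\Gamma_j$ arises from optimizing $S$, not from any interval-structural reduction, and feasibility on $M$ machines is equivalent to $M\ge\rho_s$; integrality gives $m=\lceil\rho_s\rceil$. In short, your approach is correct and in fact routine once the cut is written down properly, whereas the paper's argument is longer but entirely self-contained and does not rely on flow or LP machinery.
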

\begin{proof}
It is easy to see that $\lceil\rho_s\rceil$ is a lower bound on $m$ since the volume of $\sum_{j=1}^n\Gamma_j(\II)$ must be scheduled in $\cup\II$ of length $|\II|$. This yields a lower bound of $\lceil \sum_j \Gamma_j(\II)/|\II| \rceil$ for any $\II$.

It remains to show that $m\le \lceil\rho_s\rceil$. Given a schedule, we denote by $\chi_h$ the number of time slots $[t,t+1]$ during which exactly $h$ machines are occupied. For any feasible schedule that uses $m$ machines, we obtain a vector $\chi=(\chi_m,\cdots, \chi_0)$. We define a lexicographical order $<$ on these vectors: we have $\chi<\chi'$ if and only if there exists some $h$ with $\chi_h<\chi_h'$ and $\chi_i=\chi_i'$, for all $i<h$. We pick the schedule whose corresponding vector is the smallest with respect to $<$.

We now construct a directed graph $G=(V,A)$ based on the schedule we pick. Let $V=\{v_0,\dots,v_{d_{\max}-1}\}$ where $v_t$ represents the slot $[t,t+1]$.
Let $\phi_t$ be the number of machines occupied during $[t,t+1]$. An arc from $v_i$ to $v_k$ exists iff $\phi_i\ge \phi_k$ and there exists some job $j$ with $[i,i+1]\cup [k,k+1]\subseteq Q_j$ whereas $j$ is processed in $[i,i+1]$ but not in $[k,k+1]$.
Intuitively, an arc from $v_i$ to $v_k$ implies that one unit of the workload in $[i,i+1]$ could be carried to $[k,k+1]$.

We claim that in $G$ there is no (directed) path which starts from some $v_i$ with $\phi_i=m$, and ends at some $v_\ell$ with $\phi_\ell\le m-2$. Suppose there exists such a path, say $(v_{i_1},\dots, v_{i_\ell})$ with $\phi_{i_1}=m$ and $\phi_{i_\ell}\le m-2$. Then we alter the schedule such that we move one unit of the workload from $[i_s,i_s+1]$ to $[i_{s+1},i_{s+1}+1]$, for all $s<\ell$. By doing so, $\phi_{i_1}$ decreased and $\phi_{i_\ell}$ increased by $1$ each. By $\phi_{i_1}=m$ and $\phi_{i_\ell}\le m-2$, we get that $\chi_m$ decreases by $1$, contradicting the choice of the schedule.

Consider $V_m=\{v_i\mid\phi_i=m\}$ and let $P(V_m)=\{v_{i_1},\dots,v_{i_\ell}\}$ be the set of vertices reachable from $V_m$ via a directed path (trivially, $V_m\subseteq P(V_m)$). The above arguments imply that for any $v_i\in P(V_m)$, it holds that $\phi_i\ge m-1$. We claim that $\sum_j\Gamma_j(\II)=\sum_h \phi_{i_h}$ for $\II=\{[i_h,i_h+1]|1\le h\le \ell\}$. Note there is no arc going out from $P(V_m)$ (otherwise the endpoint of the arc is also reachable and would have been included in $P(V_m)$), i.e., we cannot move out any workload from $\cup\II$, meaning that the contribution of all jobs to $\II$ is included in the current workload in $\cup\II$, i.e., $\sum_h \phi_{i_h}$. Thus, $$\rho_s\ge \frac{\sum_h\phi_{i_h}}{|\II|}=\frac{\sum_h\phi_{i_h}}{|P(V_m)|}.$$
Notice that $\phi_{i_h}$ is either $m$ or $m-1$, and among them there is at least one of value $m$, thus the right-hand side is strictly larger than $m-1$, which implies that $\lceil \rho_s\rceil \ge m$.  
\end{proof} 

\subsection{The Power of Laxity} \label{subsec: oa-pow-lax}

Let $J$ be an arbitrary instance and $\beta\in (0,1)$ be rational. We analyze the influence that  the decrease in the laxity of each job by a factor of $\beta$ has on the optimal number of machines. To this end, let $J_\beta=\{j_\beta\mid j\in J\}$ be the modification of $J$ where we increase the processing time of each job by a $(1-\beta)$-fraction of its initial laxity, i.e., $[r_{j_\beta},d_{j_\beta}]=[r_j,d_j]$ and $p_{j_\beta}=p_j+(1-\beta)(d_j-r_j-p_j)$ (or equivalently $\ell_{j_\beta}=\beta\cdot\ell_j$). Using a standard scaling argument, we may assume w.l.o.g.~that for a fixed $\beta$ all considered parameters are integral.

Obviously, we have $m(J_\beta)\geq m(J)$. In the following, we show $m(J_\beta)\le\mathcal{O}(1/\beta)\cdot m(J)$ for instances $J$ only consisting of sufficiently tight jobs.

\begin{theorem}\label{thm:laxity-drop} For every $J$ consisting only of $(1/2)$-tight jobs and $\beta\in (0,1)$, it holds that
$$m(J_{\beta})\leq\frac{4}{\beta}\cdot m(J).$$
\end{theorem}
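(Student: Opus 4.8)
The plan is to use the compact representation of the optimum from Theorem~\ref{thm: strong-density}, i.e., $m(J_\beta)=\lceil\rho_s(J_\beta)\rceil$ and $m(J)=\lceil\rho_s(J)\rceil$, and to show that the strong density cannot blow up by more than a factor $4/\beta$ when every job's laxity is scaled down by $\beta$. Fix an interval set $\II\subseteq\{[t,t+1]\mid 0\le t<d_{\max}\}$ that achieves $\rho_s(J_\beta)$. For each job $j$, I want to relate $\Gamma_{j_\beta}(\II)$ to the contribution of $j$ (the original job) to a \emph{carefully chosen, possibly larger} interval set, so that summing over $j$ and dividing by the length of the new set yields the claimed bound. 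The point is that $j_\beta$ has laxity $\beta\ell_j$, so $\Gamma_{j_\beta}(\II)=\max\{0,|(\cup\II)\cap Q_j|-\beta\ell_j\}$, whereas we only control $j$ through $\Gamma_j(\cdot)=\max\{0,|(\cup\II')\cap Q_j|-\ell_j\}$ for sets $\II'$ of our choosing; we need to "pay" the extra laxity $(1-\beta)\ell_j\le\ell_j$ that $j_\beta$ does not have.

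The key quantitative fact I would isolate first: since every job in $J$ is $(1/2)$-tight, $p_j>\tfrac12(d_j-r_j)$, hence $\ell_j=d_j-r_j-p_j<\tfrac12(d_j-r_j)<p_j$, and more usefully $d_j-r_j<2p_j$ and $\ell_j<d_j-r_j\le |(\cup\II)\cap Q_j|$ whenever $\II$ already covers a substantial part of $Q_j$. So for any job actually contributing to $\II$ (i.e.\ with $|(\cup\II)\cap Q_j|>\beta\ell_j$), the window $Q_j$ has length less than $2p_j$, which means the job is "short" relative to its own contribution. This tightness is what lets us bound the \emph{total} window length of contributing jobs against their total contribution. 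I would then argue: let $\II^\star$ be obtained from $\II$ by adding, for every contributing job $j$, enough unit slots inside $Q_j\setminus(\cup\II)$ — or if $Q_j\subseteq\cup\II$ already, we simply note $\Gamma_j(\II)\ge|Q_j|-\ell_j=p_j$. Adding up to $(1-\beta)\ell_j<\ell_j$ extra slots per job recovers $\Gamma_j(\II^\star)\ge\Gamma_{j_\beta}(\II)$, but inflates $|\II|$ by at most $\sum_j\ell_j$ over contributing jobs. Because each contributing job has $\ell_j<p_j\le\Gamma_{j_\beta}(\II)+\beta\ell_j$, one gets $\sum_j\ell_j=O(1/\beta)\cdot\sum_j\Gamma_{j_\beta}(\II)=O(1/\beta)\cdot\rho_s(J_\beta)\cdot|\II|$, so $|\II^\star|=O(1/\beta)\,|\II|$, and then $\rho_s(J)\ge\sum_j\Gamma_j(\II^\star)/|\II^\star|\ge\rho_s(J_\beta)\cdot|\II|/|\II^\star|=\Omega(\beta)\cdot\rho_s(J_\beta)$, giving $m(J_\beta)\le O(1/\beta)\,m(J)$ with the explicit constant $4$ after optimizing the bookkeeping.

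The main obstacle I expect is the bookkeeping of \emph{shared} added slots: if I naively add $(1-\beta)\ell_j$ fresh slots for every contributing job, different jobs with overlapping windows may force those slots to be disjoint from each other (to make them count separately in $|\II^\star|$), and then the estimate $|\II^\star|\le|\II|+\sum_j\ell_j$ might be too lossy, or conversely reusing slots across jobs may fail to boost each $\Gamma_j$. The clean way around this is probably \emph{not} to augment $\II$ at all, but instead to split the job sum: write $\Gamma_{j_\beta}(\II)\le\Gamma_j(\II)+(1-\beta)\ell_j\le\Gamma_j(\II)+\ell_j$, bound $\sum_{j}\ell_j$ over jobs with $\Gamma_{j_\beta}(\II)>0$ using $(1/2)$-tightness (each such job has $|(\cup\II)\cap Q_j|>\beta\ell_j$ and $|Q_j|<2p_j$, so $\ell_j<p_j$ and, crucially, the intervals of $\cup\II$ meeting $Q_j$ have total length $>\beta\ell_j$, letting us charge $\ell_j\le(1/\beta)\cdot|(\cup\II)\cap Q_j|$ to the portion of $|\II|$ inside $Q_j$), and then sum these charges with bounded overlap. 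I would check the overlap factor is absorbed into the constant, arriving at $\rho_s(J_\beta)\le\Gamma\text{-sum}/|\II|\le\rho_s(J)+(1/\beta)\cdot(\text{const})\cdot\rho_s(J)$; tuning the constants with the $(1/2)$-tightness slack yields exactly $m(J_\beta)\le(4/\beta)\,m(J)$. The delicate point throughout is making the charging local to $Q_j$ so overlaps stay $O(1)$, which is exactly where $(1/2)$-tightness (bounded window length relative to $p_j$, hence relative to contribution) is indispensable.
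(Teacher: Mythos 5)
Your overall strategy — deploy the strong-density characterization of $m$ (Theorem~\ref{thm: strong-density}) and show $\rho_s(J_\beta)\le\OO(1/\beta)\rho_s(J)$ — is the right instinct and is indeed how the paper starts. But the step where you try to pay for the extra laxity $(1-\beta)\ell_j$ has a genuine gap, in both variants you sketch. In the second, ``clean'' variant you decompose $\Gamma_{j_\beta}(\II)\le\Gamma_j(\II)+\ell_j$ and want to bound $\sum_{j\text{ contributing}}\ell_j$ by charging each $\ell_j\le(1/\beta)\,|(\cup\II)\cap Q_j|$ to ``the portion of $|\II|$ inside $Q_j$'' and absorbing the overlap into a constant. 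The overlap is \emph{not} bounded: take $n$ identical $(1/2)$-tight jobs with $Q_j=[0,2]$, $p_j=3/2$, $\ell_j=1/2$, and $\II$ the two unit slots of $[0,2]$. Every window covers every slot, so your charges all pile onto the same two slots and the overlap is $n$, turning the intended constant into $n$. Your first variant has the symmetric defect: the inequality $p_j\le\Gamma_{j_\beta}(\II)+\beta\ell_j$ you invoke is equivalent (for a contributing job) to $p_j\le|(\cup\II)\cap Q_j|$, which need not hold — a contributing job only guarantees $|(\cup\II)\cap Q_j|>\beta\ell_j$, which can be far smaller than $p_j$. Note also that $(1/2)$-tightness gives $\ell_j<p_j$ but says nothing that makes either charge local or uniformly bounded.

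The paper circumvents exactly this issue by \emph{not} doing per-job charging at all. It first splits each $j_\beta$ into a left part $j_\beta^\triangleleft$ and a right part $j_\beta^\triangleright$, each carrying half of the reduced laxity on one side of a fixed split point, and bounds $m(J_\beta)\le m(J_\beta^\triangleleft)+m(J_\beta^\triangleright)$. Then, for the one-sided instances, it dominates them by ``right-/left-shortened'' versions $J^{\leftarrow\gamma},J^{\gamma\rightarrow}$ of the original $J$ (here $(1/2)$-tightness is used only to show $Q_{j_\beta^\triangleleft}\subseteq Q_{j^{\leftarrow\beta/2}}$), and proves $m(J^{\gamma\rightarrow})\le m(J)/\gamma$ by a \emph{global} expansion $\ex(\II)$: stretch each maximal interval of $\II$ to the left (resp.\ right) by a factor $1/\gamma$, chaining the spill-over between consecutive intervals. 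This expansion grows $|\II|$ by at most $1/\gamma$ once, for all jobs simultaneously, rather than once per job — which is precisely what your scheme cannot do when windows overlap. To repair your argument you would need to replace the per-job additions by such a single, globally coherent interval enlargement, and you would then discover (as the paper does) that a symmetric enlargement does not quite work, necessitating the left/right split of each $j_\beta$.
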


In fact, we even show a slightly stronger but less natural result. More specifically, we split each job $j_\beta\in J_\beta$ into two subjobs: the \textit{left part} of $j_\beta$, denoted $j_\beta^\triangleleft$, and the \textit{right part} of $j$, denoted $j_\beta^\triangleright$, where 
\begin{align*}
&[r_{j_\beta^\triangleleft},d_{j_\beta^\triangleleft}]=\left[r_j,r_j+\left(1-\frac{\beta}{2}\right)\cdot\ell_j\right],\,p_{j_\beta^\triangleleft}=(1-\beta)\cdot\ell_j\\
\text{ and }&[r_{j_\beta^\triangleright},d_{j_\beta^\triangleright}]=\left[r_j+\left(1-\frac{\beta}{2}\right)\cdot\ell_j,d_j\right],\,p_{j_\beta^\triangleright}=p_j,
\end{align*}
i.e., we split the former feasible time window at $r_j+(1-\beta/2)\cdot\ell_j$ and distribute $p_{j_\beta}$ over both emerging jobs in such a way that the right part has the processing time of the original job $j$. The sets containing all of these subjobs are $J^\triangleleft_\beta=\{j^\triangleleft_\beta\mid j\in J_\beta\}$ and $J^\triangleright_\beta=\{j^\triangleright_\beta\mid j\in J_\beta\}$, respectively. In the following, we show Theorem~\ref{thm:laxity-drop} even for $J^\triangleleft_\beta\cup J^\triangleright_\beta$ instead of $J_\beta$. This implies, as can be easily seen, the statement for $J_\beta$.

In the proof, we define two more jobs based on $j$ and $\gamma\in(0,1)$, namely the {\em $\gamma$-right-shortened} and the {\em $\gamma$-left-shortened} variant of $j$, denoted $j^{\leftarrow\gamma}$ and $j^{\gamma\rightarrow}$ and contained in the sets $J^{\leftarrow\gamma}$ and $J^{\gamma\rightarrow}$, respectively. We define
\begin{align*}
&[r_{j^{\leftarrow\gamma}},d_{j^{\leftarrow\gamma}}]=[r_j,d_j-(1-\gamma)\cdot\ell_j],\,p_{j^{\leftarrow\gamma}}=p_j\\
\text{ and }&[r_{j^{\gamma\rightarrow}},d_{j^{\gamma\rightarrow}}]=[r_j+(1-\gamma)\cdot\ell_j,d_j],\,p_{j^{\gamma\rightarrow}}=p_j,
\end{align*}
which means that we drop an amount of $(1-\gamma)\cdot\ell_j$ from either side of the feasible time window while keeping the processing time. %We visualize all the jobs defined from $j$ and $\gamma$ and how we use them in Figure~\ref{fig: power-lax}.
We then show the following lemma.

\begin{figure}
\centering
%\begin{tikzpicture}[scale=0.45]
\definecolor{light-gray}{gray}{0.7}
\begin{tikzpicture}[scale=0.35]
  \newcommand\poljob[5]{
    \begin{scope}[shift={(#3,#4)}]
      \filldraw[fill=black!70, shift={(-0.05,0)}] (#1,0.05) rectangle (#2,0.9);
      \draw (0,0.2) -- (0,0) -- (#2,0) -- +(0,0) -- +(0,0.2);
      \node at (-0.5,0.5) {\tiny{#5}};
    \end{scope}
  }
  \newcommand\mpoljob[6]{
    \begin{scope}[shift={(#3,#4)}]
      \draw[draw=black!30] (0,0.2) -- (0,0) -- (10,0) -- +(0,0) -- +(0,0.2);    
      \filldraw[fill=black!70, shift={(-0.05,0)}] (#1,0.05) rectangle (#2,0.9);
      \draw (#6,0.2) -- (#6,0) -- (#2,0) -- +(0,0) -- +(0,0.2);
      \node at (-0.85,0.5) {\tiny{#5}};
      \node at (10,-0.5) {\textcolor{light-gray}{\tiny{$d_j$}}};
      \node at (0,-0.5) {\textcolor{light-gray}{\tiny{$r_j$}}};        
    \end{scope}
  }  
  \poljob{6.1}{10}{0}{0}{$j$}
  \poljob{3.1}{10}{0}{-4}{$j_\beta$}
  \mpoljob{1.6}{4.5}{-12}{-8}{$j_\beta^\triangleleft$}{0}%{}{}
  \mpoljob{6.1}{10}{12}{-8}{$j_\beta^\triangleright$}{4.5}%{$r_{j}$}{}
  \mpoljob{1.6}{5.5}{-12}{4}{$j^{\leftarrow\gamma}$}{0}%{}{}
  \mpoljob{6.1}{10}{12}{4}{$j^{\gamma\rightarrow}$}{4.5}%{}{}

  \draw[<-] (0.05,0.5) -- (2,0.5);
  \draw[->] (4,0.5) -- (5.95,0.5);
  \draw (0.05,0.7) -- (0.05,0.3);
  \draw (5.95,0.7) -- (5.95,0.3);
  \node at (3,0.5) {\tiny{$\ell_{j}(t)$}};     
  \node at (8,0.43) {\textcolor{white}{\tiny{$p_{j}(t)$}}};
  \node at (10,-0.5) {\tiny{$d_{j}$}};
  \node at (0,-0.5) {\tiny{$r_{j}$}};
  
  \draw[->, decorate, decoration={snake, amplitude=1.5pt}] (5,-0.2) -- (5,-2.9);
  \node[align=center] at (6.5,-1.5) {\tiny \begin{tabular}{c}drop\\ laxity\end{tabular}};
  
  \draw (5,-4.2) edge[out=270,in=180,->,looseness = 0.8] (10.5,-7.5);
  \draw (5,-4.2) edge[out=270,in=0,->,looseness = 0.8] (-0.5,-7.5);
  \node at (5,-6.5) {\tiny{split}};
   
  \draw[->] (5,1.2) -- (10.5,3.5);
  \node[align=center] at (9.5,2) {\tiny shorten left};  
  
  \draw[->] (5,1.2) -- (-0.5,3.5);
  \node[align=center] at (0.1,2) {\tiny shorten right};    
  
  \draw[->,dashed] (-10,3) -- (-10,-6);
  \node[align=center] at (-12,-2) {\tiny \begin{tabular}{c}strictly\\ harder\\ for $\beta=2\gamma$\end{tabular}};  
  
  \draw[<->,dashed] (19,3) -- (19,-6);
  \node[align=center] at (21,-2) {\tiny \begin{tabular}{c}identical\\ for $\beta=2\gamma$\end{tabular}};             
\end{tikzpicture}
\caption{The different types of jobs defined for the proof of Theorem~\ref{thm:laxity-drop} and their relations. We let $\gamma=\beta/2=1/4$ and $p_j=2\cdot(d_j-r_j)/5$}
\label{fig: power-lax}
\end{figure}

\begin{lemma}\label{lemma:laxity-release-increase}
For every $J$ and $\gamma\in (0,1)$, it holds that
$$m(J^{\leftarrow\gamma})\leq \frac{1}{\gamma}\cdot m(J) \textup{ and } m(J^{\gamma\rightarrow})\leq\frac{1}{\gamma}\cdot m(J).$$
\end{lemma}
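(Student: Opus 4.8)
The plan is to prove both inequalities simultaneously, since by symmetry (reflecting the time axis) the bound on $m(J^{\leftarrow\gamma})$ follows from the bound on $m(J^{\gamma\rightarrow})$ and vice versa; so I would focus on, say, $m(J^{\gamma\rightarrow})\le(1/\gamma)\cdot m(J)$. The natural approach is to take an optimal schedule $S$ for $J$ using $m=m(J)$ machines and build from it a feasible schedule for $J^{\gamma\rightarrow}$ using at most $\lceil m/\gamma\rceil$ machines. The obstacle is that $j^{\gamma\rightarrow}$ has the same processing time $p_j$ as $j$ but a feasible window shortened on the left by $(1-\gamma)\ell_j$, so whatever $S$ did for $j$ in the interval $[r_j,\,r_j+(1-\gamma)\ell_j)$ is now ``lost'' and must be redone later inside $[r_j+(1-\gamma)\ell_j,\,d_j]$, a window of length $\gamma\ell_j+p_j$.

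The cleaner route, which avoids reasoning about an arbitrary optimal schedule, is to use the strong-density characterization from Theorem~\ref{thm: strong-density}: it suffices to show $\rho_s(J^{\gamma\rightarrow})\le(1/\gamma)\cdot\rho_s(J)$, or more precisely that for every interval set $\II$ we can bound $\sum_j\Gamma_{j^{\gamma\rightarrow}}(\II)$ in terms of the strong density of $J$ evaluated on a suitably stretched interval set. Concretely, fix $\II$ with union $U=\cup\II$. For a job $j$, its contribution to $\II$ in $J^{\gamma\rightarrow}$ is $\Gamma_{j^{\gamma\rightarrow}}(\II)=\max\{0,\,|U\cap[r_j+(1-\gamma)\ell_j,d_j]|-\ell_{j^{\gamma\rightarrow}}\}$ where $\ell_{j^{\gamma\rightarrow}}=\gamma\ell_j$. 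The plan is to compare this against the contribution of the original job $j$ to an inflated interval set: if a job is forced to put volume into $U$ under the shortened window, then the original job is forced to put a comparable (after scaling by $\gamma$) volume into a dilated version of $U$. The key inequality to establish, job by job, is that $\Gamma_{j^{\gamma\rightarrow}}(\II)\le(1/\gamma)\cdot\Gamma_j(\II')$ for an interval set $\II'$ with $|\II'|\le(1/\gamma)|\II|$ (or some averaging over $\lceil1/\gamma\rceil$ shifted copies), because the window of $j^{\gamma\rightarrow}$ sits inside the last $\gamma$-fraction-plus-$p_j$ portion of $j$'s window and the laxity has been scaled by exactly $\gamma$.

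I expect the main obstacle to be getting the interval-set bookkeeping exactly right: one must handle the case analysis of whether $U$ meets the feasible window at all, whether the contribution is ``clipped'' to zero, and how the length $|\II|$ changes under the dilation—and to make sure the final bound is genuinely $1/\gamma$ and not, say, $1/\gamma+1$ or $\lceil1/\gamma\rceil$. An alternative (and perhaps what the authors do) is the direct schedule-transformation argument: partition the time axis of the $J$-schedule into blocks and use $\lceil1/\gamma\rceil$ shifted copies so that each job $j$, whose original schedule occupied $p_j$ units somewhere in a window of length $\ell_j+p_j$, gets reassigned entirely into the last $\gamma\ell_j+p_j$ portion in one of the copies; a pigeonhole over the $\lceil1/\gamma\rceil$ copies shows one copy works with $m$ machines, hence $\lceil1/\gamma\rceil m$ total. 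Either way, the heart is the same observation—shrinking the window from the left by a $(1-\gamma)$-fraction of the laxity costs a factor $1/\gamma$ in machines—and the routine part is the interval arithmetic, which I would not grind through here.
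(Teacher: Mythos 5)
You have the right high-level route: pass to the strong-density characterization of Theorem~\ref{thm: strong-density}, take an interval set $\II$ witnessing the density of $J^{\gamma\rightarrow}$, and replace it by an enlarged set for $J$ of length at most $|\II|/\gamma$. That is indeed what the paper does via its expansion operator $\ex(\cdot)$, and your time-reversal reduction of $J^{\leftarrow\gamma}$ to $J^{\gamma\rightarrow}$ matches the paper's ``analogous'' remark. But the per-job inequality you propose, $\Gamma_{j^{\gamma\rightarrow}}(\II)\le(1/\gamma)\Gamma_j(\II')$ with $|\II'|\le(1/\gamma)|\II|$, is too weak: substituting it into the density quotient loses a factor $1/\gamma$ in the numerator \emph{and} another in the denominator, giving only $\rho_s(J^{\gamma\rightarrow})\le(1/\gamma^2)\rho_s(J)$. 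The paper's inequality is strictly stronger and is the actual crux: with $|\ex(\II)|\le|\II|/\gamma$ one shows that each job's contribution does \emph{not drop at all}, i.e.\ $\Gamma_j(\ex(\II))\ge\Gamma_{j^{\gamma\rightarrow}}(\II)$. Getting that lossless comparison is precisely what dictates the specific construction of $\ex(\II)$ (stretch each interval leftward by $1/\gamma$, push the overflow of an interval into the interval to its left via the carry $\delta_h$, truncate at time $0$) and the two-case analysis on whether the truncation at $0$ is hit; in the untruncated case one also needs the observation that $\Gamma_{j^{\gamma\rightarrow}}(\II)>0$ forces $|\II\cap Q_{j^{\gamma\rightarrow}}|>\gamma\ell_j$. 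In short, the ``interval arithmetic'' you choose not to grind through is not routine bookkeeping downstream of the idea—it \emph{is} the idea, and what you do write down would prove a weaker bound.

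Your fallback plan (pigeonholing over $\lceil1/\gamma\rceil$ shifted copies of a fixed $J$-schedule) is also not what the authors do, and I do not see how to make it work as stated: the amount by which job $j$'s window is trimmed is $(1-\gamma)\ell_j$, which varies per job, so there is no global block size or shift under which every job simultaneously lands in the last $\gamma\ell_j+p_j$ portion of its window; a single-schedule, uniform-shift pigeonhole does not respect these heterogeneous offsets.
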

\begin{proof}
We representatively show the statement for $J^{\gamma\rightarrow}$; the argumentation for $J^{\leftarrow\gamma}$ is analogous. According to Theorem~\ref{thm: strong-density}, there exists a set of $k$ pairwise disjoint intervals $\II$ such that $$m(J^{\gamma\rightarrow})=\sum_{j^{\gamma\rightarrow}}\frac{\Gamma_{j^{\gamma\rightarrow}}(\II)}{|\II|}.$$
W.l.o.g, we assume that $\II=\{[a_h,b_h]\mid 1\leq h\leq k\}$ with $I_h$ where $a_h<a_{h+1}$ (hence, $b_h<b_{h+1}$), for all $h$. To derive the relationship between $m(J^{\gamma\rightarrow})$ and $m(J)$, we expand $\II$ into a set of element-wise larger intervals $\ex(\II)$ such that $|\II|\geq|\ex(\II)|/\gamma$. We show for each job $j$ that $\Gamma_j(\ex(\II))\geq\Gamma_{j^{\gamma\rightarrow}}(\II)$ and thus the lemma follows.

The expanding works as follows. Given $\II$ as above, we expand each of the intervals $[a_h,b_h]$ to $[a_h',b_h]$ with $a_h'\leq a_h$ the following way. We start at the rightmost interval $I_k$ and try to set $a_k'=b_k-(b_k-a_k)/\gamma$ for $I_k'$. If this would, however, produce an overlap between $I_k'$ and $I_{k-1}$, we set $a_{k}'=b_{k-1}$, $\delta_k=b_{k-1}-(b_k-(b_k-a_k)/\gamma)$ and try to additionally expand $I_{k-1}$ by $\delta_k$ instead. After that, we continue this procedure to the left but never expand an interval into negative time. More formally, we let~$b_0=\delta_{k+1}=0$  and set for all $h$ 
\[a_h'=\max\left\{b_{h-1},b_h-\left(\frac{b_h-a_h}{\gamma}+\delta_{h-1}\right)\right\}\text{ and }\delta_h=\max\left\{0,b_{h-1}-\left(b_h-(\frac{b_h-a_h}{\gamma}+\delta_{h-1})\right)\right\}. \]
The following two facts can be easily observed.

\begin{observation}\label{lemma:expand-length}
Consider $\II$ as above. One of the following statements is true:
\begin{enumerate}[(i)]
  \item We have $\cup\ex(\II)=[0,b_k]$ and $\left|\II\right|\geq|\ex(\II)|/\gamma$.
  \item We have $\cup\ex(\II)\subseteq \left[a_1-\frac{|\II|(1-\gamma)}{\gamma},b_k\right]$ and $|\II|=|\ex(\II)|/\gamma$.
\end{enumerate}
\end{observation}

\begin{observation}\label{lemma:expand-subintervals}
If $\cup\II\subseteq\cup\II'$, then $\cup\ex(\II)\subseteq\cup\ex(\II')$.
\end{observation}

We now consider any job $j^{\gamma\rightarrow}$ with $\Gamma_{j^{\gamma\rightarrow}}(\II)>0$ and claim that $\Gamma_j(\ex(\II))\geq\Gamma_{j^{\gamma\rightarrow}}(\II)$. First plug in the definition of contribution. We get
\begin{align*}
\Gamma_{j}(\ex(\II))&=\max\{0,|(\cup\ex(\II))\cap Q_{j}|-\ell_j\}\\
\text{and}\hspace{0.4cm}\Gamma_{j^{\gamma\rightarrow}}(\II)&=\max\{0,|(\cup\II)\cap Q_{j^{\gamma\rightarrow}}|-\ell_{j'}\}=\max\{0,|(\cup\II)\cap Q_{j^{\gamma\rightarrow}}|-\gamma \ell_{j}\}.
\end{align*}
It thus suffices to prove that $|(\cup\ex(\II))\cap Q_{j}|-(1-\gamma)\cdot\ell_j\ge |(\cup\II)\cap Q_{j^{\gamma\rightarrow}}|$.

Let $\II'=\{I\cap Q_{j^{\gamma\rightarrow}}\mid I\in\II\}$ %\setminus\{[a,a]\}$ (i.e. we take the intersection with $Q_j$ for each interval but remove trivial ones).
where we let $a'$ be the leftmost point of it and $b'$ be the rightmost one. By Observation~\ref{lemma:expand-subintervals}, it follows that $\cup\ex(\II')\subseteq \cup\ex(\II)$, i.e., we can restrict to proving $$|(\cup\ex(\II'))\cap Q_{j}|-(1-\gamma)\cdot\ell_j\ge |\II'|.$$ According to Observation~\ref{lemma:expand-length}, there are two possibilities. 

\noindent\textbf{Case 1.} We have $\cup\ex(\II')=[0,b']$. Given the fact that $Q_j=Q_{j^{\gamma\rightarrow}}\cup [r_j,r_j+(1-\gamma)\cdot\ell_j]$ and $r_j<b'$, it follows that $|(\cup\ex(\II'))\cap Q_{j}|\ge |\II'|+(1-\gamma)\cdot\ell_j$. 

\noindent\textbf{Case 2.} It holds that $\ex(\II')$ takes up a length of $|\II'|/\gamma$ in the interval $$[a'',b']=\left[a'-\frac{|\II'|(1-\gamma)}{\gamma},b'\right].$$ Hence for any $x\ge0$, it takes up a length of at least $|\II'|/\gamma-x$ in the interval $[a''+x,b']$. Using $\Gamma_{j^{\gamma\rightarrow}}(\II)=\Gamma_{j^{\gamma\rightarrow}}(\II')>0$, we get $|\II'|>\ell_{j^{\gamma\rightarrow}}=\gamma \ell_j$. Thus, if we let $x=(|\II'|-\gamma \ell_j)(1-\gamma)/\gamma$, we get that $\ex(\II')$ takes up a length of at least $|\II'|+(1-\gamma)\cdot\ell_j$ in the interval $[a''+x,b']=[a'+(1-\gamma)\cdot\ell_j,b']$. Since $[a'+(1-\gamma)\cdot\ell_j,b']\subseteq Q_j$, we have $|(\cup\ex(\II'))\cap Q_{j}|\ge |\II'|+(1-\gamma)\cdot\ell_j$, which completes the proof.
\end{proof}

%We are now ready to prove Theorem~\ref{thm:laxity-drop}.

\begin{proof}[Proof of Theorem~\ref{thm:laxity-drop}.]
We show that $m(J^\triangleleft_\beta\cup J^\triangleright_\beta)\leq 4\cdot m(J)/\beta$ or, more specifically, that $m(J^\triangleleft_\beta)=m(J^\triangleright_\beta)=2\cdot m(J)/\beta$. 

For the first part, compare the two instances $J^\triangleleft_\beta$ and $J^{\leftarrow \beta/2}$. We observe that $$[r_{j^\triangleleft_\beta},d_{j^\triangleleft_\beta}]\subseteq[r_{j^{\leftarrow \beta/2}},d_{j^{\leftarrow \beta/2}}]\;\text{ and }\;\ell_{j^\triangleleft_\beta}=\ell_{j^{\leftarrow \beta/2}},$$ for all $j\in J$, where we use for the first part that $j$ is $(1/2)$-tight. This implies that a schedule of $J^{\leftarrow \beta/2}$ on $m'$ machines can be easily transformed into a schedule of $J^\triangleleft_\beta$ on $m'$ machines. Hence, we get that $m(J^\triangleleft_\beta)\leq m(J^{\leftarrow \beta/2})$ and the claim follows by Lemma~\ref{lemma:laxity-release-increase}.

For the second part, we simply observe that $J^\triangleright_\beta=J^{\beta/2\rightarrow}$ and apply Lemma~\ref{lemma:laxity-release-increase} again.
\end{proof}

\subsection{Analysis} \label{subsec: oa-analysis}

For the sake of simplicity, we choose $\alpha\in(0,1/2]$ such that $1/\alpha$ is integral. We now analyze the performance of our algorithm. The goal of this subsection is proving the following theorem.

\begin{theorem}\label{thm:logn}
Our algorithm is an $\OO(\log n)$-competitive algorithm for the machine minimization problem.
\end{theorem}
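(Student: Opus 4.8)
The plan is to collect the ingredients developed in the preceding subsections and combine them. We need to bound three quantities: (a) the number of machines used by \EDF on the safe jobs $L$, (b) the number of groups $h_t$ into which $T_t$ is partitioned at any time $t$, and (c) the number of sub-machines $\mu_t$ needed per group so that a constant fraction $\beta$ of every job's original laxity is preserved. The competitive ratio will then be $\OO(m(L) + h_t \cdot \mu_t)$, and we must show this is $\OO(m \log n)$.

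First I would handle the safe jobs. The residue set $L$ consists of jobs that are $\alpha$-loose at the moment they enter $L$ (with time window $[t,d_j]$ and remaining processing $p_j(t)\le\alpha(d_j-t)$), so every job in $L$ is $\alpha$-loose by construction; by Theorem~\ref{thm: EDF-small}, \EDF schedules $L$ feasibly on $m(L)/(1-\alpha)^2$ machines. It remains to argue $m(L)=\OO(m)$: since each residue $j_t$ has a time window contained in the original window of $j\in J$, and since $J$ is feasible on $m$ machines, the set $L$ is feasible on $\OO(m)$ machines — one shows this by exhibiting a feasible schedule for $L$ obtained from a feasible schedule of $J$ (each residue can be run in the portion of $j$'s window after time $t$), so $m(L)\le m$ and the safe part uses $\OO(m)$ machines.

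Next, the critical jobs. For the group count $h_t$: the partition rule places a job $j$ into a group $S_i$ only if its remaining window length $d_j - t$ fits inside the laxity $\ell_{a(i)}(t)$ of the earliest-deadline job already in $S_i$; I would argue that if $h_t$ groups exist, then the "representatives" $a(i)$ together force many disjoint intervals carrying full workload, so that via the strong-density characterization (Theorem~\ref{thm: strong-density}) we get $h_t = \OO(m)$. For $\mu_t$: distributing a group $S_i=\{j_1,\dots,j_{|S_i|}\}$ round-robin over $\mu_t$ machines by residue class mod $\mu_t$ means that, on each machine, the number of jobs with deadline before a given $j$ is at most $|S_i|/\mu_t$; since all these jobs fit in $j$'s window, running \EDF on one machine consumes at most $|S_i|/\mu_t \cdot (\text{max processing time in the group})$ of $j$'s slack before its deadline, and by the window-nesting property of the group this is a controlled fraction of $\ell_j$. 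Choosing $\mu_t=\Theta(\log|S_i|)=\OO(\log|J(t)|)=\OO(\log n)$ makes the total laxity lost at most a $(1-\beta)$-fraction for a constant $\beta\in(0,1)$; one then invokes Theorem~\ref{thm:laxity-drop} (applicable because critical jobs are $\alpha$-tight with $\alpha\le 1/2$, hence $(1/2)$-tight) to conclude that the effective instance, whose laxities have shrunk by a factor $\beta$, is still schedulable — and in particular that the \EDF runs on the $h_t\cdot\mu_t$ machines never miss a deadline.

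Putting it together: the algorithm uses $\OO(m) + \OO(m)\cdot\OO(\log n) = \OO(m\log n)$ machines, and feasibility on the safe side is Theorem~\ref{thm: EDF-small} while feasibility on the critical side follows from the laxity-preservation argument combined with Theorem~\ref{thm:laxity-drop}. I expect the main obstacle to be step (c): precisely quantifying how much laxity each job loses under the round-robin split and verifying that $\Theta(\log n)$ machines per group suffice to keep a \emph{constant} fraction $\beta$ independent of $t$ and $n$ — this requires carefully tracking, over the whole history of releases (not just one release date), how repeated re-partitioning interacts with the \EDF processing already done, and then feeding the resulting shrunken-laxity instance into Theorem~\ref{thm:laxity-drop} in a way that certifies no deadline is ever missed.
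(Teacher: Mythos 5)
The overall structure of your plan — bound $m(L)$, $h_t$, and $\mu_t$ separately and multiply out to $\OO(m\log n)$ — matches the paper, and your sketches for $\mu_t$ and for the role of Theorem~\ref{thm:laxity-drop} on the critical side are in the right spirit. However, there is a genuine gap in how you handle $m(L)$: the claim that ``each residue can be run in the portion of $j$'s window after time $t$, so $m(L)\le m$'' is false in general. The residue $j_t$ has processing time $p_j(t)$, which is what \emph{the algorithm} still has left of $j$ at $t$, and this bears no a priori relation to what an optimal schedule of $J$ does after $t$; the optimum may have finished $j$ entirely before $t$, so its schedule restricted to $[t,d_j]$ need not leave room for $p_j(t)$ units. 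Put differently, $L$ is an instance the algorithm itself has created, and bounding $m(L)$ requires a property of the algorithm's history, not just feasibility of $J$.

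The paper closes exactly this gap by routing both $m(L)$ and $h_t$ through the laxity-preservation Lemma~\ref{lemma:length-preempted} and then Theorem~\ref{thm:laxity-drop}. Since $\ell_j(t)\geq\beta\ell_j$ at the moment $j$ becomes safe (using that $j$ was processed in $[t-1,t]$ when it switched from tight to loose), one gets $p_j(t)\le p_{j_\beta}-(t-r_j)$, hence a feasible schedule of $J_\beta$ leaves enough room in $[t,d_j]$ for every residue, i.e., $m(L)\le m(J_\beta)$; only then does Theorem~\ref{thm:laxity-drop} give $m(L)\le 4m/\beta=\OO(m)$. Your sketch for $h_t$ has the same issue in miniature: Lemma~\ref{lemma:partition-at-time-t} bounds $h_t$ in terms of $m(\hat{T}_t)$, which is again an optimum for a residual instance, and the passage $m(\hat{T}_t)\le m(J_\beta)\le 4m/\beta$ is where Theorem~\ref{thm:laxity-drop} is indispensable; a direct strong-density count on the representatives does not by itself relate $h_t$ to~$m$.
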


Recall that we apply \EDF to schedule the safe jobs $L=\bigcup_t\left\{j_t\mid j\in L_t\setminus L_{t-1}\right\}$. Using Theorem~\ref{thm: EDF-small}, $m(L)/(1-\alpha)^2=\OO(m(L))$ machines are suffice to feasibly schedule all the safe jobs. Meanwhile at any time $[t,t+1]$, we partition $T_t$ into $h_t$ subsets $S_1,\dots,S_{h_t}$ and schedule $\mu_t$ jobs of each subset on a separate set of machines.

We first prove that, if we choose a certain $\mu_t=\OO(\log |J(t)|)$, the laxity of each job in $T_t$ never drops below a constant fraction of its original laxity. This implies that our algorithm never misses a critical job. Hence our algorithm does not miss any job.

\begin{lemma}\label{lemma:length-preempted}
We can choose $\mu_t=\OO(\log |J(t)|)$ such that at any time $t$ and for any $j\in T_t$, $\ell_j(t)\ge \beta\ell_j$ where $\beta=2-\pi^2/6\in(0,1)$.
\end{lemma}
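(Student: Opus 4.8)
The plan is to track how the laxity of a critical job $j$ decreases over its lifetime in $T_t$, and to show that the construction of the algorithm guarantees the decrease is controlled by a convergent series. The key observation is that job $j$ loses laxity only when it is \emph{not} processed while still active, and since each set $S_i^k$ is scheduled by $\EDF$ on its own machine, $j$ is only delayed within its own subset $S_i^k$. So I would first argue that, at the moment $j$ is assigned to a set $S_i$ at its release date $r_j$ (or at the release date at which the relevant repartition happens), the jobs already present in $S_i$ with smaller deadline have, by the assignment rule $d_{j'} - t \le \ell_{a(i')}(t)$ used when they were inserted, total remaining feasible-window length bounded in terms of laxities; more usefully, the jobs that can ever delay $j$ are those in $S_i^k$ with deadline at most $d_j$, and by the stride-$\mu_t$ splitting $S_i^k = \{j_i \mid i \equiv k-1 \bmod \mu_t\}$ only a $1/\mu_t$ fraction of the ordered set $S_i$ lands in $S_i^k$.

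Next I would set up the recursive bound on laxity loss across ``rounds'' of repartitioning. Each time a new job is released and triggers a repartition, the structural invariant (from the assignment rule $d_j - t \le \ell_{a(i)}(t)$) ensures that any set $S_i$ to which $j$ is added consists of jobs whose feasible windows are nested/short relative to earlier members' laxities — this is exactly what Theorem~\ref{thm:laxity-drop} is built to exploit: distributing a set over $\mu_t$ machines with $\EDF$ costs each job at most a $1/\mu_t$-ish fraction of its current laxity per level of nesting, and the number of levels of nesting is $O(\log|S_i|) = O(\log|J(t)|)$ because windows shrink geometrically along a chain. Concretely, I expect the argument to show that after the $r$-th such level the remaining laxity is at least $\ell_j \cdot \prod_{s=1}^{r}(1 - c/s^2)$ for an appropriate constant, or more directly that the \emph{total} fractional laxity lost is at most $\sum_{s\ge 1} 1/s^2 = \pi^2/6 - $ (adjusted), so the surviving fraction is $2 - \pi^2/6$. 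Choosing $\mu_t$ proportional to $\log|J(t)|$ (so that the per-level loss at level $s$ is at most $1/s^2$, say by making $\mu_t$ large enough that the first level is fine and summing) is what forces $\mu_t = O(\log|J(t)|)$.

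So the skeleton is: (1) show $j$'s laxity is only consumed by non-processing within $S_i^k$; (2) show the jobs in $S_i^k$ that precede $j$ in $\EDF$ order (i.e.\ deadline $\le d_j$) have, by the insertion rule, feasible windows forming a chain along which lengths decay by a factor depending on $\mu_t$, so there are only $O(\log|J(t)|)$ of them that matter ``per scale''; (3) sum the geometric/harmonic-square losses over scales to get the constant $\beta = 2 - \pi^2/6$; (4) conclude $\ell_j(t) \ge \beta\ell_j$ always, hence $\EDF$ on each $S_i^k$ meets all deadlines and no critical job is ever missed. For the $\EDF$-feasibility-on-one-machine step inside each $S_i^k$ I would invoke that a single machine suffices by construction and that keeping laxity nonnegative is exactly the condition $\EDF$ needs.

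\textbf{Main obstacle.} The hard part will be step (2)–(3): making precise how the insertion rule ``$d_j - t \le \ell_{a(i)}(t)$'' combined with the stride-$\mu_t$ splitting yields a clean per-scale laxity-loss bound that sums to $\pi^2/6$ and thereby pins down the constant $\mu_t = \Theta(\log|J(t)|)$. In particular one must carefully bound how much laxity $j$ can lose during the window \emph{before} a repartition happens versus \emph{at} a repartition, and handle the fact that repartitions only occur at release dates (so between release dates the partition is inherited and jobs only leave it by becoming safe, which cannot hurt $j$). I would expect to need a potential-function or telescoping argument indexed by the ``depth'' of $j$ in the chain of nested feasible windows within its set, invoking Theorem~\ref{thm:laxity-drop} (or its proof technique) as a black box at each depth; the bookkeeping to get the exact $2 - \pi^2/6$ rather than merely some constant in $(0,1)$ is the delicate piece.
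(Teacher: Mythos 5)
Your overall shape is right: bound the laxity loss per repartition, sum a convergent series, and choose $\mu_t$ logarithmic so the per-repartition loss is small enough. But the mechanism you describe for the per-repartition bound is not what makes the paper's argument work, and you are missing the key indexing trick that yields the exact constant.

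Concretely, the paper's per-repartition bound has nothing to do with ``$1/\mu_t$-ish'' fractions, ``levels of nesting'' of depth $O(\log|S_i|)$, or an invocation of Theorem~\ref{thm:laxity-drop} (that theorem is used later, in the proof of Theorem~\ref{thm:logn}, to bound $h_t$ and $m(L)$ --- it plays no role in this lemma). Instead, the bound is a direct geometric decay along one stride: order $S_i$ by non-increasing deadline, let $j^\star = j+\mu_{t'}$ be the next job in the same stride subset $S_i^k$. All jobs in $S_i^k$ that can preempt $j$ under $\EDF$ have deadline at most $d_{j^\star}$, so $j$ is delayed by at most $d_{j^\star}-t'$ until the next repartition. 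The insertion rule gives $d_{j'+1}-t'\le\ell_{j'}(t')$, and $\alpha$-tightness gives $\ell_{j'}(t')\le(1-\alpha)(d_{j'}-t')$; chaining these $\mu_{t'}$ times yields $d_{j^\star}-t'\le(1-\alpha)^{\mu_{t'}}\ell_j(t')\le(1-\alpha)^{\mu_{t'}}\ell_j$. The decay is exponential in $\mu_{t'}$, so picking $\mu_{t'}=\Theta(\log|J(t')|)$ drives this below $\ell_j/|J(t')|^2$.

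The second gap is how the series is indexed. You say ``sum the geometric/harmonic-square losses over scales'' without identifying the index. The paper's crucial observation is that a repartition only happens at a release date, so between consecutive repartition times $t'$ the quantity $|J(t')|$ strictly increases. Hence the total delay is at most $\sum_{i=2}^{|J(t)|}\ell_j/i^2\le(\pi^2/6-1)\ell_j$, giving $\ell_j(t)\ge(2-\pi^2/6)\ell_j$ immediately; the ``exact constant'' falls out of this substitution and is not a delicate bookkeeping step, contrary to what you expect. Without the $|J(t')|$-indexing there is no reason the per-repartition losses should form a summable series, and that is the idea your proposal is missing.
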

\begin{proof}
Let $t'\le t$ be an arbitrary time when the partition of $T_{t'}$ into the sets $S_1,\dots,S_{h_{t'}}$ is recomputed. Let $S_i\ni j$ and w.l.o.g. $S_i=\{1,\dots,|S_i|\}$ where $d_{j'}\geq d_{j'+1}$, for all $j'$. By the way the partition is constructed, we further have $d_{j'+1}-t\leq\ell_{j'}(t')$. Now let $j^\star=j+\mu_{t'}$ and notice that, unless the partition is recomputed, the execution of $j$ is only blocked by other jobs for at most $d_{j^\star}-t'$ (if $j^\star$ does not exist, $j$ is immediately executed). Since every job in $T_{t'}$ is $\alpha$-tight at $t'$, this however means that the execution of $j$ is only delayed by $(1-\alpha)^{\mu_{t'}}\cdot\ell_j(t')\leq (1-\alpha)^{\mu_{t'}}\cdot\ell_j$. Note that we can choose $\mu_{t'}=\OO(\log|J(t')|)$ in such a way that $(1-\alpha)^{\mu_{t'}}\cdot\ell_j\leq \ell_j/|J(t')|^2$. To get the total time $j$ is at most delayed by, we have to sum over all such $t'$. Recall that at least one job is released at any such time $t'$, hence $|J(t')|\ge |J(t'-1)|+1$ and we have: $$\ell_j(t)\geq \ell_j-\sum_{i=2}^{|J(t)|}\frac{\ell_j}{i^2}\geq \ell_j-(\pi^2/6-1)\cdot\ell_j = (2-\pi^2/6)\cdot\ell_j.$$\end{proof}

Before proving the main theorem, we need an additional bound on $h_t$. To this end, let $\hat{T}_t=\{j_t\mid j\in T_t\}$ be the set of residues of $T_t$ at $t$, i.e., the jobs in $T_t$ with remaining processing times and time windows.

\begin{lemma}\label{lemma:partition-at-time-t} At all times $t$, it holds that
$$h_{t}\le 1+\left(2+\frac{2}{\alpha}\right)\cdot m(\hat{T}_t).$$
\end{lemma}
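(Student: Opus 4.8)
The goal is to bound the number of sets $h_t$ in the partition of the critical jobs $T_t$ by $\OO(m(\hat T_t))$. I would argue by amortizing the creation of new sets against the contribution certain jobs must make to carefully chosen interval sets, and then invoke Theorem~\ref{thm: strong-density} (the strong-density characterization $m = \lceil\rho_s\rceil$) to turn this into a bound involving $m(\hat T_t)$.

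First I would fix a time $t'\le t$ at which the partition is (re)computed — by the algorithm's description the current partition $S_1,\dots,S_{h_t}$ descends from the partition at the last release date, with only deletions since then, so it suffices to bound $h_{t'}$ at a release date and note that deletions only decrease the count. At such a $t'$, recall the creation rule: processing jobs in decreasing order of deadline, a \emph{new} set is opened for job $j$ precisely when for \emph{every} already-existing set $S_i$ we have $d_j - t' > \ell_{a(i)}(t')$, where $a(i)$ is the job of smallest deadline currently in $S_i$ — equivalently $d_j - t'$ exceeds the laxity of the last job placed in each set. The key structural observation I would extract is: if $j$ opens the $(k+1)$-st set, then the $k$ jobs $a(1),\dots,a(k)$ that were ``last'' in the existing sets all have $d_{a(i)} - \ell_{a(i)}(t') = r_{a(i)} + p_{a(i)}(t') < d_j$, i.e. their residues' feasible windows, shrunk by the laxity, are ``squeezed'' to the left of $d_j$. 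I want to convert ``many sets opened'' into ``many jobs forced onto a short interval''.

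The concrete plan: suppose $h_{t'} \ge 2 + (2 + 2/\alpha)\cdot m(\hat T_{t'})$, and derive a contradiction with the strong-density lower bound applied to $\hat T_{t'}$. Consider the job $j^\star$ whose arrival opened the last set, say the $H$-th with $H = h_{t'}$; at that moment there were $H-1$ sets, each with a representative $a(i)$ satisfying $r_{a(i)} + p_{a(i)}(t') < d_{j^\star} \le d_{a(i)}$ (the last inequality since jobs are processed in decreasing deadline order, so $a(i)$ was placed before $j^\star$). Also $r_{a(i)} \le t'$ since $a(i)$ has been released, and each $a(i)$ is $\alpha$-tight, so $p_{a(i)}(t') > \alpha(d_{a(i)} - r_{a(i)})\ge \alpha(d_{j^\star} - t')$, forcing $d_{j^\star}-t' < p_{a(i)}(t')/\alpha$. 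Now take the interval set $\II$ to be (the slots of) $[t', d_{j^\star}]$, of length $L := d_{j^\star} - t'$. For each representative $a(i)$, its contribution is $\Gamma_{a(i)}(\II) = \max\{0, |[t',d_{j^\star}]\cap[r_{a(i)},d_{a(i)}]| - \ell_{a(i)}\}$; since $[t',d_{j^\star}]\subseteq[r_{a(i)},d_{a(i)}]$ the intersection length is $L$, and since $r_{a(i)}+p_{a(i)}(t') < d_{j^\star}$ we get $\ell_{a(i)} = d_{a(i)} - r_{a(i)} - p_{a(i)} \le d_{a(i)} - d_{j^\star} + (p_{a(i)}(t') - p_{a(i)})$; here I need to be slightly careful because $\ell_{a(i)}$ is the \emph{original} laxity while the residue $\widehat{a(i)}\in\hat T_{t'}$ has laxity $\ell_{a(i)}(t')$, and I should compute $\Gamma$ for the residue instance $\hat T_{t'}$, where the relevant quantity is $|[t',d_{j^\star}]\cap[t,d_{a(i)}]| - \ell_{a(i)}(t') = L - \ell_{a(i)}(t')$. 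Combining $L - \ell_{a(i)}(t') \ge L - (d_{j^\star} - t') \cdot(\text{something})$ — more precisely, $a(i)$ being the last job in its set with $d_{j^\star} - t' > \ell_{a(i)}(t')$ gives $\ell_{a(i)}(t') < L$, hence each such residue contributes at most... no: I want a \emph{lower} bound on contribution, so I use $\ell_{a(i)}(t') \le (1-\alpha) \cdot(\text{window length})$ type bounds via tightness to show $\Gamma_{\widehat{a(i)}}(\II) \ge L - \ell_{a(i)}(t') \ge \alpha L$ or similar, so the $H-1$ representatives plus $j^\star$ itself (which lies entirely in $[t',d_{j^\star}]$, contributing $\ge p_{j^\star}(t') > \alpha L$) give total contribution $\ge \alpha L \cdot H \cdot(\text{const})$ over an interval of length $L$, i.e. $\rho_s(\hat T_{t'}) \ge \Omega(\alpha H)$, so $m(\hat T_{t'}) = \lceil\rho_s\rceil \ge \Omega(\alpha H)$, contradicting $H \ge 2 + (2+2/\alpha)m(\hat T_{t'})$ once the constants are matched (the $2/\alpha$ factor is exactly what one expects from dividing by $\alpha$, and the additive $1$ absorbs the ceiling and the ``$+1$'' offset).

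\textbf{Main obstacle.} The delicate point is getting the constants to land exactly at $1 + (2 + 2/\alpha)m(\hat T_t)$ rather than some looser $\OO(m/\alpha)$, and in particular correctly accounting for the difference between original laxities $\ell_j$ and residue laxities $\ell_j(t')$ when passing to the residue instance $\hat T_t$ in the strong-density bound — the tightness hypothesis is on the original jobs but Theorem~\ref{thm: strong-density} must be applied to $\hat T_t$. I expect one needs the observation that each $a(i)$ is still $\alpha$-tight \emph{as a residue at $t'$} (since in $T_t$ it has ``always been $\alpha$-tight'', in particular at $t'$, so $p_{a(i)}(t') > \alpha(d_{a(i)} - t')$), which is precisely what makes $\ell_{a(i)}(t') < (1-\alpha)(d_{a(i)}-t')$ and hence $\Gamma_{\widehat{a(i)}}(\II) \ge \alpha(d_{a(i)}-t') \ge \alpha L$ work cleanly. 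Assembling these pieces with the right bookkeeping of how many sets each representative "witnesses" is the crux; the rest is routine.
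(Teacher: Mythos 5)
There is a genuine gap in the crucial estimate. Your plan rests on the claim that each representative $a(i)$ contributes $\Gamma_{\widehat{a(i)}}(\II)\ge\alpha L$ to $\II=[t',d_{j^\star}]$, where $L=d_{j^\star}-t'$. But the contribution of the residue of $a(i)$ to that interval is exactly $L-\ell_{a(i)}(t')$, and the only two upper bounds you have on $\ell_{a(i)}(t')$ are $\ell_{a(i)}(t')<L$ (the new-set condition) and $\ell_{a(i)}(t')<(1-\alpha)(d_{a(i)}-t')$ (tightness of $a(i)$ as a residue). The second bound involves $d_{a(i)}-t'$, which can be \emph{much larger} than $L$ since $a(i)$ was placed before $j^\star$ and may have a far later deadline; it therefore gives nothing stronger than $\ell_{a(i)}(t')<L$. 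With only $\ell_{a(i)}(t')<L$ the contribution $L-\ell_{a(i)}(t')$ can be as small as a single time unit (e.g.\ $\alpha=0.1$, $d_{a(i)}-t'=10L$, $\ell_{a(i)}(t')=L-1$ is $\alpha$-tight and in a different set, yet contributes $1$). So the claimed total $\Omega(\alpha L H)$ over an interval of length $L$ does not follow, and the contradiction evaporates.

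What is missing is precisely what the paper supplies. The paper does not use the interval $[t,d_j]$ but the \emph{longer} interval $[t,t+2(d_j-t)]$, and before doing so proves an auxiliary counting lemma (Lemma~\ref{lemma:length-shrink}): because each job of $\hat{T}_t$ is $\alpha$-tight as a residue, the quantities $d_{j'}-t$ must halve every $2m(\hat{T}_t)/\alpha$ positions in the deadline-sorted list, otherwise the workload forced into $[t,d_j]$ would exceed $m(\hat{T}_t)\cdot(d_j-t)$. Consequently, among the representatives $\lambda_1,\dots,\lambda_{h_t-1}$, at most $2m(\hat{T}_t)/\alpha$ of them can have a window of length below $2(d_j-t)$; the remaining $h_t-1-2m(\hat{T}_t)/\alpha$ representatives have window length $\ge 2(d_j-t)$ and laxity $<d_j-t$, so each contributes at least $2(d_j-t)-(d_j-t)=d_j-t$ to $[t,t+2(d_j-t)]$. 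Comparing with the capacity $2m(\hat{T}_t)\cdot(d_j-t)$ of that interval gives $h_t-1-2m(\hat{T}_t)/\alpha\le 2m(\hat{T}_t)$, i.e.\ the stated bound. In short: you need both the doubled interval and the geometric-decay counting step to control the representatives whose windows are not much longer than $L$; neither ingredient appears in your sketch, and without them the per-representative contribution cannot be bounded below by a constant fraction of $L$.
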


W.l.o.g., we let $\hat{T}_t=\{1,\dots,|\hat{T}_t|\}$ where $d_j\geq d_{j+1}$, for all $j$. We first prove the following auxiliary lemma:

\begin{lemma}\label{lemma:length-shrink}
For any $j$ and $j'=j+2m(\hat{T}_t)/\alpha$, we have $d_{j'}-t\le (d_j-t)/2$.
\end{lemma}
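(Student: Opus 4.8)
The plan is to prove Lemma~\ref{lemma:length-shrink} by a short volume-counting estimate, which is just the easy direction of the strong-density bound (Theorem~\ref{thm: strong-density}) instantiated with the single interval $[t,d_j]$. I would first fix the notation of the lemma: $\hat{T}_t=\{1,\dots,|\hat{T}_t|\}$ with $d_1\ge d_2\ge\cdots\ge d_{|\hat{T}_t|}$, where the residue $i_t$ has processing time $p_i(t)$ and feasible time window $[t,d_i]$. Recall that each $i_t\in\hat{T}_t$ stems from a critical job and was therefore never $\alpha$-loose, so $i\notin L_t$ gives $p_i(t)>\alpha(d_i-t)$. Note also that $2m(\hat{T}_t)/\alpha$ is an integer since $1/\alpha\in\N$. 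The two degenerate cases are disposed of immediately: if $\hat{T}_t=\emptyset$ then $m(\hat{T}_t)=0$ and the claim is trivial, and if $j'=j+2m(\hat{T}_t)/\alpha>|\hat{T}_t|$ the job $j'$ does not exist and there is nothing to prove; so assume $m(\hat{T}_t)\ge 1$ and $j'\le|\hat{T}_t|$.

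The core step considers the $2m(\hat{T}_t)/\alpha$ jobs indexed by $i\in\{j+1,\dots,j'\}$. By the deadline ordering, $d_{j'}\le d_i\le d_j$ for each such $i$, and I would extract two consequences. First, the whole window $[t,d_i]$ of the residue $i_t$ is contained in $[t,d_j]$, so in any feasible schedule of $\hat{T}_t$ the entire remaining volume $p_i(t)$ of job $i$ must be processed during $[t,d_j]$. Second, combining tightness with $d_i\ge d_{j'}$ gives $p_i(t)>\alpha(d_i-t)\ge\alpha(d_{j'}-t)$. Summing over these jobs, any feasible schedule is forced to place inside $[t,d_j]$ a total volume of at least $\sum_{i=j+1}^{j'}p_i(t)>\frac{2m(\hat{T}_t)}{\alpha}\cdot\alpha(d_{j'}-t)=2m(\hat{T}_t)(d_{j'}-t)$.

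On the other hand, each of the $m(\hat{T}_t)$ machines processes at most $d_j-t$ units during $[t,d_j]$, so the volume forced into $[t,d_j]$ is at most $m(\hat{T}_t)(d_j-t)$. Putting the two bounds together yields $2m(\hat{T}_t)(d_{j'}-t)<m(\hat{T}_t)(d_j-t)$, and dividing by $m(\hat{T}_t)\ge 1$ gives $d_{j'}-t<(d_j-t)/2$, hence in particular $d_{j'}-t\le(d_j-t)/2$, as claimed.

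I do not expect a real obstacle here: this is a one-shot packing argument. The only points needing a word of care are reading off $p_i(t)>\alpha(d_i-t)$ from the definition of $T_t$ (critical jobs), and handling the boundary cases $m(\hat{T}_t)=0$ and $j'$ out of range, both trivial. The one genuinely clever ingredient — worth stating explicitly — is that the two inequalities on $d_i$ are used in opposite directions: $d_i\ge d_{j'}$ to lower-bound each job's remaining length, and $d_i\le d_j$ to confine all of that work to the interval $[t,d_j]$ of length $d_j-t$.
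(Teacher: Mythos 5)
Your proof is correct and uses essentially the same volume-counting argument as the paper: $\alpha$-tightness gives a lower bound on the remaining processing times of the $2m(\hat{T}_t)/\alpha$ jobs whose windows lie inside $[t,d_j]$, and this is compared against the capacity $m(\hat{T}_t)\cdot(d_j-t)$ of the optimal schedule. The only difference is cosmetic: the paper phrases it as a contradiction (assume $d_{j+h}-t>(d_j-t)/2$ for all $h\le 2m(\hat{T}_t)/\alpha$), whereas you argue directly via $d_i\ge d_{j'}$, which is a slightly cleaner exposition of the same idea.
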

\begin{proof}
Suppose there exists some $j$ such that $d_{j+h}-t>(d_j-t)/2$ holds for any h with $0\le h\le 2m(T_t)/\alpha$. Given that every job in $T_t$ is $\alpha$-tight at $t$, we get $p_{j+h}>\alpha/2\cdot (d_j-t)$. Thus, the total workload that has to be finished in the interval $[t,d_j]$ is larger than $\alpha/2\cdot(d_j-t)\cdot 2m(\hat{T}_t)/\alpha=(d_j-t)\cdot m(\hat{T}_t)$, which contradicts the fact that $m(\hat{T}_t)$ machines are enough to accommodate $\hat{T}_t$.
\end{proof}

\begin{proof}[Proof of Lemma~\ref{lemma:partition-at-time-t}]
Recall the construction of the different $S_i$ and consider the state of $S_1,\dots,S_{h_t-1}$ when $S_{h_t}$ is opened. Then there exists some job $j$ which could not be added into any of the subset $S_1$ to $S_{h_t-1}$. As before, we let $\lambda_i$ be the earliest-deadline job from $S_i$. Obviously job $\lambda_i$ is added to $S_i$ before we consider job $j$, $\lambda_i<j$. According to Lemma~\ref{lemma:length-shrink}, there are at least $h_t-1-2\cdot m(\hat{T}_t)/\alpha$ jobs among $\lambda_1,\dots,\lambda_{h_t-1}$ whose feasible time window has a length at least $2(d_j-t)$, and has a remaining laxity no more than $d_j-t$. Thus if we consider the interval $[t,t+2(d_j-t)]$, each of the $h_t-1-2\cdot m(\hat{T}_t)/\alpha$ contribute at least a processing time of $d_j-t$, implying that the workload that has to be finished in this interval is at least $(h_{t}-1-2\cdot m(\hat{T}_t)/\alpha)(d_j-t)$. On the other hand, as $m(\hat{T}_t)$ machines are enough to accommodate all the jobs, the workload processed during $[t,t+2(d_j-t)]$ is at most $m(\hat{T}_t)\cdot 2(d_j-t)$. Hence $(h_{t}-1-2\cdot m(\hat{T}_t)/\alpha)(d_j-t)\le m(\hat{T}_t)\cdot 2(d_j-t)$, which implies the lemma.
\end{proof}

We are ready to prove the main theorem of this section:

\begin{proof}[Proof of Theorem~\ref{thm:logn}.]
Recall that the number of machines used by our algorithm is $\OO(m(L)+h_t\cdot \mu_t)$. Using Lemma~\ref{lemma:length-preempted}, it only remains to bound $m(L)$ and $h_t$. We set $\beta=2-\pi^2/6\in(0,1)$. %and $\alpha\in(0,1)$ as some constant.

We first show $h_{t}\le 1+(2+2/\alpha)\cdot 4m/\beta=\OO(m)$ for all $t$. According to Lemma~\ref{lemma:partition-at-time-t}, it suffices to prove that $m(\hat{T}_t)\le 4m/\beta$. By Lemma~\ref{lemma:length-preempted}, we know that for any job $j$, we have $\ell_j(t)\geq\beta\ell_j$. Consider the instance $J_{\beta}$ as defined in Subsection~\ref{subsec: oa-pow-lax}. Any feasible schedule of $J_{\beta}$ implies a feasible schedule of $\hat{T}_t$, i.e., we get $m(\hat{T}_t)\le m(J_{\beta})$. Theorem~\ref{thm:laxity-drop} implies that $m(J_{\beta})\le 4m/\beta$ and the claim follows.

Moreover, we show $m(L)\le 4m/\beta=\OO(m)$. Consider any job $j_t\in L$. If job $j$ is released at time $t$, then $\ell_j(t)=\ell_j$. %its residue $j_t$ has a laxity of $\ell_j$. 
Otherwise  we have $\ell_j(t-1)\geq\beta\ell_j$ according to Lemma~\ref{lemma:length-preempted}. As $j$ is $\alpha$-tight at $t-1$ and becomes $\alpha$-loose at $t$, it follows that $j$ is processed in $[t-1,t]$, implying that $\ell_j(t)=\ell_j(t-1)\geq\beta\ell_j$. Thus it holds that $\ell_j(t)\geq\beta\ell_j$. As a consequence, any feasible schedule of $J_{\beta}$ implies a feasible schedule of $L$, implying that $m(L)\le m(J_{\beta})\le 4m/\beta$ by Theorem~\ref{thm:laxity-drop}.
\end{proof}

%\section{Improved Results for Restricted Instances}
%\label{mp-sec:special-cases1}

%We study several problem variants with restrictions on the structure of the given time windows, the processing times, or deadlines.

\section{Lower Bounds}
\label{sec:LB}

\subsection{Lower bound for \LLF}

Phillips et al.~\cite{phillipsSTW02} showed that the competitive ratio of \LLF for (semi-)online machine minimization is not constant. We extend their results by showing that \LLF requires $\Omega({n^{1/3}})$ machines.

\begin{theorem}\label{thm:LLF-lower-bound}
  There exists an instance of $n$ jobs with a feasible schedule on $m$ machines for which \LLF requires $\Omega({n^{1/3}})$ machines.
\end{theorem}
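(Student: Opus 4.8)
The plan is to construct a hard instance recursively, in $\Theta(\log n)$ phases, exploiting the fact that \LLF breaks ties toward earlier release dates and that it only looks at the \emph{current} laxity, ignoring how much future work is pending. The basic idea is the one from Phillips et al.~\cite{phillipsSTW02} but pushed further: at each phase we release a batch of jobs whose laxities are engineered so that \LLF is forced to prioritize jobs that \OPT would have safely deferred, thereby accumulating an ever-growing backlog of ``urgent'' work on \LLF's machines while \OPT stays on $m$ machines. Concretely, I would parametrize the construction by a phase index $i=1,\dots,k$ with $k=\Theta(\log n)$; in phase $i$ we release roughly $m_i$ jobs of some common processing time $p_i$ with deadlines chosen so that their laxity at release is just slightly below that of the jobs already in \LLF's queue, forcing \LLF to switch to the new batch while the old batch's laxity is still positive but shrinking. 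The processing times and counts are scaled geometrically between phases so that in \OPT all jobs of phase $i$ fit into a dedicated time window using only $m$ machines (this is where agreeableness of the instance is \emph{not} assumed — we have full freedom in choosing deadlines), but \LLF at the end of phase $i$ has $\Theta(i\cdot m)$ active jobs, all with small laxity.

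The key quantitative step is to set up a recursion of the form $m_{i+1} \ge m_i \cdot (1 + c)$ for the number of machines \LLF is using (or equivalently the number of simultaneously active least-laxity jobs it is forced to run), for some constant $c>0$, while simultaneously controlling the total number of jobs $n = \sum_i n_i$ and the total time horizon. If the number of active jobs grows by a constant factor per phase over $k$ phases, then \LLF uses $\Omega(2^{\Theta(k)})\cdot m$ machines. To make this a polynomial-in-$n$ bound rather than exponential, the job counts $n_i$ per phase must themselves grow geometrically (so that the early ``cheap'' phases don't dominate), which forces $n = \sum_i n_i = \Theta(n_k)$ and makes $n_k$ — hence $k$ — satisfy $n_k \approx (\text{machine count})^{O(1)}$. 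Balancing ``machine count grows like $a^k$'' against ``job count grows like $b^k$'' yields machine count $= \Theta(n^{\log a/\log b})$, and the construction has to be tuned so that this exponent is exactly $1/3$. I would aim for $a = 2$-ish and $b = 8$-ish, i.e., roughly doubling the machine load while using about eight times as many jobs per phase.

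The main obstacle will be the bookkeeping that certifies the two opposing facts at every phase simultaneously: (a) that \OPT can schedule the cumulative instance on $m$ machines — this requires carefully placing each phase's jobs into disjoint (or lightly overlapping) time windows and verifying feasibility via a volume/strong-density argument (Theorem~\ref{thm: strong-density} could be invoked to make this clean); and (b) that \LLF, following its deterministic least-laxity-with-tie-breaking rule, really is forced into the bad behavior — i.e., that at each decision point the new batch genuinely has the smallest laxity, that the old batches are not completed early, and that their laxities degrade at the predicted rate because they are starved while the new batch runs. Getting the off-by-constant slack right in the laxity inequalities (new batch laxity strictly below old, but old laxity still nonnegative so those jobs aren't yet missed) is delicate, and one has to argue that \LLF doesn't ``recover'' by finishing a batch during a lull. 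I would handle (b) by an explicit induction over phases on the vector of (count, remaining-processing-time, current-laxity) triples of the live batches, mirroring the style of the load-inequality inductions earlier in the paper. Finally I would set $k = \lfloor \alpha \log_b n \rfloor$ for the appropriate constant, plug into the recursion, and read off the $\Omega(n^{1/3})$ bound, checking that the total horizon $d_{\max}$ stays finite and integral via the scaling argument used repeatedly above.
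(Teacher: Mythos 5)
Your high-level intuition is right — release a tightly-packed batch of jobs whose laxities narrowly undercut those already queued, so \LLF keeps preempting jobs it cannot afford to abandon, while \OPT discharges each batch cleanly in a dedicated window — and this is indeed what the paper's construction does, with geometrically shrinking processing times and time windows controlled by a free parameter $c$. But the quantitative skeleton you propose does not match the paper's, and more importantly it does not seem achievable. You ask for a recursion $m_{i+1}\ge m_i(1+c)$, i.e.\ geometric growth over $\Theta(\log n)$ phases in the number of jobs \LLF has backed up, balanced against geometric growth in job count so that the exponent $\log a/\log b$ comes out to $1/3$. The obstacle is \OPT's fixed budget of $m$ machines: in round $r$, the newly released ``stuck-to-be'' tight jobs must be feasible in \OPT within a window whose length is shrinking (geometrically in $r$, since the remaining horizon is geometric), and each such job's processing time is a constant fraction of the round length. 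A volume count on \OPT's $m$ machines then caps the number of new tight jobs per round at $\Theta(m)$, independent of $r$. So the backlog can only accumulate \emph{linearly}, not geometrically — which is exactly the structure of the paper's proof: after round $r$, \LLF is stuck with $rm/2$ frozen jobs, each with remaining processing $x_0/c^r$ squeezed into a window of roughly $x_0/c^{r-1}$, so it needs $\Theta(km)$ machines after $k$ rounds.

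The exponent $1/3$ therefore arises from a completely different balance than the one you set up. The paper takes $k=\Theta(c)$ rounds (linear in the competitive-ratio target, not $\log n$), with $\Theta(c^2 m)$ jobs per round (the factor $c^2$ comes from the $cm/2$ ``more loose'' jobs released $\Theta(c)$ times inside each round to starve the tight jobs). That gives $n=\Theta(c^3 m)$ total jobs while \LLF is forced onto $\Theta(cm)$ machines, i.e.\ $\Omega(n^{1/3})$ for constant $m$. It is a product of three linear-in-$c$ quantities, not a $\log a/\log b$ calculation. If you try to carry out your plan as written, you would find at the \OPT-feasibility check (your step (a)) that the geometric growth in backlog violates the strong-density bound of Theorem~\ref{thm: strong-density} for any constant $a>1$, so the recursion $m_{i+1}\ge(1+c)m_i$ cannot be sustained for more than $\OO(1)$ phases. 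You would need to replace it with additive growth $m_{i+1}\ge m_i+\Theta(m)$ and then rebalance the round count and per-round job count, which brings you back to the paper's scheme.
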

\begin{proof}
Let $m$ be even. For any $\hat{c}>1$, we give an instance at which \LLF using $\hat{c}m=cm/2$ machines fails. Towards this, consider the integer sequence $(x_0,x_1,x_2,\dots)$ with $x_0$ large enough and $x_r=x_0\sum_{i=r+2}^{\infty}(1/c)^i$, for all $r\ge 1$. Let $G_t(t,r)$ be the set of $m/2$ identical {\em (more tight)} jobs with feasible time window $[t,t+x_0\sum_{i=r}^{\infty}(1/c)^i]$ and laxity $x_0\sum_{i=r+1}^{\infty}(1/c)^i$. Moreover, let $G_l(t,x_r)$ be the set of $cm/2$ identical {\em (more loose)} jobs with feasbile time window $[t,t+cx_r]$ and processing time $x_r$. We construct the instance in $k$ rounds (where $k$ is yet to be specified) as follows.

\begin{itemize}
\item {\em Round 1.} From time $0$ to time $x_0$, we release $G_t(0,0)$ and $G_l(t,x_1)$ for $t=icx_1$ where $i=0,1,\cdots,x_0/(cx_1)-1$. It can be easily verified that \LLF will always preempt jobs in $G_t(0,0)$ in favor of jobs in $G_l(t,x_1)$. Thus at time $x_0$, each job in $G_t(0,0)$ is preempted for exactly $x_1\cdot x_0/(cx_1)=x_0/c$ time units, i.e., by time $x_0$ there are still $m/2$ jobs in $G_t(0,x_0)$, each having a remaining processing time of $x_0/c$ to be scheduled within $[x_0,x_0+x_0\sum_{i=1}^{\infty}(1/c)^i]$. However, the optimum solution could finish all the jobs released so far by time $x_0$. 

\item {\em  Round $r>1$.} Carry on the above procedure. Suppose at time $x_0\sum_{i=0}^{r-2}(1/c)^i$ the optimum solution could finish all the jobs released so far while for \LLF there are still $(r-1)\cdot m/2$ jobs, each having a remaining processing time $x_0/c^{r-1}$ to be scheduled within $[x_0\sum_{i=0}^{r-2}(1/c)^i,x_0\sum_{i=0}^{\infty}(1/c)^i]$. Then from time $x_0\sum_{i=0}^{r-2}(1/c)^i$ to time $x_0\sum_{i=0}^{r-1}(1/c)^i$, we release $G_t(x_0\sum_{i=0}^{r-2}(1/c)^i,r-1)$ and $G_l(t,x_{r})$ for $t=x_0\sum_{i=0}^{r-2}(1/c)^i+j\cdot cx_r$ where $j=0,1,\cdots,x_0/(c^{r}x_{r})$.
It can be easily seen that at time $x_0\sum_{i=0}^{r-2}(1/c)^i$ there are in total $(r)\cdot m/2$ jobs having a processing time $x_0/c^{r-1}$ to be scheduled within $[x_0\sum_{i=0}^{r-2}(1/c)^i,x_0\sum_{i=0}^{\infty}(1/c)^i]$. Each of these jobs will be preempted in favor of jobs in $G_l(t,x_{r})$. Thus until time $x_0\sum_{i=0}^{r-2}(1/c)^i$, it is preempted for $x_{r}\cdot x_0/(c^{r}x_{r})=x_0/c^{r}$ time units. This implies that there are $rm/2$ jobs, each having a processing time $x_0/c^r$ to be scheduled within $[x_0\sum_{i=0}^{r-1}(1/c)^i,x_0\sum_{i=0}^{\infty}(1/c)^i]$. %Hence, the laxity of each residue is $x_0\sum_{i=k+3}^\infty(1/c)^i\ge cx_{k+2}$, which explains why it is always preempted in favor of jobs in $G_l(t,x_{k+2})$ since the laxity cannot increase over time. 
However, the optimum solution could finish all the jobs released so far by time $x_r$.
\end{itemize}

We estimate the number of jobs released at each round by $m+cm\cdot x_0/(c^{k+2}x_{k+2})=c(c-1)m+m=\OO(c^2m)=\OO(\hat{c}^2m)$. Thus, until round $k$ we release in total $\OO(k\hat{c}^2m)$ jobs. It is easy to see that \LLF requires $\OO(km)$ machines for the remaining jobs and $\hat{c}m$ machines will be insufficient if we choose some $k=\OO(\hat{c})$. Thus it suffices to choose some $n=\OO(\hat{c}^3)$ for the lower bound, i.e., \LLF requires $\Omega(cm)=\Omega(n^{1/3})$ many machines.
\end{proof}

\subsection{Lower bound for Deadline-Ordered Algorithms}

Consider deadline-ordered algorithms that schedule jobs using only the relative order of jobs instead of their actual values. \EDF belongs to this class of algorithms. Lam and To~\cite{lamT99} derived a lower bound on the speed that is necessary to feasibly schedule jobs on~$m$ machines. We modify their instance and argumentation to give the following lower bound on the number of unit-speed machines.

\begin{theorem}\label{thm:deadline-ordered}
  There are instances of~$n$ jobs with a feasible schedule on~$m$ machines for which any deadline-ordered algorithm requires at least~$n-1$ machines.
\end{theorem}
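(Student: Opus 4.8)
The plan is to construct an instance that forces any deadline-ordered algorithm to distribute a large set of jobs over many machines, exploiting the fact that such an algorithm commits to a priority order without seeing the actual deadline values. The starting point is the known speed-based lower bound of Lam and To~\cite{lamT99}: their instance shows that on $m$ machines, any deadline-ordered algorithm needs speed close to $2$. The key idea for translating a speedup lower bound into a machine-number lower bound is that ``$s$ times the speed on $m$ machines'' can do things that ``$sm$ machines at unit speed'' cannot — in particular it can run a single job at rate $s$ — but conversely a machine-number lower bound is in some sense a stronger statement, so one has to be careful. Here we want the opposite direction: we will design the instance so that the adversary's extra freedom comes precisely from being able to release several identical jobs where the online algorithm, having fixed its order, is forced to treat them all the same way and hence pile them onto distinct machines.

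Concretely, I would proceed as follows. First, fix the target optimum value $m$ (or even $m=1$, aiming for the cleanest statement) and work with a time horizon partitioned into a few phases. In an early phase, release a block of $n-1$ jobs that are pairwise ``indistinguishable by relative order'' to the algorithm at the moment it must act on them — e.g.\ jobs with a common release date and with deadlines that, although distinct, are revealed in an order that the algorithm cannot exploit (the deadline-ordered algorithm sees only the permutation, and we design the instance so every extension of the current partial order is still consistent with a feasible offline schedule on $m$ machines). The adversary, knowing the true values, schedules these $n-1$ jobs compactly on $m$ machines. The online algorithm, by contrast, is maneuvered — via the timing of a later ``trigger'' job or set of jobs with a very tight deadline — into a configuration where all $n-1$ early jobs are simultaneously active and each must be running, forcing $n-1$ machines. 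The trigger works because, whatever order the deadline-ordered algorithm picked, the adversary can pick the true deadlines afterwards (consistently with what has been revealed) so that the jobs the algorithm deprioritized are exactly the ones that now become urgent.

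The key steps, in order: (1) recall/restate the Lam--To instance and isolate the combinatorial core, namely that a deadline-ordered algorithm's behavior is invariant under relabeling jobs with equal priority; (2) build the explicit instance with $n-1$ ``bulk'' jobs plus a small number of auxiliary/trigger jobs, and specify release dates, deadlines, and processing times as functions of a scale parameter; (3) verify feasibility of the offline optimum on $m$ machines by exhibiting the schedule directly; (4) do the adversary argument: track the state of an arbitrary deadline-ordered algorithm, show that after the bulk phase at least $n-1$ jobs remain with total remaining work and a common tight deadline forcing them all to run in parallel over a short window, hence $n-1$ machines are required; (5) conclude that no deadline-ordered algorithm is $c$-competitive for any $c<n-1$ on this family.

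The main obstacle I anticipate is step (4): making the forcing argument airtight against an \emph{arbitrary} deadline-ordered algorithm, including ones that idle machines or make seemingly wasteful choices. The subtlety is that the algorithm may try to ``hedge'' by running bulk jobs early even before it is forced to; the instance must be designed (as in Lam--To) so that any such hedging is punished by a subsequent release that the adversary tailors to whatever the algorithm did — which is exactly where the equal-priority invariance is used, since it lets the adversary assign the bad true deadlines to the jobs the algorithm neglected. Getting the quantitative bookkeeping so that the surviving set has size exactly $n-1$ (rather than, say, $n/2$) will require choosing the number of trigger jobs and the deadline gaps carefully; I expect this to be the part where most of the real work lies, though the individual estimates are routine once the instance is pinned down.
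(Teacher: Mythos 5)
Your high-level starting point is right: deadline-ordered algorithms see only the relative order of deadlines, so one constructs a family of instances with the same relative order but different true deadline values, and the algorithm's fixed schedule must be feasible for all of them. You also correctly identify the Lam--To instance as the source. But your proposal is, by your own admission, a plan rather than a proof, and the plan as sketched deviates from the paper in ways that leave the crucial step unsupported.

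The concrete gap is in how the ``forcing'' actually happens. You imagine a multi-phase adversary with a block of bulk jobs plus later ``trigger'' jobs that the adversary tunes after seeing the algorithm's commitments, and you flag step (4) -- making this airtight against an arbitrary deadline-ordered algorithm -- as the hard part. The paper avoids the entire phase/trigger machinery. It releases all jobs at time~$0$ and uses a geometric scale of processing times $p_j=(m/(m-1))^{j-m}$ (with the first $m$ jobs of unit size). It then defines, for each $k\in\{1,\dots,n-m\}$, an instance $J_k$ in which jobs $1,\dots,m+k$ all have deadline $\bar d_k=(m/(m-1))^k$ and the rest have a common far deadline; the relative deadline order is the same across all $J_k$, so a deadline-ordered algorithm must output one schedule feasible for all of them. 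The telescoping identity $\sum_{j=1}^{m+k}p_j=m\bar d_k$ shows each $J_k$ is feasible on $m$ machines. The decisive observation you are missing is that in $J_k$ the job $m+k$ has $p_{m+k}=\bar d_k=d_{m+k}$, i.e.\ zero laxity, so it must run without interruption from $0$ to $\bar d_k$; since this holds for every $k$ within the one fixed schedule, each of the $n-m$ jobs $m+1,\dots,n$ occupies its own machine over a nested chain of intervals $[0,\bar d_1]\subset[0,\bar d_2]\subset\cdots$, which blocks any sharing with the remaining $m$ unit-time jobs (deadline $\bar d_1<2$ in $J_1$). That yields at least $n-1$ machines without any trigger jobs, without adaptivity, and without reasoning about symmetric/indistinguishable jobs. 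So the idea you need is not an adaptive trigger but the static zero-laxity chain produced by the geometric series; as written, your proposal does not contain that mechanism and so does not close the argument.
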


\begin{proof}
We define a collection of instances with identical job sets that all can be feasibly scheduled on~$m$ machines. The only difference in the instances are the deadlines but the relative order is again the same. Thus, a deadline-ordered algorithm cannot distinguish the instances and must produce the same schedule for all of them. We show that this property enforces that (nearly) each job must run on its own machine.

For each~$k\in\{1,2,\ldots,n-m\}$ we define the job set~$J_k$ as follows:
\begin{align*}
  &\textup{For }  1\leq j\leq m: \quad & r_j=0,\ &p_j=1,\ d_j= \left(\frac{m}{m-1}\right)^k  =: \bar{d}_k.\\
  &\textup{For }  m+1 \leq j \leq m+k: \quad &r_j=0,\ &p_j=\left(\frac{m}{m-1}\right)^{j-m},\ d_j= %\left(\frac{m}{m-1}\right)^k =: 
\bar{d}_k.\\
  &\textup{For } m+k+1 \leq j \leq n: \quad  &r_j=0,\ &p_j=\left(\frac{m}{m-1}\right)^{j-m},\ d_j= \left(\frac{m}{m-1}\right)^{n-m} =: \bar{d}.
\end{align*}

For every instance~$J_k$, $k\in\{1,2,\ldots,n-m\}$, there is a feasible solution on $m$ machines. Schedule the first $m+k$ jobs in order of their indices one after the other filling a machine up to the deadline~$\bar{d}_k$ before opening the next machine. Since every job has a processing time less than~$\bar{d}_k$, no job will run simultaneously on more than one machine. The total processing volume is 
\begin{equation*}
  \sum_{j=1}^{m+k}p_j = m+\sum_{\ell=1}^k \left(\frac{m}{m-1}\right)^{\ell} 
  = m+ \frac{\left(\frac{m}{m-1}\right)^{k+1}-1}{\left(\frac{m}{m-1}\right)-1} -1
  %= m+ (m-1)(\left(\frac{m}{m-1}\right)^{k+1}-1)-1
  %= m+ (m-1)(\left(\frac{m}{m-1}\right)^{k+1}-1)-1
  = m \left(\frac{m}{m-1}\right)^k = m \cdot \bar{d}_k,
\end{equation*}
and thus, all jobs are feasibly scheduled on $m$ machines. The remaining jobs are scheduled by the same procedure in the interval $[\bar{d}_k,\bar{d}]$. With a similar argumentation as above, this is a feasible schedule. In particular, the total processing volume is
\begin{equation*}
  \sum_{j=m+k+1}^{n}p_j = \sum_{\ell=1}^{n-m-k} \left(\frac{m}{m-1}\right)^{\ell+k} 
  = m \left( \left(\frac{m}{m-1}\right)^{n-m} - \left(\frac{m}{m-1}\right)^k \right) = m \cdot (\bar{d}-\bar{d}_k).
\end{equation*}

We now show that any fixed schedule  must use~$n-1$ machines to guarantee a feasible solution for all instances~$J_k$. With the argumentation above this implies the lower bound for any deadline-ordered algorithm. 

W.l.o.g.~we may assume that the order of job indices is the order of deadlines. We could enforce this explicitly by adding a small value, but omit it for the sake of presentation. 

Any deadline-ordered solution must complete the first $m+k$ jobs by time~$\bar{d}_k$ to be feasible for~$J_k$ for~$k\in\{1,\ldots,n-m\}$. Since in any instance~$J_k$ the largest among those jobs has processing time $p_{m+k}=\bar{d}_k$, each job with index $m+k$ must receive its exclusive machine. That means, we need~$n-m$ machines for jobs~$\{m+1,m+2,\ldots,n\}$. The remaining~$m$ jobs with unit processing time must finish in instance~$J_1$ by time~$\bar{d}_1=\frac{m}{m-1}$ which requires~$m-1$ machines. In total, a deadline-ordered algorithm needs~$n-1$ machines whereas optimal solutions with~$m$ machines exist.
\end{proof}

Notice that this very fatal lower bound is achieved already by instances in which all jobs are released at the same time.  Thus, it is not the lack of information about further job arrivals but the lack of precise deadline information that ruins the performance.

\section{Concluding Remarks}

We contribute new algorithmic results for a fundamental online resource minimization problem. We give the first constant competitive algorithms for a major subclass in which jobs have agreeable deadlines, 
and also improve on upper bounds for the general problem. It remains as a major open question if the general preemptive problem admits a constant competitive ratio. We believe that some of our techniques may be useful for the closely related online throughput maximization problem, where we relax the requirement that all jobs must be scheduled (and fix a number of machines). It would also be interesting to quantify the tradeoff between the number of machines and the guaranteed~throughput.
%For both, the offline and the online problem, there is actively research but mainly for special cases such as unit processing times.

\section*{Acknowledgment} We thank Naveen Garg for discussions on deadline-ordered algorithms and his contribution to the lower bound in Theorem~\ref{thm:deadline-ordered}. Moreover, we thank Benjamin Müller for interesting discussions at an early stage of the project. As part of his Master's thesis he contributed to Thm.~\ref{thm: black box} and part of Thm.~\ref{thm:preempt-samesize-optunknown}.

%---------------------------------------------------
\bibliographystyle{abbrv}
\bibliography{machinenum}
%---------------------------------------------------

\end{document}